\newtheorem{thm}{Theorem}
\newtheorem{defn}{Definition}
\newtheorem{cor}{Corollary}
\newtheorem{lem}{Lemma}
\newcommand*{\rom}[1]{\expandafter\@slowromancap\romannumeral #1@}
\newcommand{\comp}{\mathfrak{comp}}
\newcommand{\dom}{\mathfrak{dom}}
\newcommand{\bs}[1]{\boldsymbol{#1}}
\newcommand{\dg}{\dagger}
\newcommand{\polylog}{\text{polylog}}
\newcommand{\ceil}[1]{\left\lceil #1 \right\rceil}
\newcommand{\norm}[1]{\left \lVert #1 \right \rVert}
\newcommand{\bw}[1]{{\color{brown}\textbf{BW: #1}}}
\begin{document}

\title{Explicit block encodings of boundary value problems for many-body elliptic operators}

\author{Tyler Kharazi}
\email{kharazitd@berkeley.edu}
\affiliation{Department of Chemistry, University of California, Berkeley}%
\affiliation{Berkeley Quantum Information and Computation Center, University of California, Berkeley}

\author{Ahmad M. Alkadri}
\affiliation{Department of Chemical and Biomolecular Engineering, University of California, Berkeley}

\author{Jin-Peng Liu}
\affiliation{Yau Mathematical Sciences Center, Tsinghua University}
\affiliation{Center for Theoretical Physics, Massachusetts Institute of Technology}
\affiliation{Simons Institute and Department of Mathematics, University of California, Berkeley}

\author{Kranthi K. Mandadapu}
\affiliation{Department of Chemical and Biomolecular Engineering, University of California, Berkeley}
\affiliation{Chemical Sciences Division, Lawrence Berkeley National Laboratory, Berkeley}

\author{K. Birgitta Whaley}
\affiliation{Department of Chemistry, University of California, Berkeley}
\affiliation{Berkeley Quantum Information and Computation Center, University of California, Berkeley}

\date{\today}

\begin{abstract}
Simulation of physical systems is one of the most promising use cases of future digital quantum computers. In this work we systematically analyze the quantum circuit complexities of block encoding the discretized elliptic operators {that} arise extensively in numerical simulations for partial differential equations, {including high-dimensional instances for many-body simulations}. When restricted to rectangular domains with separable boundary conditions, we provide explicit circuits to block encode the many-body Laplacian with separable periodic, Dirichlet, Neumann, and Robin boundary conditions, using standard discretization techniques from low-order finite difference methods. To obtain high-precision, we introduce a scheme based on periodic extensions to solve Dirichlet and Neumann boundary value problems using a high-order finite difference method, with only a constant increase in total circuit depth and subnormalization factor. 
{We then present} a scheme to implement block encodings of differential operators 
{acting} on more arbitrary domains, inspired by Cartesian immersed boundary methods. We then block encode the many-body convective operator, which describes interacting particles experiencing a force generated by a pair-wise potential given as an inverse power law of the interparticle distance. This work provides concrete recipes that are readily translated into quantum circuits, with depth logarithmic in the total Hilbert space dimension, that block encode operators arising broadly in applications involving the quantum simulation of quantum and classical many-body mechanics.
\end{abstract}
\maketitle

\newpage
\tableofcontents

\newpage
\section{\label{sec:Intro} Introduction}
Elliptic operators constitute a class of differential operators that generalize the Laplace operator and arise in applications ranging from electrostatics to continuum and quantum mechanics \cite{evans2010partial}. In the context of continuum mechanics,
elliptic operators commonly arise as the spatial operators {within} a time-dependent partial differential equation (PDE), and the related steady-state problem can be solved using a variational principle \cite{teschl2009mathematical}. In quantum mechanics, elliptic operators often arise in quantum chemistry and {in} condensed matter physics as the Hamiltonian operator of the Schr{\"o}dinger equation. In classical settings, the spatial operators in the heat equation and more generally the advection-diffusion equation often correspond to an elliptic operator. Solving these problems often requires a large effort in all but the most 
simple examples, since numerical discretization methods are needed to obtain accurate approximations. Direct discretization methods such as finite differences have a cost that grows exponentially with the spatial dimension and the number of particles. If one discretizes each dimension with $N$ grid points, for an interacting many-body system containing $\eta$ particles in some $d$-dimensional domain $\Omega \subset \mathbb{R}^d$, the dimension of the vector space upon which the discretized operator acts will be of {extremely high} dimension $O(N^{\eta d})$, which becomes intractable for even modest numbers of particles. Quantum algorithms provide a promising approach, as the $N^{\eta d}$ grid points can be stored in the quantum state of $O(\eta d\log(N))$ qubits using the technique of first quantization. We also note that $N$ is not the exponentially growing factor in these simulations, as $N$ is chosen with respect to a fixed accuracy,  however, $N^{\eta d}$ grows exponentially in the number of particles $\eta$ and number of independent spatial dimensions $d$. Therefore, it is expected that quantum advantage will be most pronounced in the regime where $\eta d$ is large, since each additional particle has only a linear increase in memory on the quantum computer, whereas classically the resource requirements grow exponentially with each additional particle. 

The analysis of quantum algorithms for solving linear and nonlinear ordinary \cite{LO08,Ber14,BCOW17,childsQuantumSpectralMethods2020,Kro22,ALWZ22,costaFurtherImprovingQuantum2023} and partial \cite{CPP13,CJS13,MP16,CJO19,LMS20,highPrecision} differential equations is an area of {increasingly} active development. 
A new family {of} approaches for solving linear ordinary differential equations, including those obtained by discretizing a PDE, has been recently developed that uses ideas of ``linear combination of Hamiltonian simulations (LCHS)'' or ``Schr{\"o}dingerization" to approximate the non-unitary time evolution. The Schr{\"o}dingerization technique \cite{jinQuantumSimulationPartial2023} has been used to analyze quantum simulations for solving the Fokker-Planck equation as well as many other kinds of PDEs \cite{jinQuantumSimulationFokkerPlanck2024,jinQuantumSimulationMaxwell23,jinQuantumSimulationBoundary24,jinSchrodingerizationIllPose24}. Proposed at a similar time, \cite{LCHS1} provides an algorithm to implement the LCHS as well as {to} implement fast initial state preparation. This approach was recently improved \cite{LCHS2}, exponentially improving the algorithmic dependence on accuracy over the original LCHS algorithms: {the improvement is also} asymptotically optimal for quantum simulation of non-unitary dynamics.  Despite these advancements, there is {nevertheless} very little evidence for exponential quantum advantage for any end-to-end application in the quantum simulation of PDE's. One reason for this gap is a fundamental lack of resources which describe the basic quantum operations needed to implement the prescribed algorithm. Our goal in this work is to provide an accounting of the computational resources for the block encoding of various differential and multiplication operators. Using low level primitives such as incrementer and reflection circuits, we provide explicit circuit constructions of LCUs for the Laplacian and convective operators under different boundary conditions and estimate the number of Toffoli gates needed to implement the operations. We expect that these subroutines can be {broadly employed in quantum algorithms} for practical simulations of differential equations, quantum and classical.

Quantum algorithms using a block encoding input model will likely be the preferred approach in fault-tolerant quantum computation. {Indeed,} \cite{LCHS2}, and many other near optimal quantum algorithms, use block encoding as the input model. In most cases, the block encoding is {also} applied many times throughout the execution of the algorithm. Therefore, the cost to implement the block encoding is an important factor in determining the overall runtime of the proposed algorithm. A standard approach to implementing block encodings is linear combinations of unitaries (LCU) \cite{childsHamiltonianSimulationUsing2012}. {However,} other approaches based on sparse access models, similar to classical sparse matrix formats, have also been explored \cite{berryEfficientQuantumAlgorithms2007, Lin2022, campsExplicitQuantumCircuits2022}. Combined with quantum signal processing (QSP) \cite{lowOptimalHamiltonianSimulation2017, lowHamiltonianSimulationQubitization2019} or more generally, the quantum singular value transformation (QSVT)  \cite{gilyenQuantumSingularValue2019}, a large class of matrix transformations can be enacted on the block encoded matrix. {These include}  inversion, exponentiation (or time evolution), eigenvalue filtering , and many others\cite{gilyenQuantumSingularValue2019,linOptimalPolynomialBased2020a}. {Such} routines build up a polynomial transformation of the block encoding matrix by performing the block encoding circuit, interleaved with a rotation, a number of times directly related to the degree of the polynomial approximation to the desired function. Therefore, the efficiency of the block encoding is a critical component of the overall efficiency of the proposed quantum algorithm.

The major contribution of this work is the construction and cost analysis of efficient quantum circuits to block encode a broad family of differential operators with nearly arbitrary boundary conditions to high-precision and near-optimal gate complexity and subnormalization.
We focus {here} on the particular problem of block encoding linear operators, 
{however} through the use of the Carleman linearization \cite{liu2021a,An2022,liu2024,foretsExplicitErrorBounds2018,liPotentialQuantumAdvantage2023}, these constructions can be readily extended to represent nonlinear PDEs. In addition to LCHS, these block encodings can be used in conjunction with time-stepping schemes \cite{fangTimemarchingBasedQuantuma} as well as {with} history state schemes \cite{jinTimeComplexityAnalysis2022a} to solve time-dependent PDE's.  {The key results are summarized} in {Table} \ref{tab:circuit complexity}. {We find that} the quantum circuits implementing these block encodings can be done with as few as $O(pd\eta \log(N))$ Toffoli or simpler gates for a separable {differential} operator and {with as few as} $\widetilde{O}((p d \eta)^2 N\log^2(N))$ Toffoli or simpler gates for pairwise interactions with a radially symmetric inverse power law potential $V$, where $N$ is chosen so that for any $\epsilon > 0$, $N^{-p+1}<\epsilon$. These constructions provide an exponential speedup in memory and circuit complexity over classical representations of finite difference discretizations of differential operators. This further implies that quantum algorithms providing exponential speedups for solving a commonly encountered subset of elliptic PDEs using the block encoding framework are retained when including the circuit cost of the block encodings.

\begin{table}
    \centering
    \resizebox{\textwidth}{!}{
    \begin{tabular}{|c|c|c|c|c|c|}
    \hline
    Term & \multicolumn{4}{|c|}{Laplacian (non-interacting)}& Convective Operator\\
    \hline
    Boundary Condition& Periodic      & Dirichlet      & Neumann/Robin           & Periodic Extension   &  Periodic     \\ 
    \hline
    Toffoli complexity &$O(dp\log(N))$  & $O(d\log(N))$     &$O(d\log(N))$   &  $O(dp\log(N))$     & $O((p\eta d)^2 N \log^2(N))$   \\
    \hline
    subnormalization&$O(d)$          & $O(d)$            &$O(d)$          &  $O(d)$             & $O(\eta^2 d ||\nabla V||_{\infty})$ \\
    \hline
    $\#$ancilla     &$O(\log(dp))$   &$O(\log(d))$       &$O(\log(d))$    &  $\widetilde{O}(\log( dp))$     &$O(\log(\log(N)\eta dp))$            \\
    \hline
    accuracy        & $O(N^{-p+1})$  &$O(N^{-2})$        &$O(N^{-1})$     &  $O(N^{-p+1})$      &$O(N^{-p+1})$\\
    \hline
    \end{tabular}
    }
    \caption{
    { Features of the block encodings for the various schemes and boundary conditions employed in this work.} Here $N$ is the number of grid points in $d=1$ spatial dimensions, $p$ is the number of grid points used in the FD stencil, and $\eta$ is the number of particles. Accuracy refers to the rate at which a given discretization converges to the true solution, in terms of $N$.  {The column labels} periodic, Dirichlet, and Neumann refer to the $d-$dimensional Laplacian with the {corresponding} boundary conditions. Periodic extension {refers to} the solution of boundary value problems by periodic extensions of functions {that is introduced} in subsection \ref{subsec:irregular domains}. The {operator}  $\nabla V(x)\cdot \nabla$ is the convective operator given by a scalar pair-wise inverse power law potential $V$ summed over the $\binom{\eta}{2}$ interactions between pairs of particles. The factor of $N$ is related to the polynomial approximation rate to inverse power law potentials. The introduction of a cutoff parameter $\delta$, which we assume scales as $\delta \sim 1/N$, controls the divergence near the singular portion of $V$ and gives the associated scaling (see e.g. Lemma \ref{lem: convRate for invPow}).}
    \label{tab:circuit complexity}
\end{table}

Elliptic operators can be defined in a very general way and encode a great many examples of physical problems. {We use the following mathematical definition to clarify what is meant by an elliptic operator for the sake of this work.}
\begin{defn}[Linear elliptic operator of order $m$]
    Let $L$ be a linear differential operator of order $m$ on a domain $\Omega \subset \mathbb{R}^d$ given by
    \begin{equation*}
        Lu = \sum_{|\alpha|\leq m} a_\alpha(\mathbf{x})\partial^\alpha u \,,
    \end{equation*}
    where $\alpha = (\alpha_1,\ldots,\alpha_d)$ denotes a multi-index, $\partial^\alpha u = \partial_1^{\alpha_1}\cdots \partial_d^{\alpha_d}u$, and here $|\alpha| = \alpha_1 + \dots + \alpha_m$. Then $L$ is called \textit{elliptic} if for every $x\in \Omega$ and every $\bm{v} \neq \mathbf{0} \in \mathbb{R}^d$,
    \begin{equation*}
        \sum_{|\alpha| = m}a_\alpha(x)\bm{v}^\alpha \neq 0 \,,
    \end{equation*}
    where $\bm{v}^\alpha := v_1^{\alpha_1} \dots v_d^{\alpha_d}$.
    \label{def:linear elliptic}
\end{defn}
Of particular interest in the physical sciences are the second-order elliptic operators where $|\alpha| \leq 2$. In the most general second-order case, we may have the rank-2 tensor 
\begin{equation*}
    A(x) = \sum_{i,j=0}^{d-1}a_{ij}(x) \ket{i}\bra{j} \,,
\end{equation*}
with elements $a_{ij}(x) \in \mathcal{C}^2$, we may write the second-order elliptic operator in divergence form 
\begin{equation}
    Lu = - \nabla \cdot \left(A(x)\nabla u\right) 
    \,,
    \label{eq:div form}
\end{equation}
or equivalently in divergence-free form
\begin{equation}
    Lu = \sum_{ij}a_{ij}(x) \frac{\partial^2}{\partial x^i \partial x^j} u + \mathbf{b}(x)\cdot \nabla u 
    \,,
    \label{eq:div free}
\end{equation}
where
\begin{equation*}
    \mathbf{b}_i(x) = \sum_{j=0}^{d-1}\frac{\partial a_{ij}(x)}{\partial x^j}
    \,.
\end{equation*}
In the isotropic case where $A(x) = V(x)$ where $V: \mathbb{R}^d \rightarrow \mathbb{R}$ is a multiplication operator corresponding to scalar multiplication by $V(x)$, this simplifies further to
\begin{equation}
    Lu = V(x)\Delta + \nabla V(x) \cdot \nabla\equiv \sum_{i=0}^{d-1}\left( V(x)\frac{\partial^2}{\partial x^i \partial x^i} + \frac{\partial}{\partial x^i}V(x)\frac{\partial}{\partial x^i} \right)
    \,,
    \label{eq:scalar}
\end{equation}
where we interpret $V(x)$ as a potential function acting on all of the particle position coordinates and $\nabla V(x) \sim \mathbf{b}(x)$ the associated force vector. We focus on the ubiquitous case where $V$ is an scalar function from $\mathbb{R}^d$ to $\mathbb{R}$ and the order of the differential operator is 2, however, generalizations of the schemes we present in this case to vector or tensor functions are straightforward.

\subsection{Prior Work}
\label{subsec:prior work}
Quantum algorithms addressing the complexity of block encoding and simulating systems given by various kinds of elliptic operators {have} been an active area of research. In the work of ~\cite{highPrecision}, explicit block encodings were provided for the $d$ dimensional Laplacian with periodic boundary conditions for arbitrary order finite difference stencils. There has also been some work exploring the application of pseudospectral methods \cite{childsQuantumSpectralMethods2020}, although implementing block encodings of the obtained linear system is likely to be more challenging as the obtained matrix does not enjoy the sparse and regularly repeating structure as the finite difference case. In the work of ~\cite{costaQuantumAlgorithmSimulating2019}, {it was shown} that the Neumann and Dirichlet boundary value problems for the Laplacian on a grid can be implemented by providing an extra ``loop" on boundary nodes of the associated graph Laplacian, however, it was unclear how to implement the corresponding block encoding. In addition, there have been explicit block encodings provided for some sparse matrices such as the graph Laplacian of a d-regular graph~\cite{campsExplicitQuantumCircuits2022}, as well as schemes for the so-called ``hierarchical matrices" considered in~\cite{nguyenBlockencodingDenseFullrank2022}. In the context of real-space simulations,~\cite{kivlichanBoundingCostsQuantum2017} considers the generic costs associated with simulating real-space dynamics resulting from direct domain discretization methods under the influence of a Coulomb potential, where the dependence on the accuracy was later improved in ~\cite{childsQuantumSimulationRealspace2022}. In the context of quantum chemistry {algorithms employing} real (or dual) space grids in first quantization,~\cite{Su2021FirstQuant} provides a very detailed construction of the block encoding of the electronic Hamiltonian using a plane wave discretization. 

{The} work of ~\cite{liEfficientQuantumBlock2023a}  {considers} the task of block encoding the pseudo-differential operator (PDO) representation of an elliptic operator. Similar to pseudo spectral methods, the number of Fourier modes used to approximate the PDO, $N$, can be taken {as} $O(\log(1/\epsilon))$, 
{in contrast to} standard high-order difference methods {that} have only polynomial convergence with the number of stencil points $k$, {since} $N \sim \epsilon^{-1/k}$. 
{For example, section 6.1 of that work shows} how to block encode a variable coefficient elliptic operator of the form $I - \nabla\cdot(V({x})\nabla)$ using $O(dnr(\log(1/\epsilon) +d + n))$ elementary operations, where $r$ is the number of Fourier modes needed to approximate $V$ to the desired accuracy, and $n = \log(N) = \log\log(1/\epsilon)$ is the number of qubits, and $d$ is the spatial dimension. Therefore, when the number of Fourier modes needed to expand {a scalar potential} $V({x})$ is small, this can be very efficient. However, {in} the cases we consider {in this work}, the potential function $V$ is given as an inverse power law of the interparticle distance {and} the number of Fourier modes needed to obtain high precision can be very large, $r \sim O(N)$. Although \cite{liEfficientQuantumBlock2023a}  only considers a periodic boundary condition, replacing the quantum Fourier transform of this algorithm with the discrete cosine and discrete sine transformations should in principle be able to diagonalize the operator with homogeneous Neumann and Dirichlet conditions respectively \cite{ boydChebyshevFourierSpectral2001}. 
However, {the authors of \cite{liEfficientQuantumBlock2023a}} also showed that in the most general cases where the associated PDO symbol is not separable, this approach can have exponentially small success probability. In the {particular} case of a separable symbol, they show, similar to {the findings of the current work,} that the success probability can be $\widetilde{O}(1)$. 
{We have also recently become} aware of \cite{guseynovExplicitGateConstruction2024}, which provides explicit block encodings of Hamiltonian operators given as a linear combination of polynomials in the position and momentum operators in the context of the Schr{\"o}dingerization technique.

\subsection{Our Contributions}
\label{subsec:contrib}
In {the current} work, we provide explicit block encodings of the differential and multiplication operators needed to block encode an elliptic operator on a $d-$dimensional cubic domain with Periodic, Neumann, Dirichlet, and Robin boundary conditions. We provide circuit-level primitives for every part of this work, except for the evaluation of phase factors in block encodings that rely on QSP. We will also use controlled applications of some of the circuit constructions, but do not address their controlled implementation. In Appendix \ref{app:circs}, we provide explicit circuits for the shift and reflection operations used in our LCU, each of which can be implemented using $O(n)$ Toffoli or simpler gates to implement. We show how to efficiently modify the known block encoding of the Laplacian with periodic boundary conditions to enforce Dirichlet, Neumann, and Robin conditions. 
In $d$-dimensions if the boundary is rectangular and the forcing function is separable and satisfies consistency conditions where the boundaries meet, we show how these block encodings can be efficiently combined to block encode the differential equation and boundary value problem in $d$ spatial dimensions. 
We provide explicit gate costs, as well as numerical and asymptotic estimates for the success probability of this procedure for the case of a simple Dirichlet boundary value problem and show that our method of enforcing Dirichlet conditions contributes only a small additive constant to the subnormalization factor over that of the periodic case. 
We then show how a natural extension of the technique we used for regular boundary conditions can be used for boundary conditions imposed on the ``interior" of a computational domain, as might occur when studying flow around an obstacle, or geometries with holes in them (see 
Fig. \ref{fig:non simple domain}). 
We additionally provide a method based on periodic extensions to block encode Dirichlet and Neumann boundary conditions to high precision and furthermore, obtain the same order of accuracy on the boundary as on the interior, at least quadratically improving the global convergence rate over one-sided schemes for approximating the boundary condition. 
As an immediate extension of this approach, we show how to block encode the Laplacian on a more arbitrary domain, assuming access to an oracle that can flag grid points as belonging to the either the interior, exterior, or boundary of the desired domain. We additionally provide a software demonstration of these methods, as well as circuit constructions implemented in Pennylane\cite{bergholm2022pennylaneautomaticdifferentiationhybrid}, which can be found on Github \cite{KharaziCode}.

We then address the block encoding of the so-called convective operator, which takes the form $\nabla V(\mathbf{x}) \cdot \nabla$, for a scalar pair-wise interparticle potential function $V$, evaluated over, $\mathbf{x}$ the multi-particle position coordinates. Because the differential operators involved in the convective term are discretized in a similar manner to the Laplacian terms, our contribution will first focus on constructing the block encoding of a multi-particle scalar potential $V$ and evaluating the multi-particle gradient $\nabla V$. 
We explicitly consider the case when $V(r)$ is a Lennard-Jones potential, a commonly used potential in chemical and molecular dynamics simulations. 
{We note that} this choice of $V$ incorporates many of the salient features of other important scenarios, as it involves the evaluation of an inverse power law function of the inter-particle distance: {such interaction potentials} arise regularly in physical models. We first review the gate cost to block encode a discrete position operator, that we then can use to form a block encoding of a discrete distance function. We discuss the number of queries to this distance function based on the convergence rate of a polynomial approximation to the desired potential function $V$.  We then detail the construction of the block encoding of the gradient operator $\nabla V$, using a finite difference approach. 
Since from Def. \ref{def:linear elliptic}, every elliptic operator for a given scalar potential can always be written as a Laplacian term plus a convective term. Combining the block encoding of the scalar operators $V$ with the block encodings {that} we obtain for the differential operators {is} sufficient to implement block encodings for a very broad class of elliptic PDE's. 

\subsection{Overview}
\label{subsec:overview}

{The remainder of the paper} proceeds as follows. 
Section~\ref{sec:Background} introduces the notation we use throughout the text and briefly reviews the finite difference method and how boundary conditions are enforced. 
We also introduce some of the quantum primitives {that} we use in the text, such as linear combination of unitaries (LCU) \cite{childsHamiltonianSimulationUsing2012} and quantum signal processing (QSP) \cite{lowOptimalHamiltonianSimulation2017}.
 Section~\ref{sec:Laplacian} shows how to block encode the $d-$dimensional Laplacian with various boundary conditions occurring on a separable Cartesian domain $\sim [0,L]^d$. 
We provide examples for block encoding Dirichlet, Neumann, Robin, and mixed boundary conditions and gate complexities to implement the {corresponding} LCUs. 
In addition, we show how to use the method of periodic embedding to obtain high-order {accuracy} schemes for Dirichlet and Neumann boundary conditions in $d$ dimensions. 
We then {present} a method that can be used to block encode the Laplacian on more irregular domains.
Section~\ref{sec:GenElliptic} introduces the primitives needed to block encode the convective term $\nabla V(x) \cdot \nabla$, including the evaluation of multi-particle potential functions given by an inverse power law. 
We provide circuits to compute the squared distance between any two grid points in $d$ dimensions and then bound the polynomial degree needed to obtain an $\epsilon-$accurate approximation of $V$ in terms of a cutoff parameter.
Finally, we provide an alternative approach to uniform polynomial approximation by approximating the function with different polynomials close to and far away from the singular point in the potential that can be used to practically reduce the dependence on the cutoff in the numerical approximation of the desired function.
{Section \ref{sec:Discussion} concludes with a discussion and outlook.}

\section{\label{sec:Background} Background}
\subsection{Notation}
We will use $d$ to refer to the underlying spatial dimension of the differential operator, $\eta$ to refer to the number of particles, and $N$ the number of grid points used in each dimension. We will commonly use $p = 2a+1$ to be the number of grid points used in the stencil for the Laplacian and $2a$ the number of grid points for the first derivative. We use $r_j, \; 0\leq |j| \leq a$ to refer to the central difference coefficients for the $p$th order finite difference stencil of the second derivative, and $c_j, \;1\leq |j| \leq a$ {to refer to} the central difference coefficients for the $p$th order finite difference stencil of the first derivative. We will also generally use the convention that upper indices are used to index particles and lower indices are used to index spatial dimensions. For example, if $\mathcal{L}$ is some linear differential operator for $d=1$, then $\mathcal{L}^i_j$ refers to the action of $\mathcal{L}$ on particle $i\in[\eta]$ in dimension $j\in[d]$ where $[N] := \{0,\ldots,N-1\}$.

We will use the notation $\norm{\cdot}$ to refer to {the} standard $2$-norm when the argument is a vector and {to} the induced $2$-norm if the argument is a matrix.   We use the notation $\bs{\alpha}$ to refer to a multi-index. We will write $\alpha_i^j \in [N]$ to refer to an arbitrary grid point for particle $j$ along direction $i$, so $\bs{\alpha}\in[N^{\eta d}]$. When the meaning is obvious, we may also use the simplified indicial notation $i \in [N^{\eta d}]$ to refer to the index given by the map $\alpha_j^i \in [N]  \rightarrow  \alpha_j^iN^{d i + j} \in [N^{\eta d}]$. We use the notation $\ket{x}\ket{y}$ to refer to the tensor product of two $n$-qubit states. The notation $\ket{\bs{\alpha}}, \bs{\alpha} \in [N]^d$ should be interpreted to mean $\ket{\alpha_0}\ket{\alpha_1}\cdots\ket{\alpha_{d-1}}$, where each $\alpha_i \in [N]$. Therefore, the $\eta$ particle $d$-dimensional Hilbert space is given by the span over states of the form $\ket{\bs{\alpha}^0}\ket{\bs{\alpha}^1}\cdots\ket{\bs{\alpha}^{\eta -1}}$, where each $\ket{\bs{\alpha}^i}$ is given as a $N^d$-dimensional state vector so that the total dimension of the underlying Hilbert space is $N^{\eta d}$.

We also use the following asymptotic notation 
\begin{equation}
    \begin{aligned}
        f \in O(g) &\iff \frac{f(n)}{g(n)}\rightarrow 0,\\
        f \in \widetilde{O}(g) &\iff f \in O(g\polylog(g)),\\
    \end{aligned}
    \label{def:big O}
\end{equation}
and write $x(n) \rightarrow y$ to mean that $x(n)$ tends to $y$ as n grows large.
\subsection{Discretization Scheme}
\label{subsec:dom disc}
Finite difference methods (FDM)s are a robust technique for discretizing spatial derivative operators on Cartesian domains. We use the FDM by first discretizing the problem domain $\Omega \subset \mathbb{R}^d$ into regular grid points. If each of the $d$ spatial dimensions are discretized with $N$ grid points, then the domain $\Omega$ becomes associated with the set of grid points $x_{\bs{\alpha}} \in \mathbb{Z}_N^d$, where $\bs{\alpha}$ is a multi-index indexing all the grid points in the $N^{d}$ dimensional discretized position space and $x_{\bs{\alpha}}$ is the map from the grid index onto the physical domain.

For example, in one spatial dimension, the first derivative operator can be approximated to second order using a central-difference discretization scheme as
\begin{equation}
    \frac{d}{dx}f(x) \rightarrow \frac{f(x+h/2) - f(x-h/2)}{h/2} + O(h^2),
    \label{eq:FD 1d}
\end{equation}
where $h = 1/N$.
In addition, an arbitrary function $f$ evaluated on the induced grid results in a $N$-dimensional vector
\begin{equation}
    f(x) \rightarrow \sum_{i\in[N]} f(x_i)\ket{i}.
    \label{eq:vector fn}
\end{equation}
If we consider the second-order PDE with periodic boundary conditions
\begin{align*}
    \frac{d^2}{dx^2}u(x) &= f(x) \hspace{.2cm} 0<x<1\\
    u(0) &= u(1),
\end{align*}
discretization of the interval $[0,1]$ into $N$ grid points and using a $3$ point central-difference scheme to approximate the second derivative results in the linear system
\begin{equation}
\begin{aligned}
      \frac{1}{h^2}\begin{pmatrix}
        -2 & 1 &0& \cdots & 0 & 1\\
        1 & -2 &1& \cdots & 0 & 0\\
        \vdots &\ddots & \ddots & \ddots& & \vdots\\
        0 &\cdots & 1 &-2 &1 & 0\\
        0 &\cdots &0 & 1 & -2& 1\\
        1 &\cdots& 0& 0 & 1 & -2
    \end{pmatrix}
    \begin{pmatrix}
        u_0\\
        u_1\\
        \vdots\\
        u_{N-3}\\
        u_{N-2}\\
        u_{N-1}
    \end{pmatrix}
    &=
    \begin{pmatrix}
        f_0\\
        f_1\\
        \vdots\\
        f_{N-3}\\
        f_{N-2}\\
        f_{0}
    \end{pmatrix}\\
    L\hspace{2.5cm} \mathbf{u}\hspace{.6cm} &= \hspace{.4cm}\mathbf{f}
\end{aligned}
    \label{eq: 1d d/dx}
\end{equation}
We define $L$ to be the matrix on the left hand side of \eqref{eq: 1d d/dx} and $\mathbf{u}$, $\mathbf{f}$ are vectors corresponding to $u(x)$ and $f(x)$ respectively in the sense of \eqref{eq:vector fn}.

We can generalize {the above definitions of differential operators on lattices} to higher-order derivatives, by using additional grid points in the evaluation of the differentials. The number of stencil points is closely connected to the convergence rate of the discretization scheme, and thereby the number of grid points needed to solve a particular problem to a fixed error tolerance. The following theorem characterizes the error in implementing a $p=2a+1$ point finite difference stencil on $N$ grid points in terms of the smoothness of the solution.
\begin{lem}{[Error in 2a+1 point stencil on $N$ grid point Laplacian (\cite{kivlichanBoundingCostsQuantum2017} Theorem 7)]}
    Let $\psi(x) \in \mathcal{C}^{2a+1}$, the space of functions with continuous derivatives up to order $2a+1$, be a solution to the PDE, then the error in the (2a+1)-point centered difference formula for the second derivative of $\psi(x)$ evaluated on a uniform mesh with spacing $h = 1/N$ is at most
    \begin{equation}
        \frac{\pi^{3/2}}{9}e^{2a(1-\ln(2))}h^{2a-1}\max_x\left|\frac{\partial\psi(x)}{\partial x^{2a+1}}\right|.
    \end{equation}
    \label{thm:fd conv stencil}
\end{lem}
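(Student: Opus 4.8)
The plan is to derive the error bound from the Lagrange/Taylor remainder of the finite difference interpolation together with explicit estimates on the finite difference coefficients $r_j$. First I would recall that the $(2a+1)$-point centered stencil for the second derivative is exact on polynomials of degree up to $2a$, so its error on a $\mathcal{C}^{2a+1}$ function is controlled by the $(2a+1)$-st derivative. Writing $\psi''(x) - \frac{1}{h^2}\sum_{|j|\le a} r_j \psi(x+jh)$ and expanding each $\psi(x+jh)$ in a Taylor series with integral remainder, all terms through order $2a$ cancel by the defining moment conditions $\sum_j r_j j^k = 2\,\delta_{k,2}\,k!/2$ (appropriately normalized), leaving a remainder of the form $\frac{h^{2a-1}}{(2a+1)!}\sum_{|j|\le a} r_j j^{2a+1}\,\psi^{(2a+1)}(\xi_j)$ for intermediate points $\xi_j$. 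Bounding this by $\frac{h^{2a-1}}{(2a+1)!}\left(\sum_{|j|\le a}|r_j|\,|j|^{2a+1}\right)\max_x|\psi^{(2a+1)}(x)|$ reduces the problem to estimating the coefficient sum.

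The key step is then to obtain a closed-form expression for $r_j$. For the central difference approximation of the second derivative one has the well-known formula $r_j = \frac{2(-1)^{j+1}(a!)^2}{j^2\,(a-j)!(a+j)!}$ for $1\le |j|\le a$ (and $r_0 = -2\sum_{j\ge 1} r_j$). I would substitute this into $\sum_{|j|\le a}|r_j|\,|j|^{2a+1} = 2\sum_{j=1}^{a} \frac{2(a!)^2}{j^2(a-j)!(a+j)!}\,j^{2a+1} = 4(a!)^2\sum_{j=1}^{a}\frac{j^{2a-1}}{(a-j)!(a+j)!}$ and bound the sum, dominated by the $j=a$ term, giving something like $4(a!)^2\cdot\frac{a^{2a-1}}{0!\,(2a)!} = \frac{4(a!)^2 a^{2a-1}}{(2a)!}$ up to lower-order corrections. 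Combined with the $1/(2a+1)!$ from the Taylor remainder, one gets a prefactor of roughly $\frac{4(a!)^2 a^{2a-1}}{(2a)!(2a+1)!}$, which I would then simplify asymptotically using Stirling's formula $n! \sim \sqrt{2\pi n}\,(n/e)^n$ to pull out the stated $e^{2a(1-\ln 2)} = (e/2)^{2a}$ growth and the $\pi^{3/2}/9$-type constant. The factor $h^{2a-1}$ is exactly the leading power surviving from the remainder, consistent with the order-$2a$ accuracy of the stencil.

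The main obstacle I anticipate is making the constant tracking rigorous rather than merely asymptotic: the statement asserts a clean non-asymptotic bound with the explicit constant $\pi^{3/2}/9$, so I would need careful two-sided Stirling bounds (e.g. $\sqrt{2\pi n}\,(n/e)^n \le n! \le e\sqrt{n}\,(n/e)^n$) and a uniform bound on the tail of $\sum_{j=1}^{a}\frac{j^{2a-1}}{(a-j)!(a+j)!}$ relative to its last term, valid for all $a\ge 1$. An alternative, and probably the cleaner route that I suspect the original reference takes, is to bound the coefficient sum via a contour/generating-function identity: since $\sum_{|j|\le a} r_j e^{ij\theta}$ equals $\frac{-4\sin^2(\theta/2)}{h^2}$ times a Fejér-type kernel, one can extract $\sum |r_j||j|^{2a+1}$ or the relevant moment as a derivative of a product $\prod_{k=1}^{a}(something)$ and bound it directly, avoiding piecewise tail estimates entirely. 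Since this lemma is quoted verbatim from \cite{kivlichanBoundingCostsQuantum2017}, I would cite their argument for the constant and only reproduce the structural Taylor-remainder-plus-coefficient-bound skeleton here.
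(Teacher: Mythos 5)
The paper does not prove this lemma: it is imported verbatim, statement and constant included, as Theorem~7 of \cite{kivlichanBoundingCostsQuantum2017}, so there is no in-paper argument to compare against. Your structural skeleton --- Taylor expansion with $(2a+1)$-st order remainder, cancellation of lower orders via the moment conditions on the $r_j$, and reduction to bounding $\frac{h^{2a-1}}{(2a+1)!}\sum_{|j|\le a}|r_j|\,|j|^{2a+1}$ --- is the standard and correct route, and your decision to defer the constant to the cited reference is reasonable.

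However, the quantitative plan for the coefficient sum has a concrete flaw. The sum $\sum_{j=1}^{a}\frac{j^{2a-1}}{(a-j)!(a+j)!}$ is \emph{not} dominated by its $j=a$ term: writing $j=a-m$, the ratio of the $m$-th term to the $m=0$ term is $(1-m/a)^{2a-1}\binom{2a}{m}\approx e^{-2m}\binom{2a}{m}$, which exceeds $1$ already at $m=1$ for $a\ge 2$, and the maximal term sits at an interior $j$ once $a\ge 4$. If you push your last-term-only estimate $\frac{4(a!)^2a^{2a-1}}{(2a)!\,(2a+1)!}$ through Stirling you obtain a rate $\sim\frac{2}{a(2a+1)}(e/4)^{2a}$, not the stated $(e/2)^{2a}=e^{2a(1-\ln 2)}$ --- off by a factor of $2^{2a}$ --- so the ``uniform tail bound relative to the last term'' you propose cannot exist in the form you envision. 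The clean way to land on the stated constant is cruder, not sharper: bound $|j|^{2a+1}\le a^{2a+1}$, use the Basel-problem bound $\sum_{j}|r_j|\le \frac{2\pi^2}{3}$ (the same Lemma~6 of \cite{kivlichanBoundingCostsQuantum2017} that the paper invokes in Theorem~\ref{thm:PeriodicLaplacian}), and apply a Stirling lower bound to $(2a+1)!$; the combination $\frac{2\pi^2}{3}\cdot\frac{1}{\sqrt{2\pi(2a+1)}}$ is where the $\pi^{3/2}$ comes from and $\bigl(\tfrac{a\,e}{2a+1}\bigr)^{2a+1}\le\text{const}\cdot(e/2)^{2a}$ is where the $e^{2a(1-\ln 2)}$ comes from. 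Your alternative generating-function route is not needed and, as sketched, would again be aimed at evaluating the sum exactly rather than at reproducing this particular (deliberately loose) bound.
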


Computing finite difference stencil coefficients for regularly spaced grid points is fairly routine and an analytic form of the coefficients are known to arbitrary order; a tabulated resource can even be found on Wikipedia \cite{FiniteDifferenceCoefficient2023}. Research on the computation of finite difference coefficients in more arbitrary settings can be found, e.g., in Ref. \cite{liGeneralExplicitDifference2005}, and in general can be obtained via the theory of interpolating polynomials. For the second derivative with a $p$-point central difference scheme, the coefficients are given by
\begin{equation}
    r_j \equiv \begin{cases}
    \frac{2(-1)^{j+1}(k!)^2}{j^2(a-j)!(a+j)!}& j \in \{1,\ldots,a\}\\
    -2\sum_{j=1}^a r_j & j = 0\\
    r_{-j} & j \in \{-a, \ldots, -1\}.
    \end{cases}
    \label{eq:FD coeffs lap}
\end{equation}
Similarly, for the first derivative we have
\begin{equation}
    c_j \equiv \begin{cases}
    \frac{(-1)^{j+1}((2a)!)^2}{j(2a-j)!(2a+j)!} & 1\leq |j| \leq a\\
    0 & j =0.
    \end{cases}
    \label{eq:FD coeffs d/dx}
\end{equation}

In addition to the promise on the accuracy of high-order finite difference stencils given the smoothness of the solution, we are also given a promise on the smoothness of the solution provided a smooth forcing function $f$ and potential function $V$. This is characterized by the \textit{elliptic regularity theorem}\cite{evans2010partial} for linear, second-order elliptic PDEs.
\begin{lem}{[Elliptic Regularity Theorem, Ref. \cite{evans2010partial} Section 6.3 Theorem 5]}
    Let $m$ be a nonnegative integer and assume $\mathbf{b}(x) \in C^{m+1}(\Omega)$ and $f \in H^m(\Omega)$, the Sobolev space of function satisfying $\norm{f}_{H^m(\Omega)} < \infty$ where $\norm{\cdot}_{H^m(\Omega)}$,
    \begin{align}
        \|u\|_{H^m(\Omega)} := \left(\sum_{|\alpha| \le m} \|D^\alpha u \|_{L^2(\Omega)}^2\right)^{1/2} 
        \label{eq:Sobolev-norm}
    \end{align}
    and
    \begin{align}
        D^\alpha &:= \frac{\partial^{|\alpha|}}{\partial x_1^{\alpha_1} \dots \partial x_1^{\alpha_d}}  
        \quad , \quad 
        |\alpha| = \alpha_1 + \dots + \alpha_d \,.
        \label{eq:D-alpha}
    \end{align}
     Now, suppose $u$ solves the second order linear boundary value problem
    \begin{equation}
    \begin{aligned}
        (\Delta + \mathbf{b}(x)\cdot \nabla)u &= f, \hspace{.2cm} \text{in } \Omega\\
        u &= 0 \hspace{.2cm} \text{in } \partial\Omega,
    \end{aligned}
    \end{equation}
    where the boundary $\partial \Omega \in C^{m+2}$. Then, the solution $u \in H^{m+2}(\Omega)$, and we have the estimate that 
    \begin{equation}
        \norm{u}_{H^{m+2}(\Omega)} \leq C \left(\norm{f}_{H^{m}(\Omega)} + \norm{u}_{L^2(\Omega)}\right)
    \end{equation}
    for some constant $C$ depending only on $m,$ $\Omega$, and $\mathbf{b}$.
    \label{lem:ellip regularity}
\end{lem}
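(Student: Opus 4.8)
The plan is to prove this by the classical difference-quotient method of Nirenberg, exactly as in the cited reference, organised as an induction on $m$. For the base case $m=0$ I would establish two local a priori estimates — an interior one and a boundary one — and then patch them with a partition of unity subordinate to a finite cover of $\overline{\Omega}$. The workhorse throughout is the difference-quotient characterisation of $H^1$: if $v \in L^2$ on an open set and the translates satisfy $\sup_{0<|h|}\|(v(\cdot + h e_k) - v)/h\|_{L^2(V)} < \infty$ for every $V \Subset \Omega$, then $\partial_k v \in L^2(V)$ with comparable norm. Hence, to prove $u \in H^2_{\mathrm{loc}}$ it is enough to bound difference quotients of $\nabla u$ uniformly in $h$.

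For the interior estimate I would fix $V \Subset W \Subset \Omega$ and a cutoff $\zeta \in C_c^\infty(W)$ with $\zeta \equiv 1$ on $V$, and test the weak form of $(\Delta + \mathbf{b}\cdot\nabla)u = f$ against $-D_k^{-h}(\zeta^2 D_k^h u)$, where $D_k^h$ denotes the $k$-th difference quotient at scale $h$. Transferring one difference quotient by summation by parts and expanding with the discrete product rule isolates a good term $\int \zeta^2 |D_k^h \nabla u|^2$, while the remaining terms are estimated by Cauchy--Schwarz and Young's inequality in terms of $\|\nabla u\|_{L^2(W)}$, $\|u\|_{L^2(W)}$, $\|f\|_{L^2(W)}$ and $\|\mathbf{b}\|_{C^1(W)}$; choosing the Young parameters small absorbs the bad terms. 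This gives a bound on $\|D_k^h \nabla u\|_{L^2(V)}$ uniform in $h$ and $k$, hence $u \in H^2(V)$ and $\|u\|_{H^2(V)} \le C(\|f\|_{L^2(\Omega)} + \|u\|_{L^2(\Omega)})$; substituting $\nabla^2 u \in L^2$ back into the PDE also pins down $\Delta u$ a.e.

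For the boundary estimate I would cover $\partial\Omega$ by finitely many balls and, in each, use a $C^{m+2}$ diffeomorphism that flattens $\partial\Omega$ into a piece of $\{x_d = 0\}$; the transformed operator is again second-order, uniformly elliptic, with principal coefficients that are $C^{m+1}$ and lower-order coefficients built from $\mathbf{b}$ and derivatives of the map. On the half-ball I would run the same difference-quotient computation, but only in the tangential directions $k = 1,\dots,d-1$, for which $\zeta^2 D_k^h u$ still vanishes on the flat boundary; this controls all second derivatives $\partial_j\partial_k u$ with $\min(j,k)\le d-1$. The purely normal derivative $\partial_{dd}u$ is then recovered algebraically from the equation, $a^{dd}\partial_{dd}u = f - (\text{terms already in }L^2)$, using that uniform ellipticity forces $a^{dd}$ to be bounded away from $0$. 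Mapping back and adding the local bounds over the partition of unity yields $u \in H^2(\Omega)$ with $\|u\|_{H^2(\Omega)} \le C(\|f\|_{L^2(\Omega)} + \|u\|_{L^2(\Omega)})$.

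The step for general $m$ I would handle by induction: assuming the theorem for $m-1$, we already know $u \in H^{m+1}(\Omega)$, and for each first-order $D^\alpha$ the function $v := D^\alpha u$ weakly solves $(\Delta + \mathbf{b}\cdot\nabla)v = D^\alpha f - (D^\alpha\mathbf{b})\cdot\nabla u$, whose right-hand side lies in $H^{m-1}(\Omega)$ because $f \in H^m$, $\mathbf{b}\in C^{m+1}$, and $u \in H^{m+1}$; near the boundary one differentiates only tangentially and again recovers the normal derivatives from the equation. The induction hypothesis then gives $v \in H^{m+1}$, i.e. $u \in H^{m+2}$, and assembling the constants (which depend only on $m$, $\Omega$, and $\mathbf{b}$) yields the stated estimate. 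I expect the boundary estimate to be the real obstacle: one must check that boundary flattening preserves uniform ellipticity at the cost of only $C^{m+1}$ coefficients, track how the tangential difference quotients interact with the non-constant transformed principal part, and verify that the algebraic recovery of the missing normal derivative iterates cleanly through the induction without any loss of derivatives.
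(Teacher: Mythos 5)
The paper does not prove this lemma; it is quoted verbatim as a known result with a citation to Evans, Section 6.3, Theorem 5. Your outline is precisely the Nirenberg difference-quotient argument used there (interior estimate, boundary flattening with tangential quotients and algebraic recovery of the normal derivative, then induction on $m$), so it is correct and takes essentially the same approach as the cited source.
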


Given a smooth $\mathbf{b} = \nabla V$ and well-behaved right hand side forcing function $f$ over a domain with smooth boundary, we are guaranteed a solution to the PDE that is also smooth, so that the estimate given in Lemma \ref{thm:fd conv stencil} provides a useful bound on the norm of the largest derivative of the solution. Using high-precision schemes on a quantum computer may seem counter intuitive at first, as we can represent exponentially many grid points with a linear number of qubits. However, in many cases the underlying differential operator has a norm or condition number dependence that is polynomial in the number of grid points $N$. Therefore, simply using exponentially many grid points is not an effective strategy as the cost to perform a linear solve or time-stepping scheme will go as a polynomial of $N$, therefore it is unlikely for quantum computers to achieve an exponential speedup with respect to the number of grid points in realistic applications. By contrast, a high-precision scheme only increases the total norm and conditioning by a constant, while exponentially decreasing the total number of grid points needed to approximate the operator to a given accuracy. Therefore, it is useful to consider the higher-order schemes for approximating the differential operator with finite differences on a quantum computer. When considering high-dimensional applications, the advantages of using a high-precision scheme on a quantum computer are more pronounced.

In many spatial dimensions and for multiple particles, the operators can be constructed using a one-local tensor product structure. That is, the differential operator in $d$ spatial dimensions (with each dimension satisfying the same boundary conditions and number of grid points) can be represented as a direct sum over $d$ copies of the $d=1$ case
\begin{equation}
    \mathcal{D} = \sum_{i\in[d]} \mathcal{D}_i,
\end{equation}
where
\begin{equation}
    \mathcal{D}_i =  I_N^{\otimes i}\otimes D \otimes I_{N}^{\otimes (d - i - 1)}.
\end{equation}
Although the dimension of each matrix is $O(N^d)$, the total matrix norm scales linearly with $d$ through a straightforward application of the triangle inequality applied to the above relations.
This construction is readily generalized to the case where the dimensions have different boundary conditions and {different} numbers of nodes, {as} long as the boundary conditions are consistent and can be expressed in terms of each coordinate {in a separable manner.} 
We similarly define the second derivative scalar operator in $d$ dimensions by \begin{equation}
    \mathcal{L} = \sum_{i\in[d]} \mathcal{L}_i,
\end{equation}
where
\begin{equation}
    \mathcal{L}_i = I_N^{\otimes i}\otimes L \otimes I_{N}^{\otimes (d - i - 1)}.
\end{equation}

In the {situation} of many particles, $\eta$, occupying positions in $d$-dimensions, the finite difference operators satisfy a similar generalization as the continuum {many-particle} differential operators. In this case, each particle is associated with an $N^{d}$-dimensional subspace of the $N^{d\eta}$-dimensional Hilbert space, where the particle-localized differential operators for particle $j$ and direction $i$ satisfy
\begin{equation}
    \begin{aligned}
    \mathcal{D}_i^j &= I_{N^d}^{\otimes j}\otimes \mathcal{D}_i \otimes I_{N^d}^{\otimes (\eta - j-1)}\\
    \mathcal{L}_i^j &= I_{N^d}^{\otimes j}\otimes \mathcal{L}_i \otimes I_{N^d}^{\otimes (\eta - j-1)}.
    \end{aligned}
    \label{eq:eta-d-dim mats}
\end{equation}
Therefore, the block diagonal matrix encoding all of the first and second derivatives can be expressed {as}
\begin{equation}
    \begin{aligned}
    \bs{D} &= \sum_{j\in[\eta]}\sum_{i\in [d]}\mathcal{D}_i^j \otimes \ket{ij}\bra{ij}\\
    \bs{L} &= \sum_{j\in[\eta]}\sum_{i\in [d]}\mathcal{L}_i^j \otimes \ket{ij}\bra{ij},
    \end{aligned}
    \label{eq:total deriv mats}
\end{equation}
and the associated scalar operators can be obtained by computing the partial trace over the indices $i$ and $j$.

\subsection{Encoding into quantum states}
\label{subsub:quantum encoding}
We consider an encoding where each basis state is associated with a particular location on the lattice. In each dimension we have that the $N$ grid points are associated with $n \equiv \ceil{\log(N)}$ qubits. Therefore in $d$ dimensions, we need $d n$ qubits to represent the state space of a single particle, and $\eta d n$ qubits to represent $\eta$ particles in $d$ dimensions. As an example, we can define a single particle position basis state in $d$-dimensions as the tensor product
\begin{equation}
    \ket{\bs{\alpha}^0} = \ket{\alpha^0_0}\ket{\alpha^0_1}\cdots\ket{\alpha^0_{d-1}},
\end{equation}
where $\alpha^0_i \in [N]$ and $i \in [d]$.
The elements in this tensor product space are evaluated corresponding to the indexing relations
\begin{equation}
    \bs{\alpha}^0 \in [N]^d \rightarrow  \sum_{i=0}^{d-1}\alpha^0_i N^i \in [N^d],
\end{equation}
where $[N]^d$ is the set of $d$-tuples with entries in $[N]$.

Similarly, for $\eta$-particles in $d$-dimensions, we represent the {state} space as
\begin{equation}
    \ket{\bs{\alpha}} = \ket{\bs{\alpha}^0}\ket{\bs{\alpha}^1}\cdots\ket{\bs{\alpha}^{\eta-1}},
\end{equation}
where $\bs{\alpha}^i \in [N]^{d}$ and $i \in [\eta]$. We treat $\ket{\bs{\alpha}}$ as an arbitrary basis state in our many-body Hilbert space.

Furthermore, we can map any function $f:\mathbb{R}^{\eta d}\rightarrow \mathbb{R}$ to a corresponding state on the lattice. Therefore upon discretization, the function $f$ obtains the {following} representation as a vector
\begin{equation}
    \ket{f} = \sum_{\bs{\alpha}} f({\bs{\alpha}})\ket{\bs{\alpha}},
\end{equation}
which would need to be suitably $l^2$-normalized in order to correspond to a valid quantum state.
Similarly, a function $f:\mathbb{R}^{\eta d}\rightarrow \mathbb{R}$ corresponding to the operation of scalar multiplication is implemented as the diagonal matrix
\begin{equation}
    \mathbf{F} = \sum_{\bs{\alpha}}f(\bs{\alpha})\ket{\bs{\alpha}}\bra{\bs{\alpha}},
\end{equation}
which to be implemented as a block encoding will need to be normalized by the factor $\max_{\bs{\alpha}} |f(\bs{\alpha})|$, to ensure $\norm{\mathbf{F}}_1 \leq 1$.

\subsection{Discrete boundary value problems}
\label{subsec:bkgrnd bvp}
We focus on the three major kinds of boundary conditions encountered {in physical problems, namely} Dirichlet, Neumann, and Robin conditions. The Dirichlet condition specifies the value of the solution at particular subsets of the domain, the Neumann condition specifies the value of the derivative of the solution at particular subsets of the domain, and the Robin condition specifies a weighted linear combination of the value and the derivative of the function. Using a  discretization of $[0,1]$ into $N=1/h$ grid points we can express the linear system corresponding to the Dirichlet boundary value problem as {follows}. For simply connected domains where the boundary conditions are specified on the end-points, we can ``fold in" the boundary value problem into the interior to obtain a higher-order accuracy scheme. Letting $a$ and $b$ be arbitrary real-valued constants describing the boundary values, and applying this approach to the Dirichlet boundary value problem can be expressed as the symmetric linear system
\begin{equation}
\frac{1}{h^2}
   \begin{pmatrix}
        -2 & 1 & 0 &\cdots & 0\\
        1 & -2 & 1&\cdots & 0\\
        \vdots& \ddots &\ddots &\ddots &\vdots\\
        0&\cdots & 1 &-2 &1\\
        0&0 &\cdots & 1 & -2
    \end{pmatrix}
    \begin{pmatrix}
        u_0\\
        u_1\\
        \vdots\\
        u_{N-2}\\
        u_{N-1}
    \end{pmatrix}
    =
        \begin{pmatrix}
        f_0 - \frac{a}{h^2}\\
        f_1\\
        \vdots\\
        f_{N-2}\\
        f_{N-1} - \frac{b}{h^2}
    \end{pmatrix}.
\label{eq:dir mat}
\end{equation}
Using $a$ and $b$ to also refer to the boundary values for the Neumann condition, i.e. $u'(0) = -a$ (where the minus sign reflects the convention of outward facing normal vectors) and $u'(1) = b$, we extend the system by two grid points on either end and we obtain the system
\begin{equation}
    \frac{1}{h^2}
    \begin{pmatrix}
        1 & -1 & 0 &\cdots & 0\\
        -1 & 2 & -1&\cdots & 0\\
        \vdots& \ddots &\ddots &\ddots &\vdots\\
        0&\cdots & -1 &2 &-1\\
        0&0 &\cdots & -1 & 1
    \end{pmatrix}
        \begin{pmatrix}
        u_{-1}\\
        u_0\\
        \vdots\\
        u_{N-1}\\
        u_{N}
    \end{pmatrix}
=
\begin{pmatrix}
    \frac{a}{h}\\
    -f_0\\
    \vdots\\
    -f_{N-1}\\
    -\frac{b}{h}
\end{pmatrix},
\label{eq:neum mat}
\end{equation}
and, more generally, a Robin condition of the form $au(0) + bu'(0) = -\alpha$ and $cu(L) + d u'(L) = \beta$ with $\alpha, \beta \in \mathbb{R}$ can be expressed {by}
\begin{equation}
\frac{1}{h^2}
    \begin{pmatrix}
        1 - ah/b & -1 & 0 &\cdots & 0\\
        -1 & 2 & -1&\cdots & 0\\
        \vdots& \ddots &\ddots &\ddots &\vdots\\
        0&\cdots & -1 &2 &-1\\
        0&0 &\cdots & -1 & 1 + ch/d
    \end{pmatrix}
    \begin{pmatrix}
        u_{-1}\\
        u_0\\
        \vdots\\
        u_{N-1}\\
        u_{N}
    \end{pmatrix}
=
\begin{pmatrix}
    \frac{\alpha}{b}\\
    -f_0\\
    \vdots\\
    -f_{N-1}\\
    -\frac{\alpha}{d}
\end{pmatrix}.
\label{eq:rob mat}
\end{equation}

The extension to so-called ``mixed" boundary conditions, where different types of boundary conditions are specified on independent subsets of the domain is straightforward. In $d=1$ for example, we could enforce a mixed boundary value problem with Dirichlet condition on the left-most endpoint and Neumann condition on the right-most endpoint by adjusting the periodic matrix to match the Dirichlet matrix in the first row and the Neumann matrix in the last row. However, one difficulty of applying boundary conditions using the above approach is that it is unclear how to approximate them to high-precision, and so the above schemes do not readily generalize to high-precision case as easily as the periodic case. We address this concern in more detail in sec \ref{subsec:periodic extension} when we consider high-order schemes for Dirichlet and Neumann boundary value problems.

Since we assume that the boundary conditions can be described in each dimension independently, the corresponding discretization can be constructed through the local tensor product structure we discuss in \ref{subsec:dom disc}, where each $1d$ matrix encodes a boundary value problem similar to those given above. That is, for each dimension we can construct {an operator taking the form of one of the matrices in ~\eqref{eq:dir mat}-\eqref{eq:rob mat} including} the specified boundary conditions,
{then} we can combine {the operators from each dimension} with a direct sum to represent the entire system matrix with the desired boundary conditions {overall}.  We focus on the particular case of block encoding the matrix operator in $d$ dimensions subjected to various boundary conditions and assume access to an efficient quantum circuit {that} prepares normalized versions of the right-hand-side vectors {in~\eqref{eq:dir mat} -\eqref{eq:rob mat}.}

\subsection{Block encoding and LCU implementation}
Block encoding is an access model for matrix operations on a quantum computer. A block encoding is characterized by 3 parameters, {namely,} the subnormalization factor $\lambda$, the number of ancilla qubits $q$, and the accuracy of the block encoding $\epsilon$.
\begin{defn}[$(\lambda, q, \epsilon)$ Block Encoding]
    For an $n$ qubit matrix $A$, we say that the $n+q$ qubit unitary matrix $U_A$ is {a} $(\lambda, q, \epsilon)$ block encoding of $A$ if
    \begin{equation}
        ||A - \lambda (\bra{0}^{\otimes q}\otimes I_n) U_A (\ket{0}^{\otimes q}\otimes I_n)|| \leq \epsilon \,.
    \end{equation}
    If we can implement the above unitary $U_A$ exactly, then we call it {a} $(\lambda, q)-BE(A)$ block encoding of $A$.
    \label{def: Block Encoding}
\end{defn}
In this work, we will assume that the block encoding of the matrix is obtained exactly, as the only place where an error can be introduced is in the preparation of a quantum state implementing the finite difference coefficients, which is essentially linear in the number of coefficients and subdominant. \textit{Unless otherwise noted, we will abuse the standard notation used above and take $\epsilon$ to be the global truncation error of the finite difference scheme.} 

 We will use the method of linear combination of unitaries (LCU)~\cite{childsHamiltonianSimulationUsing2012} to implement the desired block encodings. LCU begins by assuming that {one has} a matrix operation $A$ that has a decomposition into a sum of $k$ unitary matrices:
\begin{equation}
    A = \sum_{l=0}^{k-1}c_l U_l.
\end{equation}
The goal is to embed the matrix operation given by $A$ into a subspace of a larger matrix $U_A$ which acts as a unitary on the system and ancilla registers. Because of the restriction to unitary dynamics, we require that $||A|| \leq 1$. Therefore we introduce a rescaling factor $\lambda$ that upper bounds the $1-$norm of $A$, which we call the \textit{subnormalization} factor so that $||A/\lambda|| \leq 1$. The block encoding $U_A$ is expressed in matrix form {as}
\begin{equation}
    U_A = \begin{pmatrix}
        A/\lambda & *\\
        * & *
    \end{pmatrix}.
\end{equation}
With LCU, it is straightforward to obtain an upper bound on $||A||$, {since} $\lambda = \sum_{l=0}^{k-1}|c_l|$. Then, scaling the LCU by $1/\lambda$,
\begin{equation}
    \frac{A}{\lambda}= \sum_{l=0}^{k-1}\frac{c_l}{\lambda}U_l \equiv \sum_{l=0}^{k-1}\beta_l U_l,
\end{equation}
we have a matrix that can be block encoded using LCU.

The unitary description of LCU is as follows. We use an ancilla register of $q = \ceil{\log(k)}$ qubits and define the oracle which ``prepares" the LCU coefficients {by}
\begin{equation}
    \textsc{prep}\ket{0}^{\otimes q} \rightarrow \sum_{l=0}^{k-1}\sqrt{\beta_l}\ket{l},
\end{equation}
which {defines} a normalized quantum state and corresponds to the first column of a unitary matrix. Next, we implement the controlled-application of the desired unitaries using the ``select" routine
\begin{equation}
    \textsc{sel}:= \sum_{l=0}^{k-1}\ket{l}\bra{l}\otimes U_l.
\end{equation}
{The block encoding can then} be expressed as the $n+q$ qubit unitary
\begin{equation}    
    U_A := \left(\textsc{prep}^\dagger\otimes I_n\right)\cdot \textsc{sel} \cdot \left(\textsc{prep} \otimes I_n\right).
\end{equation}
The circuit diagram corresponding to this {unitary} transformation is shown in Fig. \ref{fig:LCU}. 

\begin{figure}[ht]
    \centering
    \begin{tikzpicture}
    \hspace{-.3cm}
\node[scale=.98]{
    \begin{quantikz}
        \lstick{$\ket{0}^{\otimes q}$}& \gate{\textsc{prep}}
        \gategroup[2,steps=3,style={dashed,rounded
        corners,fill=blue!20, inner
        xsep=2pt},background,label style={label
        position=below,anchor=north,yshift=-0.2cm}]{{$(\lambda, q)-BE(A)$}}
        &\qw \mathlarger{\mathlarger{\mathlarger{\mathlarger{\oslash}}}} \vqw{1} &\gate{\textsc{prep}^\dagger} &\qw\rstick{$\bra{0}^{\otimes q}$}\\
        \lstick{$\ket{\psi}$ }& \qw  &\gate{U_l} &\qw&\qw\rstick{$\frac{A}{\lambda}\ket{\psi}$} 
    \end{quantikz}
    };
    \end{tikzpicture}
    \caption{Circuit for linear combination of unitaries {(LCU)}. $\textsc{prep}$ prepares the LCU coefficients $\beta_l$ into a quantum state $\sum_{l=0}^{k-1} \sqrt{\beta_l}\ket{l}_q$, and the $\textsc{sel}$ operator applies the unitaries in the LCU in a controlled manner on the state of the ancilla register. The $\oslash$ notation is to be interpreted as a control on all states in the ancilla register. After uncomputing the ancilla register with $\textsc{prep}^\dagger$, the block encoded operator is guaranteed to be applied to the system register upon measuring the ancillary system of $q = \ceil{\log(k)}$ qubits being measured in the all-zero state, where the notation $\bra{0}^{\otimes q}$ at the end of the ancilla wire refers to postselection on the outcome of the ancilla register.}
    \label{fig:LCU}
\end{figure}

\subsection{Shift operators}
\label{subsec:shift operators}
An important class of unitary matrices that we will use regularly in this work are the \textit{shift operators}. The shift operators are well studied \cite{rieffel2014quantum,campsExplicitQuantumCircuits2022,barencoElementaryGatesQuantum1995}, and we give a construction of the incrementer circuit in appendix \ref{app:circs} for those unfamiliar. 
The unitary shift operator $S \in \mathbb{R}^N$ performs a modular increment of the input state:
\begin{equation}
    S:\ket{i} \mapsto \ket{(i+1)\mod N}.
    \label{eq: Right Shift}
\end{equation}
and for an $n$ qubit state $\ket{i}$, the increment by $j$ operator satisfies
\begin{equation}
    S^{j}:\ket{i}_n \mapsto\ket{(i+j)\mod 2^n}_n; \hspace{.2cm} j \in [2^n].
\end{equation}
These operators are 1-sparse and their matrix representations correspond to periodic bands about the diagonal. As an example, the modular increment operator $S^1$ corresponds to the matrix
\begin{equation}
    S^1 = \begin{pmatrix}
        0 & 0 & \cdots&0 & 1\\
        1 & 0 & \cdots&0 & 0\\
        0 & 1 & \cdots&0 & 0\\
        \vdots &   &\ddots & & \vdots\\
        0 & 0 &\cdots &1 & 0
    \end{pmatrix}.
\end{equation}
In Appendix \ref{app:circs}, we show that these operators can be implemented using $O(n)$ multi-controlled Toffolis, which can be expressed as a quantum circuit using $O(n^2)$ Toffoli gates \cite{barencoElementaryGatesQuantum1995}. Using an additional ancilla qubit, this cost can further be reduced to $O(n)$ using the techniques given in \cite{ConstructingLargeIncrement,gidneyHalvingCostQuantum2018}, which is the scaling we will use in this paper. As the ancilla qubit can be reused for each shift operator we employ in our block encodings, we will ignore this additional global ancilla qubit in our cost estimates to simplify the discussion.

\subsection{Quantum Signal Processing}
Quantum signal processing (QSP) and its variants \cite{lowHamiltonianSimulationQubitization2019,lowOptimalHamiltonianSimulation2017,gilyenQuantumSingularValue2019, motlaghGeneralizedQuantumSignal2023} provides a systematic procedure for implementing a class of polynomial transformations to block encoded matrices. 
For Hermitian matrices $A$, the action of a scalar function $f$ can be determined by the eigendecomposition of $A = UDU^\dagger$, as
\begin{equation}
    f(A) := U \sum_{i}f(\lambda_i)\ket{i}\bra{i}U^\dagger,
    \label{eq:mat func}
\end{equation}
where $\lambda_i$ is the $i$th eigenvalue, and $U\ket{i}$ is the $i$th eigenvector. When $A$ is not Hermitian, a generalized procedure based on the singular value transform \cite{gilyenQuantumSingularValue2019} can be used to implement these functions. To implement a degree $d$ polynomial, QSP  uses $O(d)$ applications of the block encoding and therefore its efficiency depends closely on the rate {at which} a given polynomial approximation converges to the function of interest, {as well as on} the circuit complexity {required} to implement the block encoding. 

{QSP uses} repeated applications of the block encoding circuit {to implement} a polynomial of the block encoded matrix. We assume that $A$ is a Hermitian matrix that has been suitably subnormalized by some factor $\lambda$ so that $\norm{A/\lambda}\leq 1$. QSP exploits the fact that the block encoding $U_A$ can be expressed as a direct sum over one and two-dimensional invariant subspaces which correspond directly to the eigensystem of $A$, without knowledge of the eigendecomposition. For each eigenvector $\ket{x}$ of $A$, there is a $2\times 2$ rotation matrix $O_x$ representing the $2d$ subspace associated with $\ket{x}$. In turn this allows one to \textit{obliviously} apply polynomial functions to the eigenvalues in each subspace, which builds up the polynomial transformation of the block encoded matrix $A/\lambda$ overall. QSP is characterized by the following theorem, which is a rephrasing of Theorem 4 of Ref. \cite{lowOptimalHamiltonianSimulation2017}.

\begin{lem}[Quantum Signal Processing (Theorem 7.21 \cite{Lin2022})]
\label{lem:qsp}
    There exists a set of phase factors $\boldsymbol{\Phi} := (\phi_0, \ldots, \phi_{d}) \in \mathbb{R}^{d+1}$ such that
    \begin{equation}
    \begin{aligned}
        U_\phi(x) &= e^{i \phi_0 Z}\prod_{j=1}^d[O(x)e^{i\phi_j Z}]\\
        &=
        \begin{pmatrix}
            P(x) & - Q(x)\sqrt{1-x^2}\\
            Q^*(x)\sqrt{1-x^2} & P^*(x)
        \end{pmatrix},
    \end{aligned}
    \end{equation}
    where
    \begin{equation}
        O(x) = \begin{pmatrix}
            x & -\sqrt{1-x^2}\\
            \sqrt{1-x^2}& x
        \end{pmatrix},
    \end{equation}
    if and only if $P,Q \in \mathbb{C}[x]$ satisfy
    \begin{enumerate}
        \item $deg(P) \leq d, deg(Q) \leq d-1$,
        \item $P$ has parity $d \mod 2$ and $Q$ has parity $d-1 \mod 2$, and
        \item $|P(x)|^2 + (1-x^2)|Q(x)|^2 =1 \hspace{.2cm}\forall x \in[-1,1]$.
    \end{enumerate}
\end{lem}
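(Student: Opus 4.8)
The plan is to prove both directions of the iff separately, and the argument is essentially an induction on the degree $d$. I will first set up the notation: let $U_\phi(x) = e^{i\phi_0 Z}\prod_{j=1}^d [O(x) e^{i\phi_j Z}]$, and observe that since $O(x)$ and every $e^{i\phi_j Z}$ are elements of $SU(2)$ (up to a global phase that we can track), $U_\phi(x)$ is always of the form $\left(\begin{smallmatrix} a & b \\ -b^* & a^* \end{smallmatrix}\right)$ for some $a,b \in \mathbb{C}$ depending on $x$; the claim is precisely that the top row can be written as $(P(x), -Q(x)\sqrt{1-x^2})$ with $P,Q$ polynomials satisfying the three listed conditions, and conversely that any such pair arises this way.

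For the forward direction ($U_\phi \Rightarrow$ conditions), I would induct on $d$. The base case $d=0$ gives $U_\phi(x) = e^{i\phi_0 Z} = \mathrm{diag}(e^{i\phi_0}, e^{-i\phi_0})$, so $P(x) = e^{i\phi_0}$ (degree $0$, even parity) and $Q(x) = 0$; conditions (1)--(3) hold trivially. For the inductive step, write $U_{(\phi_0,\dots,\phi_d)}(x) = U_{(\phi_0,\dots,\phi_{d-1})}(x)\, O(x)\, e^{i\phi_d Z}$, and multiply out the $2\times 2$ matrices. Using the inductive hypothesis $P_{d-1}(x), Q_{d-1}(x)$ with $\deg P_{d-1}\le d-1$, $\deg Q_{d-1}\le d-2$ and the correct parities, one reads off that the new $P_d(x)$ is a linear combination of $x\,P_{d-1}(x)$ and $(1-x^2)Q_{d-1}(x)$ (times phases), hence $\deg P_d \le d$ with parity $d \bmod 2$; similarly $Q_d(x)$ is a combination of $P_{d-1}(x)$ and $x\,Q_{d-1}(x)$, giving $\deg Q_d \le d-1$ with parity $d-1 \bmod 2$. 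Condition (3) is immediate because $U_\phi(x)$ is (a global phase times) an $SU(2)$ element, so the determinant-one / unitarity relation forces $|P|^2 + (1-x^2)|Q|^2 = 1$ identically on $[-1,1]$; one just has to check the unit-determinant bookkeeping survives each factor, which it does since $\det O(x) = 1$ and $\det e^{i\phi_j Z} = 1$.

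For the converse ($P,Q$ satisfying (1)--(3) $\Rightarrow$ existence of $\boldsymbol\Phi$), I would again induct on $d$, this time peeling off the \emph{last} phase factor. Given $(P,Q)$ of degree $d$ meeting the conditions, the idea is to determine $\phi_d$ from the leading-order behavior of $P$ and $Q$: condition (3) evaluated via the highest-degree coefficients of $P$ and $Q$ shows that the degree-$d$ part of $P$ and the degree-$(d-1)$ part of $Q$ are related by a single phase, and choosing $\phi_d$ appropriately lets one form $\widetilde U(x) = U_\phi(x)\, e^{-i\phi_d Z} O(x)^{-1}$ — equivalently, solve for a pair $(\widetilde P, \widetilde Q)$ with $\widetilde U = \left(\begin{smallmatrix}\widetilde P & -\widetilde Q\sqrt{1-x^2}\\ \ast & \ast\end{smallmatrix}\right)$ — and verify that the would-be $\widetilde P, \widetilde Q$ are genuinely polynomials of degree $\le d-1$, $\le d-2$ with the shifted parities, still satisfying the analog of (3). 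The key cancellation is that the explicit formulas for $\widetilde P, \widetilde Q$ contain a factor $(1-x^2)$ in the denominator (from inverting $O(x)$), and the parity/degree hypotheses together with the correct choice of $\phi_d$ guarantee the numerator is divisible by $(1-x^2)$, so no spurious poles appear. Then the inductive hypothesis supplies $(\phi_0,\dots,\phi_{d-1})$, completing the construction.

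I expect the main obstacle to be exactly that last polynomial-divisibility step in the converse: showing that after stripping off $e^{i\phi_d Z}O(x)$ the remaining entries are honest polynomials of strictly smaller degree with the right parity, i.e.\ that the $1/(1-x^2)$ appearing from $O(x)^{-1}$ is always cancelled. This requires using all three conditions simultaneously — the parity constraints ensure the relevant numerator vanishes at $x = \pm 1$, condition (3) pins down how the leading coefficients of $P$ and $Q$ must align so that $\phi_d$ can be chosen to kill the top-degree term, and the degree bounds then force the result into the claimed smaller class. The forward direction and the base case are routine $2\times 2$ matrix algebra by comparison. (Since this lemma is quoted verbatim as Theorem~7.21 of Ref.~\cite{Lin2022}, in the paper itself one would most likely just cite it rather than reproduce the induction; the above is how one would reconstruct the proof from scratch.)
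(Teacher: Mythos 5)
The paper does not prove this lemma at all: it is imported verbatim as Theorem~7.21 of Ref.~\cite{Lin2022} (a rephrasing of Theorem~4 of Ref.~\cite{lowOptimalHamiltonianSimulation2017}), so there is no in-paper argument to compare against. Your sketch is the standard two-directional induction from the QSP literature and its overall structure is sound: the forward direction by multiplying out one factor at a time, and the converse by choosing $\phi_d$ from the leading coefficients and peeling off $O(x)e^{i\phi_d Z}$.

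One detail in your converse is misdiagnosed, though it does not sink the argument. Since $\det O(x) = x^2 + (1-x^2) = 1$, the inverse is simply $O(x)^{-1} = \bigl(\begin{smallmatrix} x & \sqrt{1-x^2} \\ -\sqrt{1-x^2} & x\end{smallmatrix}\bigr)$, so no $1/(1-x^2)$ denominator ever appears and there is no divisibility obstruction to worry about: one finds directly that $\widetilde P = xPe^{-i\phi_d} + (1-x^2)Qe^{i\phi_d}$ and $\widetilde Q = xQe^{i\phi_d} - Pe^{-i\phi_d}$, which are manifestly polynomials. The genuine crux is the one you also name, namely the degree reduction: condition (3) at top order forces $|p_d| = |q_{d-1}|$ for the leading coefficients, so $e^{2i\phi_d} = p_d/q_{d-1}$ has a solution, and with that choice the $x^{d+1}$ term of $\widetilde P$ and the $x^{d}$ term of $\widetilde Q$ cancel, while the parity hypotheses rule out the next-highest terms, giving $\deg\widetilde P \le d-1$ and $\deg\widetilde Q \le d-2$ with the shifted parities. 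With that correction your induction closes, and for the purposes of this paper a citation to Ref.~\cite{Lin2022} is all that is required.
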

Given a scalar function $f(x)$, we can use QSP to approximately implement $f(A/\lambda)$ by using the block encoding of $A$, a polynomial approximation to the desired scalar function, and a set of phase factors corresponding to the approximating polynomial. If $f$ is not of definite parity, we can use the technique of linear combination of block encodings to obtain a polynomial approximation to $f$ that is also of indefinite parity. This corresponds to implementing QSP for the even and odd parts respectively and then using an additional ancilla qubit to construct their linear combination. The last requirement, that $P(x) + (1-x^2)|Q(x)| = 1$ for every $x$, is the most restrictive and requires normalization of the desired function, which can be severe in some cases. However, the polynomial $P$ can be specified independent of the polynomial $Q$ so long as $|P(x)|\leq 1 \hspace{.2cm} \forall x \in[-1,1]$ and $P(x)\in \mathbb{R}[x]$ (see Corollary 5 of Ref. \cite{gilyenQuantumSingularValue2019}), which is {indeed} the case for the {algorithms} presented in this work.

{We note that} there are efficient algorithms for computing the phase factors {of QSP} for very high degree polynomials.  Algorithms for finding phase factors for polynomials of degree $d = O(10^7)$ have been reported in the literature (see e.g. \cite{motlaghGeneralizedQuantumSignal2023, dongEfficientPhasefactorEvaluation2021}) and are surprisingly numerically stable even to such high degree. Given that it is straightforward to obtain polynomial approximations to scalar functions and to obtain the corresponding phase factors for QSP, our explicit block encodings combined with {the QSP framework immediately provides a nearly fully explicit circuit description of the block encodings in this work.}

\section{\label{sec:Laplacian} Block encoding of Laplacian with boundary conditions}
In this section, we will provide constructions for the block encoding of the Laplacian. Throughout this section, we will assume that the overall scaling constant of $1/h^2$ has been multiplied through, and is enforced in the state preparation portion of the algorithm. Although this greatly improves the subnormalization factor, it may induce additional complexity in the preparation of the initial state.  We will first briefly review the block encoding of the Laplacian with periodic boundary conditions using a linear combination of shift operators in \ref{subsec:periodic}. We then show how to obtain a block encoding of the periodic operator in $d$ dimensions. We then turn to the implementation of non-periodic boundary conditions in \ref{subsec:BCs simple} and show how the construction for the periodic operator can be efficiently updated to represent Dirichlet, Neumann, and Robin boundary value problems on generalized and non-simply connected rectangular domains in $d$-dimensions using a low-order central difference scheme. We then show how the Dirichlet and Neumann boundary value problems can be discretized using a high-precision scheme using a domain extension in \ref{subsec:periodic extension}. Finally, in \ref{subsec:irregular domains} we address the block encoding of differential operators with boundary data on irregular domains that cannot be described as a simple Cartesian product in $d$ dimensions.  
\subsection{Periodic Boundary}
\label{subsec:periodic}
For the Laplacian with a periodic boundary condition, approximated with a 3-point central-difference stencil with $d=1$, the discrete Laplacian matrix will take the form
\begin{equation}
L=
    \begin{pmatrix}
        -2 & 1 & 0 & \cdots & 0& 1\\
        1 & -2 & 1 & \cdots & 0& 0\\
        0 & 1 & -2 & \cdots & 0& 0\\
        \vdots & \ddots & \ddots &\ddots &\vdots&\vdots\\
        0 & 0 & 0 &\ldots &-2 &  1\\
        1 & 0 & 0 &\ldots &1  & -2
    \end{pmatrix}.
    \label{eq:Periodic 1D Lap}
\end{equation}
For $d=1$, the Laplacian term $\Delta$ with periodic boundary and 3-point finite difference approximation can be expressed as the linear combination of shift matrices introduced in \ref{subsec:shift operators}. It is straightforward to show that the 3-point stencil finite difference scheme applied to the continuum Laplacian operator $\mathcal{L}$ with periodic boundary conditions can be expressed as the linear combination
\begin{equation}
    {L} = S^{-1} -2I + S^1.
    \label{eq:1d LCU}
\end{equation} 
In general for a $2a+1$ point finite difference stencil, the periodic Laplacian can be expressed with $2a$ shift matrices, namely $\{S^{-1}, S^{-2}, \ldots, S^{-a}\}$, $\{S^1, S^2, \ldots, S^a\}$ and an identity matrix. The block encoding for the high-order scheme for approximating the Laplacian with periodic boundary conditions is characterized by the following theorem.

\begin{thm}[Block Encoding Periodic Laplacian]
\label{thm:PeriodicLaplacian}
Using the shift operators and the identity matrix, an $p=2a+1$ point finite difference stencil with $N$ discretization points can be block-encoded with a linear combination of $p$ shift matrices, each of which can be implemented with $O(\log(N))$ Toffoli or simpler gates for a total of $\widetilde{O}(p\log(N))$ Toffoli or simpler gates.
\begin{proof}
    The central finite difference approximation of the second derivative of order $a$ has $2a+1$ coefficients that take on the form given in \eqref{eq:FD coeffs lap}. Using the central difference formula for the second derivative, we may write
    \begin{equation}
        \label{eq:Lap LCU}
        {L} \approx \frac{1}{h^2}\sum_{j=-a}^a r_j S^{j} + O(N^{-2a}),
    \end{equation}
    we immediately obtain a decomposition of the central finite difference approximation of the Laplacian into a linear combination of unitary matrices. We choose $\lambda$ equal to the 1-norm of the coefficients, namely
    \begin{equation}
        \lambda = h^2\sum_{j=-a}^a |r_j|. 
    \end{equation}
Furthermore, by use of the inequality $(a-j)!(a+j)!\geq (a!)^2$, for $0 \leq j \leq a$ and taking the limit as $a \rightarrow \infty$, the sum is given by a Basel problem and consequently we can bound the coefficients in the sum by a constant
\begin{equation}
    \frac{\lambda}{h^2}= \sum_{j=-a}^a |r_j| \leq \frac{2\pi^2}{3},
    \label{eq: alpha upper bound}
\end{equation}
{as shown} in Lemma 6 of Ref. \cite{kivlichanBoundingCostsQuantum2017}. Since the construction of Ref. \cite{ConstructingLargeIncrement} promises that each of the shift operators can be implemented with $O(n)$ Toffoli and T gates with a single ancilla qubit that can be recycled, the Laplacian with periodic boundary conditions can be block encoded to precision $\epsilon \sim N^{-p+1}$ with gate cost $\widetilde{O}(np)$ Toffoli or simpler gates gates, where logarithmic factors arise from the compilation of the rotation angles in the LCU into Clifford + $T$ gates, employing $\ceil{\log(p)}+1$ ancilla qubits.
\end{proof}
\end{thm}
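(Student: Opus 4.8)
The plan is to exhibit the LCU decomposition of the discrete periodic Laplacian explicitly, read off the three block-encoding parameters $(\lambda, q, \epsilon)$, and then bound the gate cost of the \textsc{prep} and \textsc{sel} subroutines separately. First I would invoke the central-difference formula for the second derivative at order $p = 2a+1$: by \eqref{eq:FD coeffs lap} the stencil has coefficients $r_j$, $|j|\le a$, and since the periodic boundary condition makes each shift-by-$j$ term a genuine unitary $S^j$ (acting cyclically on $N = 2^n$ points), we get the clean decomposition \eqref{eq:Lap LCU}, $L \approx h^{-2}\sum_{j=-a}^a r_j S^j$, with truncation error $O(N^{-2a}) = O(N^{-p+1})$ by Lemma~\ref{thm:fd conv stencil}. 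This is the content of the LCU: $k = 2a+1 = p$ unitaries, so $q = \ceil{\log p} + 1$ ancillas (the extra qubit is for the single sign-handling / the incrementer ancilla as discussed in Section~\ref{subsec:shift operators}).

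Next I would pin down the subnormalization. Taking $\lambda = h^2 \sum_{j=-a}^a |r_j|$ as in the general LCU prescription, the point is that $\lambda/h^2 = \sum_j |r_j|$ is bounded by an absolute constant uniformly in $a$: using $(a-j)!(a+j)! \ge (a!)^2$ in the explicit formula for $r_j$ gives $|r_j| \le 2/j^2$, and $\sum_{j\ge 1} 2/j^2 = \pi^2/3$ (the Basel sum), so $\sum_{j=-a}^a |r_j| \le 2\pi^2/3$ — exactly inequality \eqref{eq: alpha upper bound}, which is Lemma~6 of \cite{kivlichanBoundingCostsQuantum2017}. So the subnormalization is $O(1)$ (or $O(h^2)$ before the $1/h^2$ is pushed into state prep, matching the convention stated at the top of Section~\ref{sec:Laplacian}).

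For the gate count I would split the circuit $U_L = (\textsc{prep}^\dagger \otimes I_n)\cdot \textsc{sel}\cdot(\textsc{prep}\otimes I_n)$. The \textsc{prep} routine acts on $\ceil{\log p}+1$ ancilla qubits and prepares a fixed $p$-dimensional amplitude vector $\sqrt{\beta_j}$; it can be done with $O(p)$ controlled rotations, each compiled into Clifford$+T$ with $\widetilde{O}(1)$ gates, giving $\widetilde{O}(p)$ gates total (the polylog factor is the $\log(1/\epsilon_{\mathrm{synth}})$ rotation-synthesis overhead, which I would note is subdominant and absorb). The \textsc{sel} routine is $\sum_j \ket{j}\bra{j}\otimes S^j$; I would implement it as a sequence of controlled shift operators, one for each bit of the ancilla register (controlled $S^{\pm 2^m}$ for the $m$-th ancilla bit, plus sign bookkeeping), i.e.\ $O(\log p)$ controlled shifts. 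Each shift $S^j$ on $n$ qubits costs $O(n) = O(\log N)$ Toffoli-or-simpler gates by the incrementer construction of \cite{ConstructingLargeIncrement} cited in Section~\ref{subsec:shift operators} (with the recyclable ancilla qubit suppressed), and controlling it adds only a constant factor. Altogether: $\widetilde{O}(p) + O(\log p \cdot \log N) = \widetilde{O}(p\log N)$ Toffoli or simpler gates, using $\ceil{\log p}+1$ ancilla qubits. I would close by remarking that the "$\epsilon$" of the block encoding in the Definition~\ref{def: Block Encoding} sense is zero (the LCU is implemented exactly) and the only error is the FD truncation $O(N^{-p+1})$, per the convention fixed just after Definition~\ref{def: Block Encoding}.

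The main obstacle — really the only nontrivial analytic input — is the uniform-in-$a$ bound on $\sum_j |r_j|$, since without it the subnormalization could in principle grow with the stencil order and one would lose the claim that high-order accuracy comes "at only constant cost." Fortunately this reduces to the Basel problem via the elementary factorial inequality, so it is short; everything else (the LCU template, the incrementer cost, the rotation-synthesis polylog) is assembled from results already quoted earlier in the paper.
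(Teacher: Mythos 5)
Your proposal is correct and follows essentially the same route as the paper's own proof: the LCU over the $p$ shift operators from \eqref{eq:Lap LCU}, the subnormalization bounded by the Basel-sum argument via $(a-j)!(a+j)!\geq(a!)^2$ (Lemma 6 of \cite{kivlichanBoundingCostsQuantum2017}), the $O(\log N)$ incrementer cost from \cite{ConstructingLargeIncrement}, and the convention that the block-encoding error is zero with the $O(N^{-p+1})$ truncation error tracked separately. Your additional detail on implementing \textsc{sel} via $O(\log p)$ controlled shifts on the binary ancilla decomposition is a harmless refinement that still lands on the same $\widetilde{O}(p\log N)$ bound and $\ceil{\log p}+1$ ancilla count.
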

This LCU of the periodic Laplacian is the core of the block encodings we will construct for the Laplacian with various boundary conditions and in arbitrary dimension. The remainder of this section will show how this LCU can be efficiently updated using $O(1)$ additional quantum resources to block encode many other kinds of boundary conditions for $d>1$ dimension.

\subsubsection{d dimensional periodic boundary}
We assume that the $d$-dimensional boundary corresponds to a $d-$dimensional torus, so that the periodic boundary condition can be specified for each dimension independently. In this case, the corresponding operator will satisfy the local tensor product structure discussed in Sec. \ref{subsub:quantum encoding}, so that the $d-$dimensional Laplacian can be constructed as the direct sum of $1d$ Laplacian matrices. With this structure, we can obtain an $(O(d),\log(d) + \log(p),\frac{d}{N^{p-1}})-$ block encoding of the approximate periodic Laplacian operator.

\begin{figure}[h]
    \centering
    \begin{quantikz}
    \lstick{$\ket{0}^{\otimes \ceil{\log(d\eta)}}$}&\qwbundle{dims} 
        &&\gate{D_{d\eta}}
        \gategroup[3,steps=3,style={dashed,rounded
        corners,fill=blue!20, inner
        xsep=2pt},background,label style={label
        position=below,anchor=north,yshift=-0.2cm}]{{$(\lambda \eta d, \log(p)+\log(\eta d))$-BE$(\bs{L})$}}&\oslash\vqw{1}&\gate{D_{d\eta}^\dagger}&\qw\\
        \lstick{$\ket{0}^{\otimes\ceil{ \log(p)}}$}&\qwbundle{stencil}&\hphantomgate{} &\gate{\textsc{prep}_r}&\oslash\vqw{1}&\gate{\textsc{prep}_r^\dagger}&\qw\\
        \lstick{$\ket{b}$}&\qwbundle{sys}&&&\gate{\text{sign}(r_j)S_l^j}&\qw&
    \end{quantikz}
    \caption{Quantum circuit for LCU of Laplacian  $\bs{L}$ with periodic boundary conditions in $d$ spatial dimensions using a stencil of arbitrary order $p = 2a+1$ and for $\eta$ particles. $\textsc{prep}$ is the unitary which prepares the unsigned finite difference coefficients, $\text{sign}(r_j)r_j$ on the \textit{coeff} register of size $O(\log(p))$ and $D_d$ is the unitary which prepares the equal superposition over $d$ states on the \textit{dims} register of size $O(\log(\eta d))$. The system register \textit{sys} holds $\eta d$ registers of $n$ qubits representing a $N^{\eta d}$ dimensional state $\ket{x}$. $S^j_l$ refers to the modular increment-by-$j$ operator in register $l$ and $\text{sign}(r_j)S^j$, multiplies the overall unitary by a phase $\pm 1$. We use the $\oslash$ convention to mean the application of a unitary controlled on any $\ket{j}$ in the \textit{stencil} register for every $j \in [\eta]$. Similarly, for the \textit{dim} register, $\oslash$ controls on $\ket{l}$ for every $l \in [d]$.}
    \label{fig:d dim periodic circuit}
\end{figure}

We can construct the block encoding of this matrix as follows. Let 
\begin{equation}
    \textsc{prep}_r\ket{0} = \sum_{j}\sqrt{\frac{|r_j|}{\lambda}}\ket{j},
\end{equation}
where $\lambda = \sum_{j}|r_j|\leq \frac{2\pi^2}{3}$. In practice, $p = O(1)$ and so the sign information of the coefficients is assumed to be easily {accessible} and can efficiently be included in the unitary. Therefore, in the state preparation we take the $r_j$'s to be non-negative and will incorporate the sign information into the unitaries by multiplying them by an appropriate phase. Now, we let 
\begin{equation}
    \textsc{sel} = \sum_{i=-a}^a \text{sign}(r_i)S^i\otimes \ket{i}\bra{i}.
\end{equation}
Then, we introduce the notation $S^j_l$ to refer to the increment by $j$ operator applied to register $l$, acting trivially on the other registers. In circuit form, this corresponds to the following equivalence:
\begin{equation*}
    \begin{quantikz}
    \lstick{$\ket{x}_0$}&\qwbundle{n_0}&\gate[5]{S^j_l}&&\\
    \vdots\\
    \lstick{$\ket{x}_l$}&\qwbundle{n_l}&&&\\
    \vdots\\
    \lstick{$\ket{x}_{d-1}$}&\qwbundle{n_{d-1}}&&&
    \end{quantikz}
    \hspace{.5cm}\equiv
    \begin{quantikz}
    \lstick{$\ket{x}_0$}&\qwbundle{n_0}&&&\\
    \vdots\\
    \lstick{$\ket{x}_l$}&\qwbundle{n_l}&\gate{S^j}&&\rstick{$\ket{x+j\mod{N}}_l$}\\
    \vdots\\
    \lstick{$\ket{x}_{d-1}$}&\qwbundle{n_{d-1}}&&&
    \end{quantikz}.
\end{equation*}
Now, we observe that {the $d$-dimensional system can be represented by the following LCU:}
\begin{align*}
    &\frac{1}{\lambda d}\sum_{i\in[d]}\sum_{j=-a}^{a}r_j S^j_i\\
    &= \frac{1}{\lambda d}\sum_{i\in[d]}\sum_{j=-a}^{a}r_j I^{\otimes i}\otimes S^j \otimes I^{\otimes (d-i-1)}\\
    &=\frac{1}{\lambda d}\sum_{i\in[d]}\mathcal{L}_i^{(d)}\\
    &=\frac{1}{\lambda d} \mathcal{L},
\end{align*}
where $\mathcal{L}_i^{(d)} = \sum_{j=-a}^{a}r_jS^j_i$ is the one-dimensional periodic Laplacian on register $i$, acting as the identity on all other registers. Note that this block encoding does not use controlled applications of the block encodings of the $d=1$ dimensional operator.

By preparing the uniform superposition over $d$ states, providing indices to control which register we apply the shifts on, we can implement the $d$ dimensional Laplacian matrix with comparable efficiency to the $1d$ case, {because} the number of additional ancilla qubits is only an additional $O(\log(d))$ larger than the $1d$ case, the circuit depth is $O(d p n)$ and the subnormalization is $d\lambda$. The quantum circuit implementing this block encoding is shown in Fig. \ref{fig:d dim periodic circuit}. The generalization to $\eta$ particles in $d$ dimensions follows the same pattern as the generalization from $1$ to $d$-dimensions.

\subsection{Dirichlet, Neumann, and Robin Boundary conditions}
\label{subsec:BCs simple}
In this section, we show how Dirichlet, Neumann, and Robin boundary conditions can be implemented using the low order scheme to enforce common boundary value problems that we introduced in section \ref{sec:Background}. These block encodings of the Laplacian with non-periodic boundary conditions do not enjoy the same convergence rates as the higher-order finite difference scheme used in the periodic operator. This is because the evaluation of the differential operator near the boundary restricts the order of the finite difference stencil that can be used, which  
lowers the global convergence rate. Nevertheless, these schemes are standard in classical {methods for the numerical solution} of differential equations, and so we write the circuits for these block encodings explicitly {in this section}, which may be useful for simple applications. Later, we will show how Neumann and Dirichlet conditions can be {efficiently} implemented with high-precision using an extended domain and forcing function.

In $d=1$ the boundary conditions only affect the rows of the corresponding matrix which describe the behavior of the solution on the boundary. When the boundary value problem is on the extremal points of an interval, upon discretization, the boundary conditions change the matrix elements of the periodic Laplacian on only the first and last row. Here we demonstrate how to implement this change for the simple case of a Dirichlet boundary condition that is specified on both endpoints of the domain. Recall from \eqref{eq:dir mat} that up to a global scaling factor, the 3-point stencil for the Laplacian with Dirichlet boundary corresponds to the tridiagonal matrix 
\begin{equation*}
{L}_D = 
       \begin{pmatrix}
        -2 & 1 & 0 &\cdots & 0\\
        1 & -2 & -1&\cdots & 0\\
        \vdots& \ddots &\ddots &\ddots &\vdots\\
        0&\cdots & 1 &-2 &1\\
        0&0 &\cdots & 1 & -2
    \end{pmatrix}.
\end{equation*}

We can construct this matrix with a similar LCU to the periodic case, but with the addition of reflection operations. This is based on the simple observation that a sum or difference of a reflection and an identity matrix forms the projector onto the reflected subspace or its complement. Notice that we may express $L_D$ as,
\begin{equation*}
    {L}_D = -2I + \Pi_{0}^{\perp}S^{-1} + \Pi_{N-1}^{\perp}S^{1}.
\end{equation*}
The projectors $\Pi_{0}^{\perp} = I - \ket{0}\bra{0}$ and $\Pi_{N-1}^{\perp}= I-\ket{N-1}\bra{N-1}$ can be expanded using a linear combination of the reflections $R_0 = I - 2 \Pi_{0}$, $R_{N-1} = I - 2\Pi_{N-1} $ and the identity as, 
\begin{equation}
    {L}_D = -2I + \frac{1}{2}\left(R_{0} + I\right)S^{-1} + \frac{1}{2}\left(R_{N-1} + I\right)S^{1}.
\end{equation}
This LCU can be immediately implemented as an $(\lambda_D = 4, m=3,\epsilon=O(N^{-2}))$ block encoding of the Laplacian with Dirichlet boundary conditions on $N$ grid points. The quantum circuit implementing this block encoding is given in {Fig.} ~\ref{fig:dirichlet lcu circuit}, and involves only a constant number of additional gates and only a single additional ancilla qubit.

The Neumann and Robin boundary value problems are similar. For the on dimensional case, the matrices we need to block encode take the form
\begin{equation}
L_N = 
        \begin{pmatrix}
        -1 & 1 & 0 &\cdots & 0\\
        1 & -2 & 1&\cdots & 0\\
        \vdots& \ddots &\ddots &\ddots &\vdots\\
        0&\cdots & 1 &-2 &1\\
        0&0 &\cdots & 1 & -1
    \end{pmatrix},
    \hspace{.2cm}
L_R = 
    \begin{pmatrix}
        -1-ah/b & 1 & 0 &\cdots & 0\\
        1 & -2 & 1&\cdots & 0\\
        \vdots& \ddots &\ddots &\ddots &\vdots\\
        0&\cdots & 1 &-2 &1\\
        0&0 &\cdots & 1 & -1+ch/d
    \end{pmatrix},
\end{equation}
{Here} we need to adjust three matrix elements on the boundary rows. Using a similar approach with a linear combination of shifts and shifts multiplying reflections, we can also construct the matrix corresponding to the Neumann boundary value problem. For the Neumann case, corresponding LCU can be expressed 
\begin{equation}
    {L}_N = \frac{-3}{2}I -\frac{1}{2}R_{0}R_{N-1} + \frac{1}{2}\left(R_{0}S^{-1} + S^{-1}\right) + \frac{1}{2}\left(R_{N-1}S^1 + S^1\right).
    \label{eq:NeumLCU}
\end{equation}
This constructs a $\left(\lambda_N = 4, m = 3, \epsilon = O(N^{-1})\right)$ block encoding of the Neumann boundary value problem.
The Robin condition can {now} be encoded by adjusting the the first and last diagonal entries of the block encoding of the Neumann matrix. This can be accomplished by adding the following LCU terms
\begin{equation*}
    -\frac{a h}{2b}\left(I - R_{N-1}\right) + \frac{c h}{2b}\left(R_{0} - I\right).
\end{equation*}
Therefore, the Robin LCU can be expressed
\begin{equation}
    {L}_R = \frac{(c-a) h-3b}{2b}I+\frac{a h}{2b} R_{N-1} + \frac{c h}{2b}R_{0}  -\frac{1}{2}R_{0}R_{N-1} + \frac{1}{2}\left(R_{0}S^{-1} + S^{-1}\right) + \frac{1}{2}\left(R_{N-1}S^1 + S^1\right).
    \label{eq:rob BCs lcu}
\end{equation}
This results in an $(\lambda_R \leq 4 +\frac{c+a}{Nb}, m=4)$ block encoding of the linear system corresponding to Robin boundary condition.  
In many cases with Neumann or Robin conditions, the stability of the corresponding linear system may have a non-trivial dependence on $N$, typically $O(N)$, which will lower the global truncation error of the scheme to $\epsilon\sim O(N^{-1})$. Furthermore, the stability of the system depends on the coefficients $a$ and $b$ as well as the associated boundary values. Additionally, since there are terms with different powers of $N$ in the summation, the factor of $N^{2}$ cannot be viewed as a global scaling factor and must be incorporated into the coefficients. {Although}, the subnormalization factor for the Neumann/Robin case {becomes} is similar to the Neumann case $\lambda_R \leq \lambda_N + O(1/N)$, the preparation of the LCU coefficients is more involved.

\begin{figure}[h!]
    \centering
  \includegraphics[width=.45\linewidth]{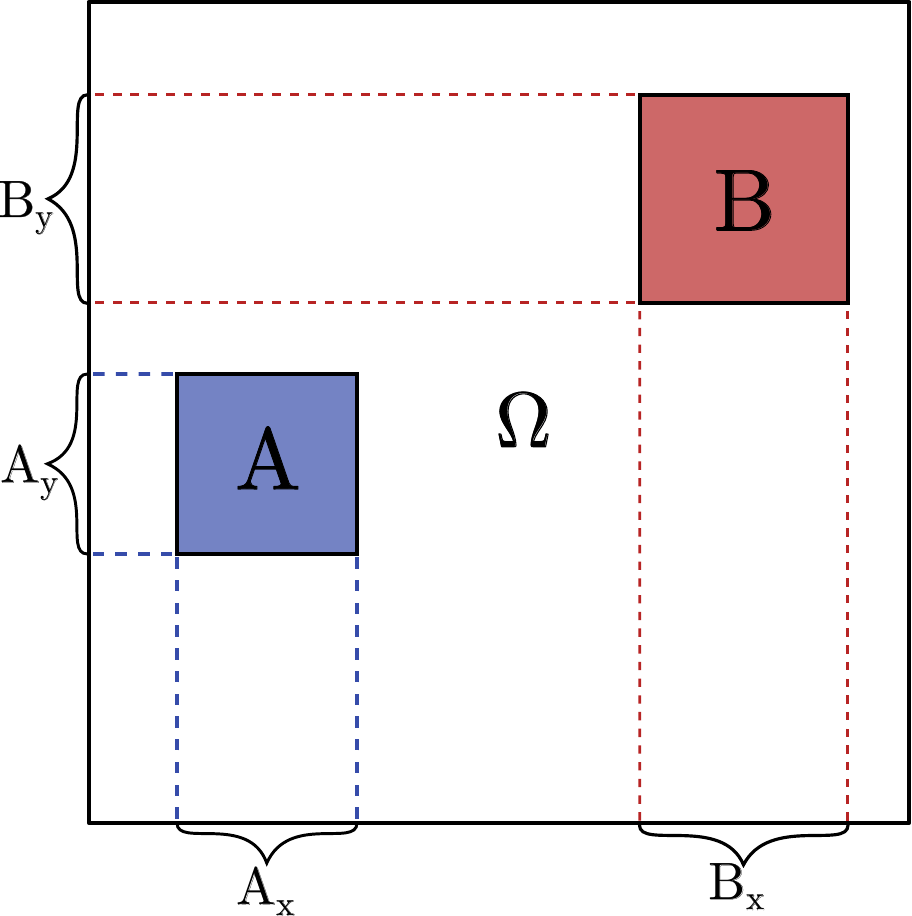}
    \caption{{An example} of a non-simply connected domain. For this portion of the discussion, we assume that the boundary conditions are given in each dimension independently so that the projection of the shapes of the boundary onto the principal axes is sufficient to completely characterize the hole. 
    We solve the Dirichlet problem corresponding to the holes $A$ and $B$ by solving the $d=1$ Dirichlet problems on $A_x \cup B_x$ and $A_y\cup B_y$, assuming the solution takes a constant value on the interior of the holes. Because the underlying operator is still separable, knowledge of the projections of $A$ and $B$ and the block encodings for the Dirichlet problem for $d=1$ are sufficient to block encode the Laplacian on this non-simply connected domain.}
    \label{fig:non simple domain}
\end{figure}

This scheme can be straightforwardly extended to dimension $d>1$ using a similar circuit to that in Fig. (\ref{fig:d dim periodic circuit}), {as} long as the underlying boundary conditions are separable. In addition, this approach also allows for the evaluation of boundary conditions{ imposed}, for example, on the interior of the grid. To illustrate this, consider the following {example}. Let $[0,L]^d$ be the domain and $A = A_0 \cup A_1 \cdots \cup \cdots A_{d-1}$ be a set of points corresponding to the boundary in each dimension so that Cartesian product $(A_0 \times A_1 \times \cdots \times A_{d-1}) \subset [0,L]^d$ corresponds to a {generalized} rectangle in $d$-dimensions. Because each of the faces of the rectangle are orthogonal, the projection of the rectangle onto the axes in each dimension is sufficient to describe the shape. Exploiting the separable structure, in each dimension $i$, we can block encode the Dirichlet boundary value problem with the LCU
\begin{equation}
    \mathcal{L}_i^{(d)} = 2I -\frac{1}{2}\left(S^{-1}R_{A_{i}+1} + S^{-1} + S^{1} + S^{1}R_{A_{i}-1}\right),
    \label{eq:dir NS}
\end{equation}
{with} $A_{i}+j \equiv \{j+k \mod N :\forall k \in A_{i}\}$ and $R_{A} = I - 2\sum_{j\in A} \ket{j}\bra{j}$. This LCU produces a matrix that approximates the Laplacian on $[0,\min(A_{i}))\cup (\max(A_{i}), L]$ and is proportional to the identity matrix on grid points supported in $[\min(A_{i}),\max(A_{i})]$, so that the prescribed value on the interior can be enforced by simply placing the boundary values in the elements of right hand side vector that correspond to the grid points on the boundary. The Neumann boundary value problem can be block encoded similarly.

\begin{figure}[h!]
    \centering

    \begin{quantikz}
        \lstick{$\ket{0}$}&\qw&&\gate{H} \gategroup[4,steps=5,style={dashed,rounded
        corners,fill=blue!20, inner
        xsep=2pt},background,label style={label
        position=below,anchor=north,yshift=-0.2cm}]{{$(4 d, \log(d)+3)$-BE$({\mathcal{L}^{(BC)})}$}}
        &\ctrl{1}&\qw&\octrl{1}&\gate{H}&\qw \\
        \lstick{$\ket{0}$}&\qwbundle{dim}&\hphantomgate{} &\gate{D_d}&\oslash\vqw{1}&&\oslash\vqw{1}&\gate{D_d^\dagger}&\qw\\
        \lstick{$\ket{0}$}&\qwbundle{stencil}& &\gate{\textsc{prep}_r}&\oslash\vqw{1}&&\oslash\vqw{1}&\gate{\textsc{prep}_r^\dagger}&\qw\\
        \lstick{$\ket{x}$}&\qwbundle{sys}&\hphantomgate{} & &\gate{S_l^jR_{A_l-j}}&&\gate{S_l^j}&\qw&
    \end{quantikz}
    \caption{Quantum circuit for LCU of second derivative operator with boundary conditions on a simply or non-simply connected domain in $d$-dimensions using low-order stencil $\mathcal{L}^{(BC)}$. The subnormalization $4d$ results from $\lambda = 4$ for the 3 point stencil, which we then combine $d$ times. Here we use the notation $\textsc{prep}$ as the unitary which prepares the finite difference coefficients on the \textit{stencil} register and $D_d$ as the unitary which prepares the equal superposition over $d$ states on the \textit{dim} register where $S^j_l$ refers to the modular increment-by-$j$ operator in direction $l$. We use the $\oslash$ convention to mean controlled over every state. Appending an additional register of $\ceil{\log(\eta)}$ qubits and selecting over shift operators both in dimension and particle indices will implement the $\eta$-particle Laplacian over $d$-dimensions.}
    \label{fig:dirichlet lcu circuit}
\end{figure}

Because the LCU involves a linear combination of shift operators and shift operators multiplying reflection matrices, which have known circuit decompositions into one and two qubit gates, we can immediately obtain a decomposition of the LCU into one two qubit gates. In particular, the shift operators have been discussed many times in the literature, and an optimized implementation is provided in Ref. \cite{ConstructingLargeIncrement}, using an additional ancilla qubit and $n-1$ Toffoli gates. The reflection matrices can be implemented by n-qubit control-Z gates and single-qubit Pauli gates, which can also be implemented by $O(n)$ Toffoli gates. Therefore, if the boundary conditions are set on the exterior grid points and satisfy the above assumptions, we can use the periodic extension to enforce the Neumann and Dirichlet to high precision in $d$-dimensions. In particular, using the 3 point scheme in $d$-dimensions the expected number of shift and reflection operators needed is $O(2d)$ to block encode the Neumann and Dirichlet matrices as opposed to $O(d)$ needed for the periodic operator. Since each shift and reflection operation requires at most $O(n)$ Toffoli gates to implement, the total Toffoli count for the periodic case is $O(dn)$ and the case with Dirichlet and Neumann boundary is $O(2dn)$. 

The total number of Toffoli or simpler gates is $O\left(dn\right)$, with $\log(d)+1$ ancilla qubits {required} to block encode the $d$-dimensional Laplacian with periodic boundary conditions on a rectangular grid.  Replacing the periodic Laplacian with the Dirichlet block encoding will {provide} a block encoding of the $d$-dimensional Dirichlet problem.  Doing this for $\eta$ particles, we can simply append another register of size $\log(\eta)$ and proceed similarly. This will in turn require $O\left(dn\eta\right)$ Toffoli or simpler gates to implement and $O(\log(dp\eta))$ ancilla qubits.

{We note that} the above schemes for Dirichlet and Neumann/Robin {boundary} conditions are generically limited to second, $O(N^{-2})$, and first order, $O(N^{-1})$, accuracy {with respect to the discretization,} respectively. We can improve these orders of accuracy by using a high-order scheme, but this is complicated by the ambiguity of how to evaluate the high-order scheme near the boundary. The introduction of ghost points is a common technique for handling Neumann boundary value problems, but it is unclear how to extend the ghost point method to arbitrary order. However, if we can uniquely and smoothly extend the PDE onto a larger domain, then we can use a higher order method to evaluate the boundary value problem. With some restriction on the domain and boundary conditions, this can be accomplished, at least in theory, {using} the method of images. Although the method of images also holds for the Robin boundary condition, it is more complicated than the Dirichlet and Neumann boundary conditions, so we will focus on their implementation. We {now} show how this can be efficiently implemented in practice on a quantum computer {using a method of periodic extensions for the Dirichlet and Neumann case.}

\subsection{Neumann and Dirichlet boundary conditions via periodic extensions}
\label{subsec:periodic extension}
One practical way to implement the method of images over a finite interval, say $[0,1]$, is to use the even or odd extensions of the forcing function $f$ onto $[-1,1]$ and {then} discretize the corresponding differential equation over the enlarged interval. Although this doubles the number of grid points, on a quantum computer this can be implemented with {the introduction of just a }single additional qubit. This idea was first explored as part of \cite{highPrecision}, but left to future work its detailed implementation. We provide {one set of the required} implementation details here. For this section, we will assume that the boundary value problem is homogeneous and that the corresponding extended forcing function $\widetilde{f}$ is smooth across the boundary.

From a geometric perspective, this can be viewed as embedding some system with specified non-periodic boundary conditions within a larger system {having} a periodic boundary. The degrees of freedom in the extended portion of the domain are constructed to produce the correct solution on the original domain of definition. This extended system allows the use of central difference schemes, even on boundary nodes, to obtain approximations at the boundary to the same precision as {in} the interior. Furthermore, constructing such an extension is straightforward {since} only the original operator, the domain of definition, and the prescribed boundary conditions are sufficient to uniquely extend the operator onto a new domain and {to} construct the appropriate periodic extension of the forcing function $f$. {To proceed, we first} define even and odd periodic extensions of the forcing function on the right hand side $f$.
\begin{defn}[Even and odd periodic extensions]
    Given a function $f$ defined over some compact interval, say $[0,L]$, $f:[0,L] \rightarrow \mathbb{R}$, we can define the even periodic extension to the interval $[-L, L]$ $\widetilde{f}_e :[-L,L] \rightarrow \mathbb{R}$
    \begin{equation}
        \widetilde{f}_e(x) = \begin{cases}
            f(-x) & x \in [-L, 0)\\
            f(x) &x \in [0,L)
        \end{cases}
    \end{equation}
    so that $f(-L) = f(L), f(0) = f(2L)$. {Similarly, we define the} related odd periodic extension $\widetilde{f}_o:[-L,L]\rightarrow \mathbb{R}$ {as} 
    \begin{equation}
        \widetilde{f}_o(x) = \begin{cases}
            -f(-x) & x \in [-L, 0)\\
            0 & x = 0 \\
            f(x) &x \in [0,L)
        \end{cases}
    \end{equation}
    so that $\widetilde{f}_o(-x) = -\widetilde{f}_o(x)$ is an odd function. In addition, we assume that $\widetilde{f}$ is smooth in $[-L,L]$, meaning that for any $p > 3$, and $x \in [-L,L]$, $\max_{x}\vert f^{(p)}(x)\vert < C$, for some constant $C$.
\end{defn}
The assumption of smoothness over the extended domain is critical to obtaining theoretical guarantees of higher order accuracy at the boundary to achieve an higher-order globally accurate scheme. If such smoothness is not automatically guaranteed by the original problem definition, one can mollify the forcing function through a convolution with a smooth bump function centered at the boundary point to smooth connect the even and odd extensions of the problem. Then we simply solve the equation with the smoothed version of the periodically extended forcing function, accepting the small controllable error introduced from modifying the original function on the domain. Intuitively, by extending the domain of the original function and {of the} differential operator, the stencil will involve points extending across the boundary, causing interference between the solution on the original domain and that on the extended domain. The odd periodic extension enforces a kind of destructive interference of the solution near the boundary, while the even periodic extension enforces the destructive interference of the \textit{derivatives} of the solution near the boundary. This behavior is related to why the Laplacian with Dirichlet and Neumann {boundary value problems (BVPs)} are diagonalized by the discrete sine and cosine transformations respectively. Extending this approach to Robin boundary conditions is less straightforward. In principle it should be possible, using, for example, the approach found in \cite{rosalesRobinBoundaryConditions}, but it is unclear if this method can be efficiently implemented using a straightforward adjustment of the scheme we present for Dirichlet and Neumann BVPs.

For a nonhomogenous Dirichlet problem
\begin{align*}
    \Delta [u](x) &= f(x) \text{;   }\hspace{.2cm} 0 < x < L,\\
    u(0) &= f(0),\\
    u(L) &= f(L),
\end{align*}
we must first homogenize the system to apply the method of images.
To homogenize the boundary value problem, we first introduce a new solution $V = u(x) - \frac{f(L)-f(0)}{L}x - f(0)$ to homogenize the original boundary value problem so that 
\begin{align*}
    \Delta[V] &=  f(x) \text{;   }\hspace{.2cm} 0 < x < L,\\
    V(0) &= 0,\\
    V(L) &= 0.
\end{align*}
Now, we consider the odd periodic extension of the function $f$ about $x=0$ and $x=L$
\begin{equation}
    \widetilde{f}_o(x) = \begin{cases}
        -f(-x) & -L < x < 0\\
         0 &  x = 0\\
        f(x) & 0<x < L\\
        0 & x = L.
    \end{cases}
\end{equation}
In general, since $f(0) \neq f(L) \neq 0$, we may have the even or odd extension may be discontinuous or have discontinuous derivatives across the boundary that would lower the smoothness of the forcing function at the boundary, destroying the smoothness of the solution as promised by lemma \ref{lem:ellip regularity} and thereby the global convergence rate of the high order scheme. In the case where $f$ decays sufficiently quickly so that we need only to perform the periodic extension about a single boundary point, we may convolving $f$ with a smooth bump function $\varphi_k(x) = e^{\frac{1}{(kx)^2-1}}$, for $k>1$, to mollify the forcing function across the boundary. We then choose $k$ large enough that we approximate the periodically extended function within the desired precision and solve the problem on the mollified forcing function given by,
\begin{equation}
    \hat{f}(x) := \frac{1}{\mathcal{N}}\int_{-1}^1  \varphi_k(x-y)f(y) dy,
    \label{eq:mollified f}
\end{equation}
where $\mathcal{N} = \int_{-1}^1 \varphi_k(x)dx$ is a normalizing constant.
For this scheme to be efficient, we require that $k=O(1)$, as the $n$th derivatives of the bump function go as $O(k^n)$. If $k$ is large, one may need to balance of the order of the scheme with the smoothness of $f$ across the boundary. However, we do not expect for $k$ to be so large in practice.  In Fig. \ref{fig:mollified extension}, we demonstrate a particularly challenging example of periodically extending the exponential function across the origin. In the figure, we show close qualitative accuracy for mollifying constant $k = 5$ in the even extension case and $k = 10$ in the odd extension case. For the rest of this section, we will assume that smoothness across the boundary is guaranteed by the original problem since the mollification process can be efficiently performed classically if needed.

\begin{figure}
    \centering
    \begin{minipage}{0.5\textwidth}
        \centering
  \includegraphics[width=\linewidth]{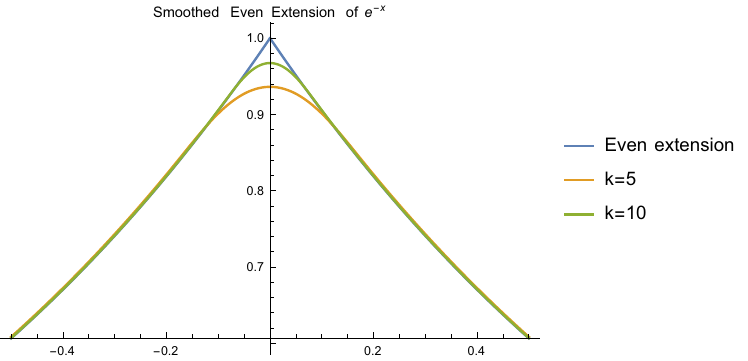}
    \end{minipage}\hfill
    \begin{minipage}{0.5\textwidth}
        \centering
        \includegraphics[width=\linewidth]{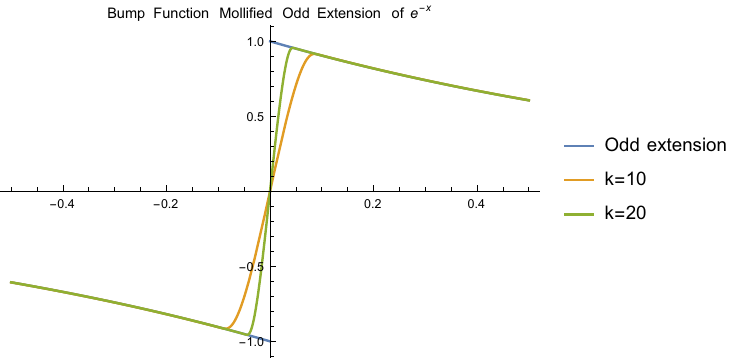}
    \end{minipage}
    \caption{Smoothed even and odd extensions of an exponential forcing function $f(x) = e^{-x}$. This case represents a particularly challenging example, as the even extension of $e^{-x}$ has a cusp at $x=0$ and the odd extension of $e^{-x}$ is discontinuous at $x=0$. For the even extension, we convolve $\widetilde{f}_e(x)$ with $e^{\frac{1}{(kx)^2 -1}}$ for $k= 5$ and $k=10$ in the figure on the left. For the odd extension, we convolve $\widetilde{f}_o(x)$ with the same mollifying function 
    using the mollifying constants $k=10$ and $k=20$ in the figure on the right. The convolution can be accomplished according to Eq. \eqref{eq:mollified f}. These figures demonstrate how a mollified version of the periodic extension can be used to obtain an infinitely differentiable periodic extension of the original forcing function onto the extended domain, at the cost of some error that depends on mollifying constant $k$. This allows us to use a central difference scheme spanning the extended domain, similar to the standard periodic boundary condition case. This greatly improves the smoothness of the solution and the overall convergence rate, correspondingly. We provide numerical examples for the solution of periodically extended domains that are available on Github \cite{KharaziCode}.}
\label{fig:mollified extension}
\end{figure}

Now, consider the differential equation with respect to the linearly shifted solution given above
\begin{align*}
    \Delta[V] &= \widetilde{f}_o(x), \hspace{.2cm} x \in (-L,L)\\
    V(-L) &= V(L)
\end{align*}
with periodic boundary conditions.
Now, we notice that
\begin{align*}
    \Delta[V](-x) = -f(x) = - \Delta[V](x)
\end{align*}
{and} by linearity, we conclude {that}
\begin{align*}
    \Delta[V](0) = \sum_{j=1}^a r_j((V(j)+V(-j)) +r_0V(0) \sim 0 + O(h^{p}),
\end{align*}
which implies $V(0) = O(h^p)$. Therefore, we can use the block encoding of the periodic Laplacian on $2N-2$ grid points applied to the smooth odd extension of $f$ to obtain a block encoding of Dirichlet Laplacian which converges globally as $O(h^{p-1})$. {This can then} be implemented via the LCU \eqref{eq:1d LCU} and correspondingly as an $\left(\lambda = O(1), \ceil{\log(p)}, \epsilon \sim N^{-p+1}\right)$ block encoding of the Dirichlet operator. 

The Neumann case is similar. Consider the periodic boundary value problem with an even function $\widetilde{f}$ about $x=0$ and $x=L$:
\begin{align*}
    \Delta[V] &= \widetilde{f}_e(x) \hspace{.2cm} x\in(-L,L)\\
    V(-L) &= V(L).
\end{align*}
Now suppose $u$ is some particular solution to the above. Then,
\begin{align*}
    u'(0) &\sim \sum_{j=1}^{a} c_j u(-j) - c_j u(j) + O(h^{p})\\
    &= O(h^{p})
\end{align*}
satisfies the required boundary condition to high precision. A similar result can be found at $x=L$. Therefore, once again, this matrix can be block encoded using a high order scheme for the Laplacian satisfying periodic boundary conditions on the extended domain. However, in this case, the subnormalization factor will again depend on $N$, but the factor of $N$ in the global truncation error arising from the stability of the linear system is also only $O(1)$ since the rows with Neumann conditions $R$ satisfy $R\ket{x} = 0$ for every $x$ in the row space of $R$. Therefore the overall global truncation error for the Neumann scheme is similar to that of the Dirichlet case $\epsilon \sim O(N^{-p+1})$. 

Extending to $d$-dimensions with the periodic embedding is also straightforward whenever $f:\mathbb{R}^d \rightarrow \mathbb{R}$ can be expressed as a separable function of its arguments. We simply periodically extend each portion of the domain and {the} function corresponding to each dimension, and then {constitute} the direct sum to obtain the whole operator. Periodic extensions in $d$ dimensions will increase the number of grid points by a factor of $2^d$ so that $N^d \rightarrow N' \sim O(2^d N^d)$, but this only comes at the cost of $\sim d \log(2N)$ qubits, which is essentially equivalent. However, for a fixed $N$, the periodic embedding technique promises $\epsilon \sim N^{-(p-1)}$ whereas the standard schemes only provide $\epsilon \sim N^{-(p-1)/2}$, assuming some high order scheme has been consistently implemented at the boundary, so that one would need at least \textit{quadratically} more grid points to obtain the same accuracy as the periodic embedding scheme. Furthermore, it is much less clear how to extend the above constructions in an efficient {manner to} achieve {the standard} accuracy $O\left(N^{-(p-1)/2}\right)$ in the first place, as whole new coefficients would be needed on the boundary rows for the scheme. Finally, and perhaps most beneficially, the constructed matrix {can be} diagonalized by the quantum Fourier transform, allowing for faster algorithms to solve the underlying linear system or perform dynamical simulations \cite{tongFastInversionPreconditioned2021a,highPrecision}.

\subsection{Block encoding operators on irregular domains}
\label{subsec:irregular domains}
The main technique used to implement the above boundary conditions is to use a linear combination of a unitary with itself, {multiplied by} a reflection about the boundary nodes to zero out {specific} matrix elements. Unfortunately, our technique of constructing explicit block encodings relied on the assumption of separability to efficiently exploit the tensor product structure {and so cannot be immediately applied in less structured domains.} However, a very similar approach can be used for block encoding domains where an explicit tensor product structure is not apparent, if we can efficiently compute a membership oracle for the domain.  We {illustrate this by considering} the following scenario. We have some $d-$dimensional physical domain $\Omega$ that is contained within some $d-$dimensional cube that we call the computational domain, i.e., there is some $L>0$ {such} that  $\Omega \subseteq \Omega_L^d \sim [0,L]^d$. This is similar to classical embedding methods which embed the domain of interest within a subset of a Cartesian domain. These techniques are a starting point for the numerical simulation of curved boundaries \cite{ruuthSimpleEmbeddingMethod2008a,colellaEmbeddedBoundaryAlgorithms2008, xuHighefficiencyDiscretizedImmersed2023,cocoHighOrderFinitedifference2024}, which could be an avenue to explore in future work. We will discretize $\Omega_L^d$ as usual into $N^d$ many grid points, and in addition, we will assume access to an oracle that can compute the following quantities
\begin{equation}
    \textsc{dom}\ket{{x}}\ket{0} = 
    \begin{cases}
    \ket{x}\ket{0} & \text{ if } x \in \text{int}(\Omega)\\
    \ket{x}\ket{1} & \text{ if } x \in \text{bndry}(\Omega)\\
    \ket{x}\ket{2} & \text{ if } x \in \Omega_L^d \backslash \Omega.
    \end{cases}
    \label{eq: dom oracle for BCs}
\end{equation}
We have used the shorthand notation $\ket{x} = \ket{x_0}\ket{x_1}\cdots\ket{x_{d-1}}$
to refer to the $N^d-$dimensional state vector indexing a grid point in $\Omega_L^d$. We further assume that inclusion in the set $\Omega$ can be efficiently computed. Some examples where this can occur are {i)} if the function parameterizing the boundary of $\Omega$ can be expressed as a smooth function or as a small $\widetilde{O}(1)$ family of piecewise smooth curves, or {ii)} if the boundary is given by the convex hull of a simple set, {in which case} such an oracle could be efficiently realized using the method of inequality testing \cite{sandersBlackBoxQuantumState2019}. The resulting block encoding corresponds to an operator with boundary conditions on a ``rasterized" domain. Because this block encoding is still inherently restricted to a Cartesian grid, the $\textsc{dom}$ oracle could induce ``jagged" edges when the domain is a smooth curve like a circle. Although it may be possible to choose $N$ large enough to minimize this effect, this may introduce uncontrolled numerical errors which we will not account for here.

{Now let} $L$ be a linear differential operator defined with periodic boundary conditions in each dimension on $\Omega_c^d$, which is then discretized using standard finite difference methods, and let $U_{L}$ be its $(\lambda, m)$-block encoding. Because the periodic boundary conditions are separable, this block encoding can be obtained by the methods given above. For a generic Robin boundary condition $au(x) + b\nabla u(x)\cdot \mathbf{n}(x) = f$, we may express the operator in difference form as $B = b\sum_{1\leq |j| \leq a}S^jc_j + aI = \mathbf{f}$.   Let $U_{B}$ be the $(\beta, l)$-block encoding of these boundary conditions. We can then implement these operations on the relevant subsets of the domain and sum them together to obtain the encoding of the desired operator on the physically relevant subset. 

\begin{figure}
    \centering
    \centering
    \begin{minipage}{0.45\textwidth}
        \includegraphics[width=\textwidth]{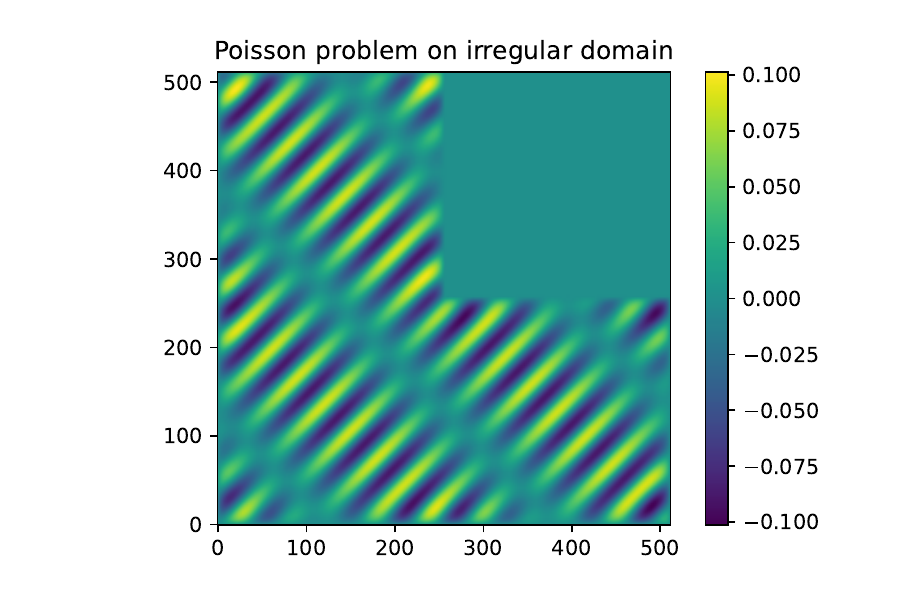} 
    \end{minipage}
    \hfill
    \begin{minipage}{0.45\textwidth}
        \includegraphics[width=\textwidth]{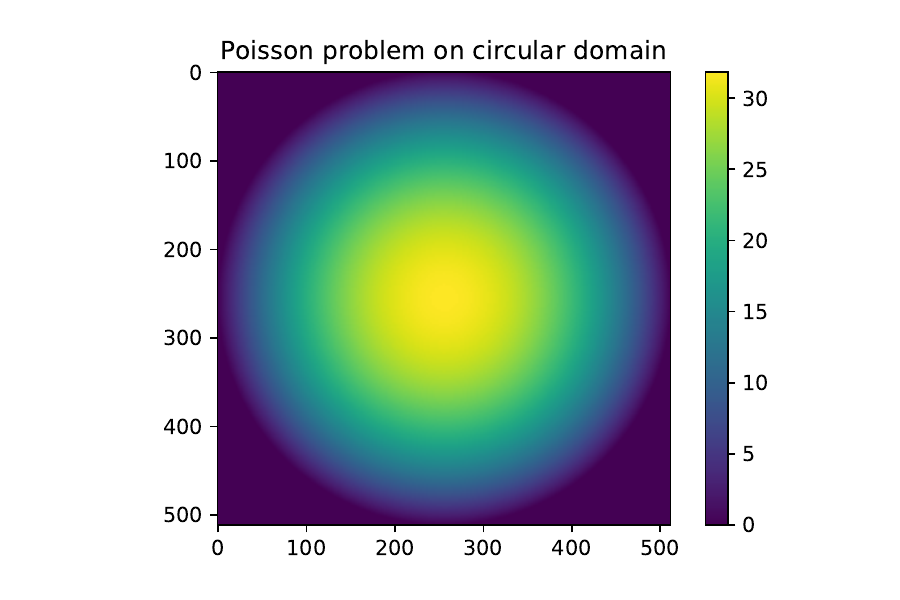} 
    \end{minipage}
    \caption{Here we demonstrate solving the obtained linear system using the projection method on an L-shaped and circular domain, two examples where the problem cannot be expressed as the direct sum over two one-dimensional domains. The linear system is constructed by first constructing the two-dimensional Laplacian matrix with periodic boundary conditions, and then multiplying on the left by a diagonal projection matrix containing 1 in entries corresponding to interior grid points. We then add to this matrix a projector onto the boundary to enforce a homogeneous Dirichlet boundary condition. We then add a projector onto the exterior portion, so that obtained linear system is not singular, placing zeros in the corresponding locations of the right hand side vector. For the L-shaped domain we a product of $\sin$ and $\cos$, and use a constant for the circular domain to generate the rhs vector. The code generating these plots can be found on Github \cite{KharaziCode}. }    
    \label{fig:proj-method-for-nonTP-domain}

\end{figure}

Now, we show how more general boundary conditions can be realized using this oracle and a phase kickback. Consider some arbitrary, suitably normalized initial starting state,
\begin{equation}
    \ket{x} = \sum_{i \in [N^d]}x_i \ket{i},
\end{equation}
and append an additional register of 2 ancilla qubits $\ket{b} \rightarrow \ket{b}\ket{00}$. Then, applying the oracle $\textsc{dom}$ to $\ket{b}\ket{00}$ we find
\begin{equation}
    \textsc{dom}\ket{x}\ket{00} = \ket{x_{int}}\ket{00} + \ket{x_{\partial}}\ket{01} + \ket{x_{ext}}\ket{10}.
\end{equation}
To obtain a block encoding of the operator on the interior, we will use a phase kickback oracle to cancel out matrix elements on the boundary and exterior. Since the boundary and exterior nodes are marked with a set bit, we can implement the desired sign flip by checking the parity of the input string. We will assume access to the function $f(b_0,b_1) = b_0 \oplus b_1$ which we implement via the unitary
\begin{equation}
    U_f \ket{b_0,b_1} = (-1)^{f(b_0,b_1)}\ket{b_0,b_1},
\end{equation}
where $\ket{\perp} \in \text{span}\{\ket{01},\ket{10}\}$. Now, assuming access to the block encoding of the interior $L$ operator with periodic boundary conditions on $[0,L]^d$, we can implement the operator restricted to the interior points. 

\begin{figure}
    \centering
    \scalebox{0.8}{
    \begin{quantikz}
        \lstick{$\ket{0}$}&\gate{H}
        \gategroup[7,steps=9,style={dashed,rounded
        corners,fill=blue!20, inner
        xsep=2pt},background,label style={label
        position=below,anchor=north,yshift=-0.2cm}]{{$(2(\lambda + \beta), m+l)$-BE$({\mathcal{L}})$}}
        &&&&&\octrl{3}&&\ctrl{3}&\gate{H}&\bra{0} \\
        \lstick{$\ket{0}$}&\gate{H}&\octrl{1}&&\octrl{1}& &\ctrl{1}&&\ctrl{1}&\gate{H}&\bra{0}\\
        \lstick{$\ket{0}$}&\gate{H}&\octrl{1}&&\ctrl{2}& &&&&\gate{H}&\bra{0}\\
        \lstick{$\ket{00}$}&\gate[2]{\textsc{dom}}&\vqw{1}&&\gate{U_f}&&\vqw{1}&&\gate{U_{f'}}\vqw{1}&\gate[2]{\textsc{dom}^\dagger}&\bra{00}\\
        \lstick{$\ket{x}$}&&\gate[2]{U_L}&&\gate[2]{U_L}&&\gate[3]{U_B}&&\gate[3]{U_B}&&(\frac{1}{2\lambda}\Pi_{\Omega}\mathcal{L} + \frac{1}{2\beta}\Pi_{\partial\Omega}\mathcal{B})\ket{x} \\
        \lstick{$\ket{0}$}&& &\bra{0}\ket{0}&&\bra{0}\\
        \lstick{$\ket{0}$}&&&&&&&\bra{0}\ket{0}&&\bra{0}
    \end{quantikz}
    }
    \caption{Quantum circuit to implement $(2(\lambda + \beta), m+l)-$block encoding of a finite difference approximation to the differential operator with arbitrary boundary conditions. $\textsc{dom}$ is an oracle which marks each point $\ket{x}$ in the $[N^d]$ with a two-bit label $\ket{x}\ket{b_0,b_1}$, with $'00'$ indicating interior points, $'01'$ {indicating} boundary points and $'10'$ {indicating} exterior points. Here, $U_L$ is a block encoding of the differential operator defined with periodic boundary conditions on $[0,L]^d$ in matrix form and $U_B$ is a block encoding of the Dirichlet, Neumann, or Robin boundary condition in matrix form.  $U_f$ implements the unitary transformation $\ket{b_0,b_1} \rightarrow (-1)^{b_0\oplus b_1}\ket{b_0,b_1}$ and $U_{f'}$ implements $\ket{b_0,b_1}\rightarrow (-1)^{b_0\oplus 1 \oplus b_1}\ket{b_0,b_1}$, to cause the summed operators to be projected onto the appropriate boundary modes. Because $\textsc{dom}$ is not restricted to marking points that correspond to a domain with orthogonal boundary faces, this procedure can be used to implement a broad family of boundary conditions.}
    \label{fig:enter-label}
\end{figure}

Now, consider the controlled operation
\begin{equation}
    \textsc{sel} = I\otimes U_{L}\otimes \ket{00}\bra{00} + U_{f}\otimes U_{L}\otimes \ket{01}\bra{01} + I\otimes U_{B}\otimes \ket{10}\bra{10} + U_{f'}\otimes U_{B}\otimes \ket{11}\bra{11}
\end{equation}
where $U_{L}$ is the block encoding of the operator on the interior and $U_{B}$ is the block encoding of the operator given on the boundary.
{Then}, we have that 
\begin{equation}
    \textsc{sel}\cdot\textsc{dom}\ket{b}\ket{00} = \sum_{i \in \Omega} \frac{b_i}{\lambda}{L}\ket{i}\ket{0}(\ket{0} + \ket{\perp}) + \sum_{i \in \partial{\Omega}}\frac{b_i}{\beta} {B}\ket{i}\ket{1}(\ket{0} + \ket{\perp}) + \sum_{i \notin \Omega \cup \partial \Omega}b_i \ket{i}\ket{2},
\end{equation}
which we write in the simplified form
\begin{equation}
    (H\otimes H \otimes \textsc{dom}^\dagger)\cdot\textsc{sel}\cdot(H\otimes H\otimes \textsc{dom})\ket{b}\ket{00} =  \sum_{i \in \Omega} b_i{L}\ket{i}\ket{0}\ket{0} + \sum_{i \in \partial{\Omega}} b_i{B}\ket{i}\ket{1}\ket{0} + \sum_{i \notin \Omega \cup \partial \Omega} b_i\ket{i}\ket{2} +\ket{\perp},
\end{equation}
where the $\ket{0}$ is the flag to indicate the successful application of the block encoding unitary. 

Writing the ancilla states in binary {notation,} we notice that the states indicating the boundary and exterior nodes correspond to $\ket{01}$ and {to} $\ket{10}$ respectively. We will assume access to the function $f(b_0,b_1) = b_0 \oplus b_1$, which we implement via the phase kickback oracle
\begin{equation}
    U_f \ket{b_0,b_1} = (-1)^{f(b_0,b_1)}\ket{b_0,b_1}.
\end{equation}
Then, to obtain a block encoding of the interior portion of the operator, we can {undertake} a linear combination of block encodings $I\otimes U_{L}$ and $U_{f}\otimes U_{L}$, with $\textsc{prep} = \frac{1}{\sqrt{2}}(\ket{0}+\ket{1})$ and $\textsc{sel} = (I\otimes U_L)\otimes\ket{0}\bra{0} + (U_f\otimes U_L) \otimes \ket{1}\bra{1}$. 
For an arbitrary state $\ket{x}$, we may {then} write
\begin{align*}
    \ket{x}\ket{00}\ket{0} &\overset{I\otimes I \otimes H}{\rightarrow} \frac{1}{\sqrt{2}}\ket{x}\ket{00}(\ket{0} + \ket{1})\\
    &\overset{\textsc{dom}}{\rightarrow} \frac{1}{\sqrt{2}}\frac{1}{N_x}\ket{x_{\Omega}}\ket{00} + \ket{x_\partial}\ket{01} + \ket{x_{out}}\ket{10})(\ket{0} + \ket{1})\\
    &\overset{\textsc{sel}}{\rightarrow} \frac{1}{\sqrt{2}}\frac{1}{N_x}U_{L}(\ket{x_{\Omega}}\ket{00} + \ket{x_\partial}\ket{01} + \ket{x_{out}}\ket{10})\ket{0} +U_L(\ket{x_{\Omega}}\ket{00} - \ket{x_\partial}\ket{01} - \ket{x_{out}}\ket{10})\ket{1}\\
    &\overset{I\otimes I \otimes H}{\rightarrow} \frac{1}{2}U_{L}\ket{x_{\Omega}}\ket{00}\ket{0} +\ket{\perp}.
\end{align*}
Therefore, these operations implement a $(2\lambda, m+3)$ block encoding of the interior operator. We note that this operator only implements the differential operator on the interior nodes and projects everything else to zero.  We now need to add the boundary operators.

This can be done similarly, replacing $U_f$ with $U_{f'}$ (which can be implemented by conjugating the control with $X$-gates on the least significant bit) which performs $U_f\ket{b_0,b_1} = (-1)^{f(b_0\oplus 1, b_1)}\ket{b_0,b_1}$, where the {effect of the} $\oplus1$ on $b_0$ bit is to place a minus sign on the nodes on {both} the interior and the exterior, while leaving {the nodes} on the boundary unaffected. If we consider a generic boundary condition and block encode the corresponding differential operator $U_B$, then we can implement the operator for just the boundary via the linear combination of block encodings $\frac{1}{2}\left( I\otimes U_{B} + U_{f'}\otimes U_{B}\right)$, analogously to the interior operator. Therefore, the interior and boundary operator can {now} be implemented via the linear combination of block encodings {according to}
\begin{equation}
    U_\mathcal{L} \propto \frac{1}{2\lambda}( U_L\otimes I + U_L \otimes U_f ) + \frac{1}{2\beta}(U_B\otimes I + U_{B}\otimes U_{f'}).
    \label{eq:lcu dom oracle}
\end{equation}
This formula is analogous to and generalizes those obtained in the above sections, but the implementation of the reflection operation in this case required an additional register and oracle to label each grid point. This procedure allows one to begin with the efficient block encodings of the periodic operators {that} we found in the above section, and {to} project them onto the corresponding subsets of $[0,L]^d$ so that in total the obtained operator acts as the desired differential equation and boundary conditions on the ``physical" domain $\Omega \cup \partial \Omega$, while acting trivially on $\Omega_L^d \backslash\{\Omega \cup \partial \Omega\}$. Finally, we note that if the initial state $\ket{b}$ has no support on $[0,L]^d\backslash\Omega$, which would correspond to a physically relevant starting state, then the success probability of implementing this block encoding can be as large as $O(1/d)$.

\section{\label{sec:GenElliptic} Block encoding elliptic operator with multi-particle potential}
To block encode a general elliptic operator {for a many-body system}, we need to be able to evaluate gradients of potential functions. {In particular, we wish} to construct a block encoding of the operator $\nabla V \cdot \nabla$. This term is sometimes referred to as the ``convective term" and describes the transport of quantities such as mass under the influence of a force $\nabla V$. {Such operators} arise in dynamical simulations {described by the Fokker-Planck and Smoluchowski equations, and steady-state formulations of these, i.e., the backward and forward Kolmogorov equations, as well as in the closely related Black-Scholes equation.}  We focus {here} on the particular case where the potential function $V$ is given as the sum over two-particle Lennard-Jones potentials
\begin{equation}
    \label{eq:Lennard Jones}
    V_{LJ}(r_{ij}) = 4\varepsilon\left(\left(\frac{\sigma}{r_{ij}}\right)^{12} - \left(\frac{\sigma}{r_{ij}}\right)^6\right).
\end{equation}
{Here} $\varepsilon$ is the depth of the potential, $\sigma$ is the particle size, and $r_{ij} = \norm{\mathbf{x}^i - \mathbf{x}^j}$ is the 2-norm distance between particles $i$ and $j$ in $[\eta]$. We choose this potential {because} it incorporates the features of a wide range of potentials, in that it requires the evaluation of inverse powers of the interparticle distance $r$ and has both attractive and repulsive regimes. The Lennard-Jones potential is a very common effective potential used in molecular dynamics simulations, and is therefore a natural setting {for the} study {of quantum algorithms for} many-body simulations. However, the constructions we present in this section could be applied to many other multi-particle potential functions in a natural way.

This section will proceed as follows. First we will briefly review a block encoding of the discrete position operator, i.e. the operator which returns the location of a grid point
\begin{equation}
    X\ket{i} = i\ket{i}.
\end{equation}
Then, we show how to compute the difference of position of two particles using, i.e.
\begin{equation}
    R\ket{i}\ket{j} = (i-j)\ket{i}\ket{j}.
\end{equation}
Then, we show how to compute the squared distance between two particles in $d$-dimensions $\ket{\bs{\alpha}}$, $\ket{\bs{\beta}}$, i.e. 
\begin{equation}
    R^2\ket{\bs{\alpha}}\ket{\bs{\beta}} = \sum_{i\in[d]}\left(\alpha_i - \beta_i\right)^2 \ket{\bs{\alpha}}\ket{\bs{\beta}}
\end{equation}
which is the squared Euclidean distance between the two particles. Once this diagonal operator is block encoded, we can then perform a polynomial transformation of these diagonal matrices using QSP to implement a polynomial approximation to multiplication operator $V$. Then, we show a technique to implement a block encoding of the term $\nabla V \cdot \nabla$ combining the techniques of this section with the previous section.

\subsection{Block encoding position operators}
\label{sub:BE pos ops}
Our first goal is to block encode the operator 
\begin{equation}
    X \equiv \sum_{i=0}^{N-1} x(i) \ket{i}\bra{i},
    \label{eq:1d pos}
\end{equation}
where $x(i)$ maps the point $i$ on the grid to the corresponding point $x(i)$ in real space. In the simplest case on $[0,1]$, we take $x(i) = i/(N-1)$. With a periodic boundary, we take $x(i) = \text{min}(1-i/(N-1),i/(N-1))$. Block encoding the corresponding operator is characterized by the following lemma, which can be found in \cite{mukhopadhyayQuantumSimulationFirstQuantized2023}, Appendix F Lemma 24.
\begin{lem}[LCU of diagonal position matrix \cite{mukhopadhyayQuantumSimulationFirstQuantized2023} Lemma 24]
\label{lem:LCU pos}
Let the discrete ``position operator"\\ $X = \operatorname{diag}(0, 1, \ldots, N-1)$ be an $N\times N$ matrix with $N = 2^n$. Then 
\begin{equation}
    X = \frac{N-1}{2}I - \frac{1}{2}\sum_{i=0}^{n-1}2^i Z_{(i)}
\end{equation}
where 
\begin{equation}
    Z_{(i)} = I^{\otimes (n - i - 1)}\otimes Z \otimes I^{\otimes i},
\end{equation}
gives an $(\lambda = N-1,m = \ceil{\log(n+1)})$-BE($X$), the discrete position operator using the LCU oracles
\begin{equation}
    \begin{aligned}
        \textsc{prep}_X\ket{0} &=\frac{1}{\sqrt{2}}\ket{0} +  \frac{1}{\sqrt{2(2^{n}-1)}}\sum_{i=0}^{n-1}\sqrt{2^{i}}\ket{i+1}\\
        \textsc{sel}_X &= \sum_{i=1}^{n}Z_{i}\otimes \ket{i}\bra{i}.
    \end{aligned}
\end{equation}
\label{lem: Block-encoding position operator}
\end{lem}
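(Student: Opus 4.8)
I would prove this in two stages: first verify the operator identity $X=\tfrac{N-1}{2}I-\tfrac12\sum_{i=0}^{n-1}2^i Z_{(i)}$ by a direct bit-decomposition, then recognize the right-hand side as a linear combination of $n+1$ Pauli unitaries and read off the LCU data. For the identity, write a computational basis index $k\in[N]$ in binary as $k=\sum_{i=0}^{n-1}b_i 2^i$ with $b_i\in\{0,1\}$, so that $Z_{(i)}\ket{k}=(-1)^{b_i}\ket{k}=(1-2b_i)\ket{k}$. Then, using $\sum_{i=0}^{n-1}2^i=N-1$,
\begin{equation*}
  \Big(\tfrac{N-1}{2}I-\tfrac12\sum_{i=0}^{n-1}2^i Z_{(i)}\Big)\ket{k}
  = \Big(\tfrac{N-1}{2}-\tfrac12\sum_{i=0}^{n-1}2^i(1-2b_i)\Big)\ket{k}
  = \Big(\sum_{i=0}^{n-1}2^i b_i\Big)\ket{k} = k\ket{k},
\end{equation*}
which establishes the identity as an equality of diagonal matrices.

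For the block encoding I would view this as an LCU $X=\sum_{l=0}^{n}c_l U_l$ with $U_0=I$, $U_{i+1}=Z_{(i)}$, $c_0=\tfrac{N-1}{2}$, and $c_{i+1}=-2^{i-1}$ for $i=0,\dots,n-1$. The subnormalization is the $\ell_1$-norm of the coefficients, $\lambda=\tfrac{N-1}{2}+\tfrac12\sum_{i=0}^{n-1}2^i=N-1$, matching the claim. Using $m=\ceil{\log(n+1)}$ ancilla qubits to index the $n+1$ unitaries, the standard recipe sets $\textsc{prep}_X\ket{0}=\sum_l\sqrt{|c_l|/\lambda}\,\ket{l}$; evaluating the amplitudes gives $\sqrt{|c_0|/\lambda}=1/\sqrt2$ and $\sqrt{|c_{i+1}|/\lambda}=\sqrt{2^{i}}/\sqrt{2(2^n-1)}$, which is exactly the stated $\textsc{prep}_X$, and one checks $\tfrac12+\sum_{i=0}^{n-1}\tfrac{2^i}{2(2^n-1)}=1$ for normalization. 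With $\textsc{sel}_X=\sum_l\ket{l}\bra{l}\otimes U_l$ (trivial on the $\ket{0}$ branch, a controlled $Z_{(i)}$ on the $\ket{i+1}$ branch), forming $U_X=(\textsc{prep}_X^\dagger\otimes I)\,\textsc{sel}_X\,(\textsc{prep}_X\otimes I)$ as in Fig.~\ref{fig:LCU} yields $(\bra{0}^{\otimes m}\otimes I)U_X(\ket{0}^{\otimes m}\otimes I)=\sum_l (c_l/\lambda)U_l=X/\lambda$, i.e.\ an exact $(\lambda,m)$-$\mathrm{BE}(X)$.

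The one place that needs care, and the main (albeit minor) obstacle, is the sign bookkeeping: the coefficients $c_{i+1}$ are negative, so the $-1$ phases must either be attached to the controlled Paulis in $\textsc{sel}_X$ or absorbed into the grid convention — a register-wide bit flip $\ket{k}\mapsto\ket{N-1-k}$ interchanges $X$ with $(N-1)I-X$ — and one must also track the index shift $\ket{i}\leftrightarrow\ket{i+1}$ between the coefficient list and the ancilla label. A secondary point, not needed for the $(\lambda,m)$ statement but relevant downstream, is that each controlled single-qubit $Z_{(i)}$ in $\textsc{sel}_X$ is a multiply-controlled Pauli over the $m$-qubit index register and hence costs $O(n)$ Toffoli-or-simpler gates to compile.
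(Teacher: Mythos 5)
Your proof is correct. Note, however, that the paper does not actually prove this statement: it imports it verbatim as Lemma 24 of Ref.~\cite{mukhopadhyayQuantumSimulationFirstQuantized2023} and offers no derivation, so there is no in-paper argument to compare against. Your two-stage verification — the bit-decomposition $Z_{(i)}\ket{k}=(1-2b_i)\ket{k}$ giving $\tfrac{N-1}{2}-\tfrac12\sum_i 2^i(1-2b_i)=\sum_i 2^i b_i=k$, followed by reading off $\lambda=\tfrac{N-1}{2}+\tfrac12(2^n-1)=N-1$, the $\ceil{\log(n+1)}$-qubit index register, and the stated $\textsc{prep}_X$ amplitudes — is exactly the standard LCU recipe the paper uses elsewhere (cf.\ Theorem~\ref{thm:PeriodicLaplacian}), and it checks out. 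Your flag about sign bookkeeping is well taken and in fact identifies a small imprecision in the lemma as stated: the coefficients of the $Z_{(i)}$ terms are negative, yet the displayed $\textsc{sel}_X=\sum_{i=1}^n Z_i\otimes\ket{i}\bra{i}$ carries no explicit $-1$ phases and omits the identity branch on $\ket{0}\bra{0}$, so these must be understood as absorbed into the select (the paper handles the analogous issue for the Laplacian by writing $\mathrm{sign}(r_j)S^j$ explicitly). One minor quibble on your closing aside: each controlled $Z_{(i)}$ is controlled on the $m=\ceil{\log(n+1)}$-qubit index register, so it costs $O(\log n)$ Toffolis per term rather than $O(n)$; the $O(n)$ figure the paper quotes for the position operator is the total over all $n$ select branches, not the per-branch cost. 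This does not affect the $(\lambda,m)$ claim you were asked to establish.
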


With access to the block encoding of the position operator, it is straightforward to obtain a block encoding of the difference of positions operator for one dimension, using the circuit in Fig. \ref{fig:difference-of-positions-operator}. This is a subroutine we will use many times when evaluating potentials that are defined as the symmetric distance between two particles.
\begin{figure}[h]
    \centering
    \begin{quantikz}
        \lstick{$\ket{0}$} 
        & \gate{X} 
        \gategroup[4,steps=6,style={dashed,rounded
        corners,fill=blue!20, inner
        xsep=2pt},background,label style={label
        position=below,anchor=north,yshift=-0.2cm}]{$\left(2(N-1), \ceil{\log(n+1)}+1\right)$-BE$(X \otimes I - I \otimes X)$}
        & \gate{H} & \octrl{1} & \ctrl{1} & \gate{H} & \gate{X} & \rstick{$\bra{0}$}\\
        \lstick{$\ket{0}^{\otimes \ceil{\log(n+1)}}$}&\gate[1]{\textsc{prep}_X} & &\oslash\vqw{1} &\oslash\vqw{2}& &\gate[1]{\textsc{prep}^\dagger_X} & \rstick{$\bra{0}^{\otimes \ceil{\log(n+1)}}$} \\
        \lstick{$\ket{i}_n$} & & &\gate{Z_l} & & & & \\
        \lstick{$\ket{j}_n$} & & & &\gate{Z_l} & &\hphantomgate{} &
    \end{quantikz}
    \caption{Circuit for the block encoding of the difference of positions operator utilizing$\ceil{\log(n+1)}+1$ ancilla qubits and the LCU subroutines from the single particle positions operators. Controlled on the state of the $\ceil{\log(n+1)}+1$ ancilla register, we apply a single qubit Pauli $Z$ operator to the $l$th qubit of the $n$ qubit registers encoding the positions in the grid. This block encoding applies the operation $\ket{i}\ket{j} \rightarrow \frac{(i-j)}{2(N-1)}\ket{i}\ket{j}$ whenever all of the ancilla qubits are in the $\ket{0}$ state.}
    \label{fig:difference-of-positions-operator}
\end{figure}

We can construct block encodings of the squared distance $\left(R_{k}^{i,j}\right)^2$, between particles $i$ and $j$ in dimension $k$, using the block encodings of the difference of position operators. Since by definition the operator $X_i^j$ is just $X$ on every register except the {register} encoding the {$i$th's particle's $j$th coordinate, and acts} trivially elsewhere,  {it is mathematically equivalent to} a block encoding of $X$ applied only to that register. 
It is therefore sufficient to obtain a block encoding of $X_i^j$, given access to a block encoding of $X$. Using an additional ancilla qubit to form the LCU
\begin{equation}
    U_{X_k^i-X_k^j} = \frac{1}{2}\left(U_{X_k^{i}}-U_{X_k^{j}}\right),
\end{equation}
which is implemented as an $(2(N-1), \ceil{\log(n+1)}+1)$ block encoding, see e.g. Fig. \ref{fig:difference-of-positions-operator}.

An immediate corollary of this is the block encoding of the squared distance between two particles in $d$ dimensions. We denote the matrix $\mathbf{R}^2_{i,j}$ to mean the finite-dimensional operator which acts on the $2nd$ qubit registers containing particles $i$ and $j$ which performs the following operation
\[
    \mathbf{R}^2_{i,j}\ket{\bs{\alpha}^i}\ket{\bs{\alpha}^j} = \sum_{k \in [d]}\left(\alpha^i_k - \alpha^j_k\right)^2\ket{\bs{\alpha}^i}\ket{\bs{\alpha}^j},
\]
for every basis state of the $2nd$ qubits. This matrix can be block encoded by a linear combination of a product of block encodings 
\begin{equation}
    \frac{1}{4}\sum_{k=0}^{d-1}\left(U_{X^i_{k}}-U_{X^j_{k}}\right)^2
\end{equation}
where $U_{X^i_{k}}$ is the block encoding of the position operator that acts non-trivially on the $i$th particle's $k$th register. This corresponds to first performing the linear combination of block encodings $U_{X^i_{k}}-U_{X^j_{k}}$, then forming the product of block encodings, then summing this up for each dimension. The cost to form this is characterized by the following theorem.

\begin{figure}[h]
    \centering
    \begin{quantikz}
    \lstick{$\ket{0}^{\otimes2}$}&\gate{S^2}
    \gategroup[3,steps=4,style={dashed,rounded
        corners,fill=blue!20, inner
        xsep=2pt},background,label style={label
        position=below,anchor=north,yshift=-0.2cm}]{{$(4(N-1)^2,\ceil{\log(n+1)} + 3)$-BE$\left(\left(R_k^{i,j}\right)^2\right)$}}
        &\gate{S^{-1}}\vqw{1}& &\gate{S^{-1}}\vqw{1}&\rstick{$\bra{0}^{\otimes 2}$}\\
    \lstick{$\ket{0}^{\otimes \left(\ceil{\log(n+1)}+1\right)}$}&\gate[2]{U_{X_k^i-X_k^j}}&\octrl{0}&\gate[2]{U_{X_k^i-X_k^j}}&\octrl{0}&\rstick{$\bra{0}^{\otimes(\ceil{\log(n+1)}+1)}$}\\
    \lstick{$\ket{b}$}& & & &&\rstick{$\frac{\left(R_{k}^{i,j}\right)^2}{4(N-1)^2}\ket{b}$}
    \end{quantikz}
    \caption{Quantum circuit for the product of two block encodings applied to the difference-of-positions operator. Post-selecting on the ancilla qubits being measured in the all zero state, the squared difference of position operator for particles $i$ and $j$ in direction $k \in [d]$ is applied to the input state. Using an LCU over $k\in[d]$ of block encodings of $(R_k^{i,j})^2$, we obtain the squared Euclidean distance in $d$ dimensions.}
    \label{fig:r^2 op}
\end{figure}

\begin{cor}[Cost to block encode $\mathbf{R}_{i,j}^2$ operator]
\label{cor:BE R^2}
There exists a block encoding of the squared Euclidean distance between every pair of points on two grids of $N^d$ nodes, with subnormalization $\lambda = 4d(N-1)^2$ using $m = \ceil{\log(n+1)}+ \ceil{\log(d)}+3$ ancilla qubits and this block encoding can be implemented with $O(dn)$ Toffoli or simpler gates. 
\end{cor}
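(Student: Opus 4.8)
The plan is to assemble the block encoding of $\mathbf{R}_{i,j}^2$ layer by layer, tracking the subnormalization factor, the ancilla count, and the Toffoli cost at each stage, using the block encodings already established in this section as building blocks. First, I would start from Lemma \ref{lem: Block-encoding position operator}, which gives a $(\lambda = N-1, m = \ceil{\log(n+1)})$-BE of the single-coordinate position operator $X$; since $X_k^i$ acts as $X$ on one $n$-qubit register and trivially elsewhere, the same circuit (routed to the appropriate register) is a block encoding of $X_k^i$ with identical parameters. Next, I would take the linear combination of block encodings $U_{X_k^i - X_k^j} = \tfrac12(U_{X_k^i} - U_{X_k^j})$: one Hadamard-based LCU over two terms costs one additional ancilla qubit and only $O(1)$ extra gates, yielding a $(2(N-1), \ceil{\log(n+1)}+1)$-BE of $X_k^i - X_k^j$, as depicted in Fig.~\ref{fig:difference-of-positions-operator}. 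Since the two constituent position operators $U_{X_k^i}$ and $U_{X_k^j}$ can share the same $\ceil{\log(n+1)}$ ancilla register (applied controlled on the new LCU qubit), the ancilla count does not double.

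Then I would form the product of block encodings $\bigl(U_{X_k^i - X_k^j}\bigr)^2$: composing two block encodings multiplies the subnormalization factors (giving $4(N-1)^2$) and, because the diagonal difference operator is Hermitian and the two copies act on the same system register, one can reuse the ancilla register between the two applications (applying each copy controlled appropriately, as in Fig.~\ref{fig:r^2 op}), so the product costs at most $O(1)$ additional ancilla qubits — I would use $\ceil{\log(n+1)}+3$ total here (the $+1$ from the difference LCU plus a small constant for the product flag bookkeeping). This gives a $(4(N-1)^2, \ceil{\log(n+1)}+3)$-BE of $(R_k^{i,j})^2$. Finally, I would take the LCU over $k \in [d]$ of these $d$ squared-coordinate-difference block encodings, $\tfrac14 \sum_{k=0}^{d-1}(U_{X_k^i - X_k^j})^2$: a uniform-superposition $\textsc{prep}$ over $d$ indices and a $\textsc{sel}$ that routes the product circuit to the $k$th pair of coordinate registers. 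This multiplies the subnormalization by $d$, giving $\lambda = 4d(N-1)^2$, and adds $\ceil{\log(d)}$ ancilla qubits, for a total of $m = \ceil{\log(n+1)} + \ceil{\log(d)} + 3$, exactly matching the claim.

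For the gate count, I would argue as follows. The base position operator block encoding from Lemma \ref{lem: Block-encoding position operator} consists of $\textsc{prep}_X$ on $O(\log n)$ qubits — costing $\poly\log(n)$ gates, subdominant — and a $\textsc{sel}_X$ made of $n$ controlled single-qubit $Z$ gates, each of which compiles to $O(n)$ Toffoli or simpler gates, so $O(n)$ total per coordinate operator (absorbing the $\widetilde{O}$ subtlety; one may also use a more careful $O(n)$-gate select). Each of the $O(1)$ surrounding layers — the difference LCU, the product composition — adds only Hadamards, $X$-gates, and $O(1)$ multi-controlled operations on the ancilla registers, contributing $O(n)$ gates; routing the select over $d$ dimensions adds a multiplexed control that, because the $d$ targets are on disjoint registers and a single product circuit can be applied controlled on the index, costs $O(\log d) + O(n)$ gates rather than $O(dn)$. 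Summing, the dominant contribution is $O(n)$ from the controlled-$Z$ strings in the position operators, giving a total of $O(dn)$ Toffoli or simpler gates — matching the statement. The main obstacle I anticipate is the ancilla-reuse argument in the product step: one must verify that composing the two copies of $U_{X_k^i - X_k^j}$ with shared ancillas genuinely yields a block encoding of the \emph{square} (and not some off-diagonal artifact) — this relies on the operator being diagonal and Hermitian so that the intermediate ancilla restoration is exact, and on correctly flagging success across both the difference-LCU qubit and the product qubits, which is exactly what the $S^2$ and controlled-$S^{-1}$ structure in Fig.~\ref{fig:r^2 op} is arranged to handle.
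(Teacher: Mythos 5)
Your proposal is correct and follows essentially the same route as the paper: block encode the single-coordinate position operator via Lemma~\ref{lem: Block-encoding position operator}, form the difference by a two-term LCU, square it by a product of block encodings with shared ancillas (Fig.~\ref{fig:r^2 op}), and sum over $k\in[d]$ with a uniform-superposition LCU, yielding exactly the stated $\lambda$, $m$, and $O(dn)$ gate count. The only minor quibble is your claim that the select over $d$ dimensions costs $O(\log d)+O(n)$ rather than $O(dn)$ — the paper simply counts $d$ controlled applications for $O(dn)$ total — but since both accountings land within the claimed $O(dn)$ bound, this does not affect the result.
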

\begin{proof}
    The cost to block encode this operator can be obtained by combining the costs to construct the circuits in figures \ref{fig:difference-of-positions-operator},\ref{fig:r^2 op}, and \ref{fig:d-dim r^2}. The circuit in Fig. \ref{fig:difference-of-positions-operator} requires $O(n)$ Toffoli gates to implement the individual position operators, which applied in a controlled manner approximately doubles the number of gates, for a total of $O(4n)$ Toffoli gates. The circuit in Fig. \ref{fig:r^2 op} applies the circuit in Fig. \ref{fig:difference-of-positions-operator} twice, for a total of $O(8n)$ Toffoli gates. These are then applied in a controlled manner $d$ times to construct the circuit in Fig. \ref{fig:d-dim r^2} which would therefore require $O(16dn)$ Toffoli gates to construct the $d$-dimensional squared Euclidean distance operator. Since we can reuse the ancilla qubits involved in the block encodings of the individual terms in the positions operators, the circuit in Fig. \ref{fig:d-dim r^2} only requires an additional $\ceil{\log(d)}$ ancilla qubits over that of Fig. \ref{fig:r^2 op}, for a total of $\ceil{\log(n+1)} + \ceil{\log(d)} + 3$ ancilla qubits. 
\end{proof}

\begin{figure}[h]
    \centering
    \begin{quantikz}
    \lstick{$\ket{0}^{\otimes \ceil{\log(d)}}$}&\gate{D(d)}
    \gategroup[3,steps=4,style={dashed,rounded
        corners,fill=blue!20, inner
        xsep=2pt},background,label style={label
        position=below,anchor=north,yshift=-0.2cm}]{{$(4(N-1)^2,\ceil{\log(n+1)} + \ceil{\log(d)}+3)$-BE$\left(\left(\mathbf{R}_{ij}^2\right)\right)$}}
        &\oslash\vqw{1} & &\gate{D^\dagger(d)}&\rstick{$\bra{0}^{\otimes\ceil{\log(d)}}$}\\
    \lstick{$\ket{0}^{\otimes \left(\ceil{\log(n+1)}+3\right)}$}&&\gate[2]{\left(R_k^{i,j}\right)^2}&&&\rstick{$\bra{0}^{\otimes(\ceil{\log(n+1)}+3)}$}\\
    \lstick{$\ket{b}$}&\qwbundle{sys} & & &&\rstick{$\frac{\mathbf{R}_{ij}^2}{4d(N-1)^2}\ket{b}$}
    \end{quantikz}
    \caption{Quantum circuit for the $d$-dimensional distance of positions operator between two particles $i$ and $j$. The top ancilla register prepares the uniform superposition over $d$ states using the Diffusion operator $D(d)$. Controlled on the state of this register, we apply the one-dimensional squared difference of position operator on the system register. Then, upon measuring the all-zero state for all of the ancilla qubits, we will have applied the $d$-dimensional Euclidean distance operator between particles $i$ and $j$ to the system register. Overall, this results in an $\alpha = 4d(N-1)^2$, $m = \ceil{\log(n+1)} + \ceil{\log(d)} + 3$ block encoding of $R$ in the   product of two block encodings applied to the difference-of-positions operator. Post-selecting on the ancilla qubits being measured in the all zero state, the squared difference of position operator for particles $i$ and $j$ in direction $k \in [d]$ is applied to the input state. }
    \label{fig:d-dim r^2}
\end{figure}

\subsection{Block encoding potential function}
\label{subsec:BE potential fn}
The operator we block encoded in the previous section is a diagonal matrix which encodes the squared Euclidean distance between two particles occupying positions in the $d$-dimensional grid. Combining the block encoding of corollary \ref{cor:BE R^2} with a polynomial approximation to the Lennard-Jones $3$-$6$ potential, we can use QSP to operate on this diagonal matrix to construct a polynomial approximation to the Lennard-Jones $6$-$12$ potential. Generally, our goal is to implement a polynomial transformation to the block encoding of the distance matrix to approximate the Lennard-Jones potential,
\begin{equation}
    \mathbf{R}^2_{ij} \overset{\text{poly}}{\longrightarrow} \widetilde{V}(\mathbf{R}^2_{ij}),
\end{equation}
where poly is some polynomial approximation to $\frac{1}{r^6}-\frac{1}{r^3}$ we call $\widetilde{V}$. $\widetilde{V}$ is a multiplication operator and is diagonal in the standard basis, 
\begin{equation}
    \widetilde{V}(\mathbf{R}^2_{ij}) = \sum_{\bs{\alpha},\bs{\beta} \in [N^d]} \widetilde{V}\left(\norm{\bs{\alpha} - \bs{\beta}}^2\right)\ket{\bs{\alpha}}_i\ket{\bs{\beta}}_j\bra{\bs{\alpha}}_i\bra{\bs{\beta}}_j.
\end{equation}

\begin{figure}[h]
    \centering
    \includegraphics[width=0.6\textwidth]{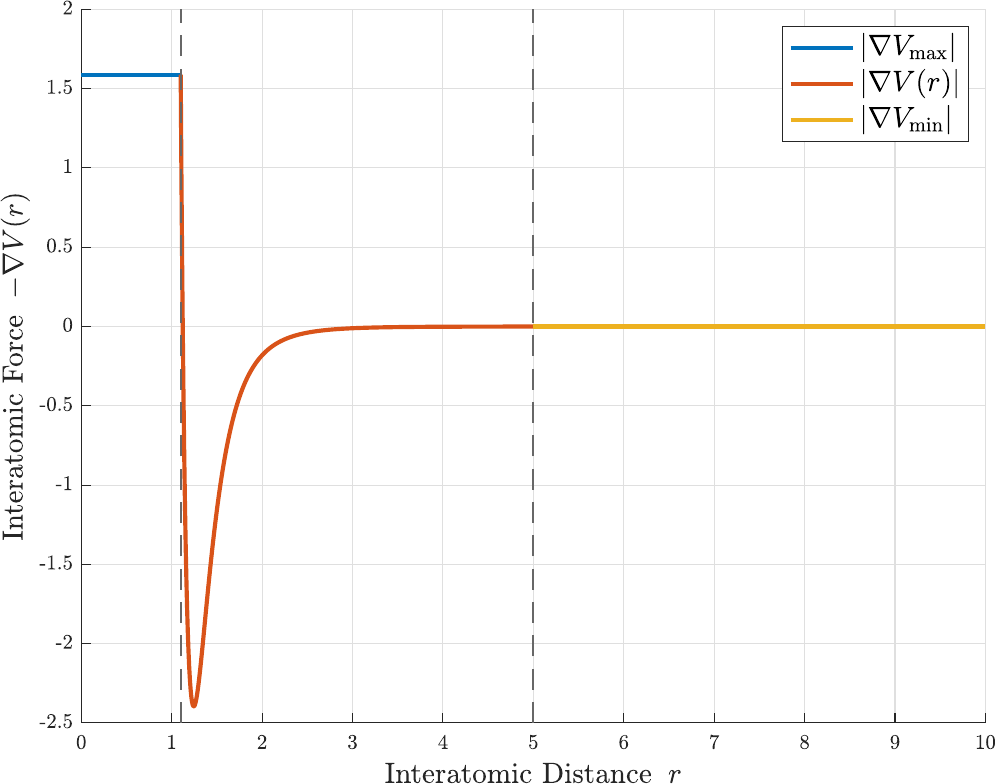}
    \caption{Depiction of the gradient of the piecewise Lennard-Jones potential in $1d$. The dashed lines demarcate the interval where we compute the polynomial approximation to the potential $V(r)$, outside of which we continuously extend as a constant.}
    \label{fig:piecewise-LJ}
\end{figure}

Using the QSP subroutine on the block encodings for the squared distance, we can obtain an approximation to the potential operator as a function of the squared distance. For inverse power law potentials, the following lemma characterizes the convergence in terms of polynomial degree and a cut-off parameter $\delta>0$ used to make the inverse power function well-defined near zero.
\begin{lem}[Convergence rate for polynomial approximations to inverse power law potentials (Ref. \cite{gilyenQuantumSingularValue2019} Corollary 67)]
    Let $\delta, \epsilon \in (0,1/2], c>0$ and $f(x) = \frac{\delta^c}{2}x^{-c}$, then there exist even/odd polynomials $P, P' \in \mathbb{R}[x]$ such that $||P-f||_{[\delta, 1]} \leq \epsilon$, $||P||_{[-1,1]} \leq 1$, and similarly $||P'-f||_{[\delta, 1]}\leq \epsilon$ and $||P'||_{[-1,1]} \leq 1$, moreover the degree of the polynomials are $O\left(\frac{\text{max}[1,c]}{\delta}\log(1/\epsilon)\right)$.
    \label{lem: convRate for invPow}
\end{lem}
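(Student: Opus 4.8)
Up to the explicit constant $\tfrac{\delta^c}{2}$ this is a restatement of the polynomial approximations to negative powers in Ref.~\cite{gilyenQuantumSingularValue2019}, so in practice one cites Corollary~67 there; here I sketch how to assemble the approximant. The key observation is that $f(x)=\tfrac{\delta^c}{2}x^{-c}$ maps $[\delta,1]$ into $[\tfrac{\delta^c}{2},\tfrac12]$, so $f$ has sup-norm only $\tfrac12$ on its domain while the requirement is $\norm{P}_{[-1,1]}\le 1$ --- a factor of two of slack that is exactly what absorbs a mild overshoot of $P$ on $[-1,1]\setminus[\delta,1]$. It therefore suffices to produce, for the even case, an even polynomial $\hat P$ of degree $O\!\big(\tfrac{\max[1,c]}{\delta}\log(1/\epsilon)\big)$ with $\norm{\hat P-x^{-c}}_{[\delta,1]}\le\delta^{-c}\epsilon$ and $\norm{\hat P}_{[-1,1]}\le 2\delta^{-c}$, and then to set $P=\tfrac{\delta^c}{2}\hat P$ (rescaled by $1-O(\epsilon)$ to make the bound $\le 1$ exact); the odd polynomial $P'$ is obtained by the same recipe applied to $\mathrm{sgn}(x)\,|x|^{-c}$ in place of $|x|^{-c}$ below.

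The plan for $\hat P$ is a smooth-extension-plus-truncated-Chebyshev-expansion argument, in the same spirit as the mollification used in Sec.~\ref{subsec:periodic extension}. I would extend $x^{-c}$ from $[\delta,1]$ to an even function $\tilde g$ on $[-1,1]$ by putting $\tilde g(x)=|x|^{-c}$ on $[-1,-\delta]$ and interpolating the gap $(-\delta,\delta)$ by a smooth bridge matching the full jet of $|x|^{-c}$ at $x=\pm\delta$; such a bridge exists by a Borel-lemma / mollification construction with mollifier scale $o(\delta)$, so that $[\delta,1]$ and $[-1,-\delta]$ are left untouched. Then $\tilde g$ is even, $C^\infty$, and bounded by $2\delta^{-c}$ on $[-1,1]$ --- the bridge can overshoot the value $\delta^{-c}$ only by $O(\delta\cdot\delta^{-c-1})=O(\delta^{-c})$, since it must realise a slope of order $\delta^{-c-1}$ across a width of order $\delta$ --- and its derivatives obey Gevrey-type bounds $\norm{\tilde g^{(k)}}_{[-1,1]}\lesssim\delta^{-c}\,(C/\delta)^{k}\,(\max[c,k])^{k}\,k!$, inherited from $(x^{-c})^{(k)}=(-1)^{k}\tfrac{\Gamma(c+k)}{\Gamma(c)}x^{-c-k}$ on $[\delta,1]$ and from the mollifier's derivatives on the bridge.

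Finally I would take $\hat P$ to be the degree-$n$ truncated Chebyshev expansion of $\tilde g$: it is automatically even (only the $T_{2k}$ occur), it coincides on $[\delta,1]$ with a good approximation of $x^{-c}$ since $\tilde g$ does there, and a Jackson-type smoothness estimate --- one carrying a $1/m!$, so that the $k!$ in the derivative bound cancels --- gives $\norm{\hat P-\tilde g}_{[-1,1]}\lesssim\delta^{-c}\big(\tfrac{C'\max[c,m]}{\delta n}\big)^{m}$ for every $m$; choosing $m\asymp\log(1/\epsilon)$ drives the right-hand side below $\delta^{-c}\epsilon$ as soon as $n=O\!\big(\tfrac{\max[1,c]}{\delta}\log(1/\epsilon)\big)$, and then $\norm{\hat P}_{[-1,1]}\le\norm{\tilde g}_{[-1,1]}+\delta^{-c}\epsilon\le 2\delta^{-c}(1+\epsilon)$, as required. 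The main obstacle, as is typical for these negative-power approximations, is the middle step: arranging a $C^\infty$ bridge whose derivatives grow no faster than the Gevrey-$2$ rate $(C/\delta)^{k}(k!)^{2}$ --- faster growth degrades the Chebyshev rate and loses the $1/\delta$ scaling --- while simultaneously keeping $\norm{\tilde g}_{[-1,1]}\le 2\delta^{-c}$ and matching every derivative of $|x|^{-c}$ at $x=\pm\delta$; threading this, together with tracking the absolute constants through the final rescaling so as to land exactly on $\tfrac{\delta^c}{2}x^{-c}$, is what makes citing Corollary~67 of Ref.~\cite{gilyenQuantumSingularValue2019} the pragmatic route in the write-up.
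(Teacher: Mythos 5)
The paper contains no proof of this lemma: it is imported verbatim, with its citation, as Corollary~67 of Ref.~\cite{gilyenQuantumSingularValue2019}, so the ``pragmatic route'' you identify is in fact the paper's entire argument. Your sketch is therefore a genuinely different, from-scratch construction, and the question is whether it actually delivers the stated bounds. As written, it does not, for two concrete reasons.

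First, the bound $\norm{\tilde g}_{[-1,1]}\le 2\delta^{-c}$ for the bridged even extension drops a factor of $c$. Matching the first derivative of $|x|^{-c}$ at $x=\pm\delta$ forces the bridge to enter $(-\delta,\delta)$ with slope $c\,\delta^{-c-1}$, so any turnaround on the natural length scale $\Theta(\delta)$ overshoots by $\Theta(c\,\delta^{-c})$, not $O(\delta^{-c})$; after the final rescaling by $\delta^c/2$ this gives $\norm{P}_{[-1,1]}=O(c)$, and renormalizing to force $\norm{P}_{[-1,1]}\le 1$ destroys the $\epsilon$-accuracy on $[\delta,1]$, while turning around faster inflates the higher derivatives and breaks the $(C/\delta)^k$ scaling. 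Second, the claimed cancellation of the $k!$ by a $1/m!$ in the Jackson estimate does not occur: in $E_n(f)\le(\pi/2)^m\norm{f^{(m)}}/\bigl(m!\tbinom{n+1}{m}\bigr)$ the $m!$ is already consumed in converting $\tbinom{n+1}{m}$ into $((n+1)/e)^m/m!$, so a bridge with Gevrey-$(1+s)$ derivative growth $(k!)^{1+s}$, $s\ge 1$ --- and $s>0$ is unavoidable, since a function agreeing with $x^{-c}$ on $[\delta,1]$ cannot be analytic and bounded on $[-1,1]$ --- yields only root-exponential convergence of the truncated Chebyshev series and hence degree $O(\delta^{-1}\log^{2}(1/\epsilon))$ rather than the stated $O(\delta^{-1}\max[1,c]\log(1/\epsilon))$. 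You flag this as ``the main obstacle'' but do not close it. The proof in Ref.~\cite{gilyenQuantumSingularValue2019} avoids both problems by never extending $x^{-c}$ across the singularity: it Taylor-expands around $x_0=1$ with radius reaching down to $\delta$ and a buffer of width $\Theta(\delta/\max[1,c])$, uses the binomial series $\sum_\ell\tbinom{c+\ell-1}{\ell}t^\ell=(1-t)^{-c}$ to show that the prefactor $\delta^c/2$ makes the weighted $1$-norm of the Taylor coefficients at most $1$, and then truncates each monomial with their low-degree Chebyshev approximation (their Corollary~66), which supplies both $\norm{P}_{[-1,1]}\le 1$ and the single factor of $\log(1/\epsilon)$. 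If you want a self-contained proof, that is the argument to reproduce; your construction, as sketched, does not establish the lemma.
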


A common choice of the cutoff is $\delta = O(1/N)$, so that the polynomial approximation rescales the Lennard-Jones potential by a factor of $O(N^{-12})$. This factor of $O(N^{-12})$ will need to be accounted for, as it does not correspond to a global scaling factor. Lemma \ref{lem: convRate for invPow} also implies that the polynomial degree to block encode the rescaled Lennard-Jones potential $\frac{\delta^{12}}{2} V_{LJ}$, is $ O\left(12 N \log(1/\epsilon)\right)$. In some scenarios it may be more natural to choose a cutoff parameter $\delta \sim \sigma/C$ where $C>1$ is a constant. Since the polynomial degree is directly related to the number of queries to the block encoding of $\mathbf{R}^2$, the cost to block encode a single term of the potential is $\widetilde{O}(N)$ queries to the block encoding of $\mathbf{R}^2$, the block encoding of $\mathbf{R}^2$ required $O(d)$ queries to the difference of position operator, which can each be implemented with $O(\log(N))$ Toffoli or simpler gates. Therefore, the total cost to implement a single term in the potential function $V_{LJ}(\mathbf{R}^2_{ij})$ to $\epsilon$ accuracy is 
\begin{equation}
    \widetilde{O}(dN\log(N)\log(1/\epsilon))
    \label{eq:BE V}
\end{equation}
Toffoli or simpler quantum gates.

Clearly the Lennard-Jones rapidly diverges as $r\downarrow 0$. Because this divergence occurs with such a large polynomial degree, it is difficult to obtain uniform and numerically stable polynomial approximations, even with a cutoff. In practice, very high degree polynomials are required. It is worthwhile to consider other avenues beyond finding a global polynomial approximation, such as using different approximating polynomials for different subsets of the function's domain. We present an algorithm that allows one to selectively apply various transformations to subspaces of the block encoded matrix, and present an algorithm to mark the relevant subspaces using the method of inequality testing.

\begin{figure}[h]
    \centering
    \scalebox{0.55}{
    \begin{quantikz}
        \lstick{$\ket{i}_n$} & \gate[3]{\comp}
        \gategroup[7,steps=18,style={dashed,rounded
        corners,fill=blue!20, inner
        xsep=2pt},background,label style={label
        position=below,anchor=north,yshift=-0.2cm}]{{$\dom$}}
        & \gate[2]{\mathrm{SWAP}} & \gate[2]{\mathrm{DIFF}} & \swap{3} & \gate[3]{\comp} & & \gate[3]{\comp^\dag} & \swap{3} & & & \swap{3} & \gate[3]{\comp} & & \gate[3]{\comp^\dag} & \swap{3} & \gate[2]{\mathrm{DIFF}^\dag} & \gate[2]{\mathrm{SWAP}} & \gate[3]{\comp^\dag} & \rstick{$\ket{i}_n$}\\
        \lstick{$\ket{j}_n$} & & & & & & & & & & & & & & & & & & & \rstick{$\ket{j}_n$} \\
        \lstick{$\ket{0}$} & & \octrl{-1} & \swap{2} & & & \octrl{4} & & & & & & & \octrl{3} & & & \swap{2} & \octrl{-1} & & \rstick{$\ket{0}$} \\
        \lstick{$\ket{0}_n$} & \gate{S^{(r_\text{min})}} & & & \targX{} & & & & \targX{} & \gate{S^{(r_\text{min})\dag}} & \gate{S^{(r_\text{max})}} & \targX{} & & & & \targX{} & & & \gate{S^{(r_\text{max})\dag}} & \rstick{$\ket{0}_n$}\\
        \lstick{$\ket{0}$} & & & \targX{} & & & & &\hphantomgate{} & & & & & & & & \targX{} & & & \rstick{$\ket{0}$}\\
        \lstick{$\ket{0}$} & & & & & &\hphantomgate{} & &\hphantomgate{} & & & & & \targ{} & \ctrl{1} & & & & & \rstick[2]{domain\\flags} \\
        \lstick{$\ket{0}$} & & & & & & \targ{} & & & & & & & & \targ{} & & & & &
    \end{quantikz}
    }
    \caption{An explicit circuit for the domain {oracle} circuit $\dom$ utilizing $n+2$ ancillary qubits. The two flag qubits return $\ket{00}$ if $|i - j| \in [0,r_\text{min}]$, $\ket{01}$ if $|i - j| \in (r_\text{min}, r_\text{max}]$, and $\ket{10}$ if $|i - j| \in (r_\text{max}, 1]$.}
    \label{fig:dom-circuit}
\end{figure}

Using a comparator circuit, we build a circuit that we denote as $\dom$ for assessing which part of the piecewise domain our input is in. First, we first partition our domain as 
\begin{equation}
    \Omega = [0,1] = [0,r_\text{min}) \sqcup [r_\text{min}, r_\text{max}] \sqcup (r_\text{max}, 1] \,.
\end{equation}
Within the interval $[r_\text{min},r_\text{max}]$, we compute a polynomial approximation to $V(r)$ that converges to $V$ at an exponential rate in the polynomial degree. Outside this interval, we continuously extend $V(r)$ to a constant function (see Figure~\ref{fig:piecewise-LJ}). The use of comparator circuits in quantum computing was introduced in Ref. \cite{sandersBlackBoxQuantumState2019}. In {that} work a quantum circuit {was constructed which} implements the following inequality relations:
\begin{align}
    \comp \ket{i}\ket{j}\ket{0} &:= \begin{cases}
        \ket{i}\ket{j}\ket{0} & \text{if } i < j \,,\\
        \ket{i}\ket{j}\ket{1} & \text{if } i \geq j \,,
    \end{cases}
\end{align}
where $\ket{i}$ and $\ket{j}$ are $n$-qubit computational basis states. This is commonly refered to as ``inequality testing" in the literature. The comparator circuit is obtained by modifying a quantum addition circuit to compute the most significant bit of the {bit-wise} subtraction $i - j$. In both one's and two's complement (i.e. unsigned and signed) binary arithmetic, bit subtraction can be rewritten in terms of bit addition using the identity \cite{cuccaro2004new}
\begin{equation}
    j - i = (j' + i)' \,,
\end{equation}
where $(\cdot)'$ is bitwise complementation. Using the quantum addition circuit given in \cite{gidneyHalvingCostQuantum2018}, the $\comp$ routine can be implemented using only $n$ Toffoli gates.

We implement the piecewise polynomial using a domain oracle defined as 
\begin{align} \label{eq:dom_oracle}
    \dom \ket{i}_n\ket{j}_n\ket{0}_2 := \begin{cases}
        \ket{i}_n\ket{j}_n\ket{0}_2 & \text{if } 0 \le |i - j| \le r_\text{min} \,,\\
        \ket{i}_n\ket{j}_n\ket{1}_2 & \text{if } r_\text{min} < |i - j| \le r_\text{max} \,,\\
        \ket{i}_n\ket{j}_n\ket{2}_2 & \text{if } r_\text{max} < |i - j| \le 1 \,,
    \end{cases}
\end{align}
where $r_\text{min}, r_\text{max} \in [N]$ are the closest bit-string representations to, respectively, the minimum and maximum cut-off radii. This is done by combining quantum addition with the $\comp$ oracle. First, let $\textsc{diff}$ be defined as 
\begin{align}
    \textsc{diff}:\ket{i}_n\ket{j}_n &\rightarrow \ket{i}_n \ket{i - j}_n \,.
\end{align}
which can be implemented using a quantum addition circuit. The piecewise operations are then implemented using the circuit in Figure~\ref{fig:piecewise-potential}. The unitaries that implement the potential function gradients are the subject of the following sections.

\begin{figure}[h]
    \centering
    \begin{quantikz}
        \lstick{$\ket{i}_n$} & \gate[4]{\dom}
        \gategroup[4,steps=5,style={dashed,rounded
        corners,fill=blue!20, inner
        xsep=2pt},background,label style={label
        position=below,anchor=north,yshift=-0.2cm}]{{$\widetilde{V}$}}
        & \gate[2]{V_\text{max}} & \gate[2]{ V} & \gate[2]{V_\text{min}} & \gate[4]{\dom^\dag} & \\
        \lstick{$\ket{j}_n$} & & & & & & \\
        \lstick{$\ket{0}$} & & \octrl{-1} & \octrl{-1} & \ctrl{-1} & & \\
        \lstick{$\ket{0}$} & & \octrl{-2} & \ctrl{-2} & \octrl{-2} & &
    \end{quantikz}
    \caption{Circuit implementation of the domain-dependent application of the block-encodings. {$\dom$ represents the domain oracle, (\ref{eq:dom_oracle}), and the gates $V_\text{max}$, $V$, and $V_\text{min}$ 
    denote the block encodings of the upper potential cutoff $V_\text{max}$, the two-particle potential $V$, and the lower potential cutoff $V_\text{min}$, respectively.}}
    \label{fig:piecewise-potential}
\end{figure}

\begin{lem}[Cost to construct $\dom$ oracle]
    The number of Toffoli or simpler gates needed to implement the piecewise domain oracle is $O(dn)$.
\end{lem}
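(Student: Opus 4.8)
The plan is to read off the structure of the circuit in Figure~\ref{fig:dom-circuit} and bound $\dom$ by the sum of the costs of its elementary sub-circuits, each of which is already known to be linear in $n$. Concretely, the construction is built from three primitives: the subtraction circuit $\textsc{diff}:\ket{i}_n\ket{j}_n\mapsto\ket{i}_n\ket{i-j}_n$, the comparator $\comp$ (``inequality testing''), and the constant-shift operators $S^{(r_\text{min})}, S^{(r_\text{max})}$, together with a handful of controlled SWAPs and (C)NOTs used to form $|i-j|$ and to write the two flag qubits. Via the identity $j-i=(j'+i)'$ and the adder of Ref.~\cite{gidneyHalvingCostQuantum2018}, both $\textsc{diff}$ and $\comp$ use $O(n)$ Toffoli or simpler gates (indeed $n$ Toffolis plus $O(n)$ Clifford gates); by Appendix~\ref{app:circs} each constant shift $S^{(r)}$ uses $O(n)$ Toffoli or simpler gates; and the controlled SWAP that realizes the absolute value plus the gates that set the two-bit flag register cost $O(n)$ and $O(1)$ gates respectively.

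Next I would count multiplicities. For $d=1$, Figure~\ref{fig:dom-circuit} contains only a constant number of these primitives — the forward half uses one $\textsc{diff}$, two $\comp$'s, two shifts, one controlled SWAP for $|i-j|$, and a couple of Toffolis onto the flags, and the uncomputation merely doubles this — so the $d=1$ oracle costs $O(n)$ Toffoli or simpler gates. For general $d$ the quantity compared against $r_\text{min}$ and $r_\text{max}$ is built coordinate by coordinate from $\bs{\alpha},\bs{\beta}\in[N]^d$: either the squared Euclidean distance $\norm{\bs{\alpha}-\bs{\beta}}^2=\sum_{k\in[d]}(\alpha_k-\beta_k)^2$, or, if a rectangular cutoff region suffices, the per-coordinate differences. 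Either way this invokes the $d=1$ machinery $d$ times ($d$ differences, $d$ comparisons against the rescaled cutoffs, $d$ shifts), plus $O(d)$ additional Toffolis to combine the per-coordinate outcomes into the two global flag qubits. Summing gives $d\cdot O(n)+O(d)=O(dn)$, which is the claim; I would also remark that the $n+2$ ancillas of the $d=1$ circuit can be recycled between coordinates, so the $d$-dimensional oracle needs only $O(\log d)$ extra ancillas.

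The main obstacle is the $d$-dimensional accounting: one must argue that forming the compared quantity is genuinely $O(dn)$ and never secretly requires an $O(n^2)$ multiplier. For a rectangular ($\ell_\infty$-type) cutoff this is immediate, since each coordinate is handled by an independent $O(n)$-cost copy of the $d=1$ circuit and the flags are AND-ed with $O(d)$ Toffolis. If the true squared $\ell_2$ distance is insisted upon, one accumulates $(\alpha_k-\beta_k)^2$ into a running register, and the delicate point is to note that it suffices to compare a fixed-precision surrogate of this accumulator so that every arithmetic step stays linear in $n$; pinning down exactly which distance the cutoff is defined with respect to, and checking each such step, is the only part that is not already completely standard — the per-primitive $O(n)$ bounds are all cited in the excerpt.
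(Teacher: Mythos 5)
Your proposal is correct and follows essentially the same route as the paper: bound each $n$-qubit primitive ($\comp$, $\textsc{diff}$, the constant shifts, and the controlled swaps) by $O(n)$ Toffoli-or-simpler gates via the adder-based constructions of Refs.~\cite{sandersBlackBoxQuantumState2019,gidneyHalvingCostQuantum2018}, note that Fig.~\ref{fig:dom-circuit} uses only a constant number of them, and multiply by $d$ for the per-coordinate comparisons. You are in fact more careful than the paper on the $d$-dimensional accounting (which distance the cutoff is measured in, and that no superlinear arithmetic such as an $O(n^2)$ multiplier sneaks in when accumulating $\sum_k(\alpha_k-\beta_k)^2$); the paper's proof dispatches that step with a single sentence, so your added caveat is a refinement rather than a divergence.
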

\begin{proof}
From Ref. \cite{sandersBlackBoxQuantumState2019}, we know that $\comp$, when applied to two, $n$ qubit registers requires $n$ Toffoli gates to implement. The controlled swap operation between two $n$ qubit registers will also require $O(n)$ Toffoli, and $\textsc{diff}$ as well. Therefore, all the $n$ qubit operations can be performed with $O(n)$ Toffoli gates. Since we need to perform these comparisons in $d$ dimensions, we therefore require $O(dn)$ Toffoli or simpler gates to implement the $\dom$ oracle.
\end{proof}

Although this introduces some additional gate cost in the construction of the block encoding, this method allows us to greatly improve the polynomial degree in practice, and can improve the dependence of the $\delta \sim N^{-1}$ parameter in the polynomial degree. We note that a similar approach is used to block encode the Coulomb potential in Ref. \cite{Su2021FirstQuant}.

\subsection{Block encoding gradient operator}
\label{subsec:block encode grad V}

Once we have obtained our approximate block encoding of the potential $\widetilde{V}$, we turn to the construction of the gradient.  When the partial derivatives of the potential do not have a clean closed form expression, we will need to evaluate the partial derivatives numerically. To do so we will use a central difference scheme and perform multiple evaluations of the block encoding of $\widetilde{V}$ at shifted coordinates. This approach may seem suboptimal at first, since the block encoding of $V$ is by far the costliest operation we perform. It may be tempting to instead perform matrix multiplication of the block encodings of differential operators obtained in Sec. \ref{sec:Laplacian} applied to the block encoding of $\widetilde{V}$. However, this does not encode the correct operations, as multiplying the diagonal matrix $\widetilde{V}$ on the left with a finite difference matrix does not reproduce the desired behavior. We instead interpret $\nabla V = \mathbf{b}(x)$ as a vector operator, where each element is a diagonal multiplication operator $\partial_i V$ that we form by a linear combination of $\widetilde{V}$ evaluated at shifted values of the distance function weighted by the finite difference coefficients.

We construct the appropriate finite differences by directly shifting the individual position coordinate operators for each dimension and particle register. The appropriate shift can be realized via the similarity transformation
\begin{equation}
    S^{j} X S^{-j} = \frac{1}{N-1}\text{diag}(j, j+1, \ldots,N-1,0, \ldots, j-2, j-1),
\end{equation}
using the techniques from \ref{sub:BE pos ops}, we can straightforwardly form the partially shifted squared difference-of-position operator in $d$-dimensions, 
\begin{equation}
\begin{aligned}
        (\mathbf{R}^{i,j}_{k,l})^2 = (S^l_kX_k^iS_k^{-l} - X^j_k)^2 + \sum_{m \neq k\in[d]} (X_m^i-X_m^j)^2.
        \label{eq:shifted pos ops}
\end{aligned}
\end{equation}
This corresponds to shifting the position operator of particle $i$ in direction $k$ by $l$ when computing the squared distance between particles $i$ and $j$. We do this for every particle $i,j \in[\eta]$ and for every $k \in [d]$ and for every $1\leq |l| \leq a$, so that we have block encodings of the squared Euclidean distance operator, evaluated at the shifted coordinates. The approximate gradient in $d$-dimensions, evaluated using the finite difference coefficients $c_l$, can therefore be expressed
\begin{equation}
    \nabla^i V \approx \sum_{k \in [d]}\sum_{1 \leq |l| \leq a} c_l V\left((\mathbf{R}^{i,j}_{k,l})^2 \right)\otimes \ket{k}.
\end{equation}
We may also write,
\begin{equation}
    V\left((\mathbf{R}^{i,j}_{k,l})^2\right) = \widetilde{V}\left(S^l_k\bs{\alpha}^iS^{-l}_k,\bs{\alpha}^j\right).
\end{equation}
This naturally corresponds to a linear combination of block encodings, and we now detail its implementation.

We append two ancilla registers to encode the dimension and finite difference coefficients of dimension $\ceil{\log(d)}$ and $\ceil{\log(p)}$ respectively. To perform this block encoding we will use QSP over evaluations of the shifted distance matrix in conjunction with an LCU with state preparation oracles,
\begin{equation}
\begin{aligned}
        \textsc{prep}_D\ket{0} &= \sum_{1 \leq |l| \leq a} \sqrt{\frac{c_l}{\beta}}\ket{l}\\
        D(d)\ket{0} &= \frac{1}{\sqrt{d}}\sum_{j \in[d]}\ket{j}
\end{aligned}
\end{equation}
and where $\beta \in O(1)$ is the sum of the absolute values of the first derivative finite difference coefficients. 
To simplify, we introduce the notation
\begin{equation}
    \partial_k^i\widetilde{V}(\bs{\alpha}^i,\bs{\alpha}^j) \equiv \sum_{1\leq |l| \leq a}c_l \widetilde{V}(S^l_k\bs{\alpha}^iS^{-l}_k,\bs{\alpha}^j)
    \label{eq:partial LCU for V}
\end{equation}
to refer to the diagonal matrix that encodes the approximate partial derivative to $V_{ij}$ applied to particle $i$ in direction $k$ using a $2a$-point finite difference stencil with $c_l$ the finite difference coefficients for the first derivative (see Eq. \eqref{eq:FD coeffs d/dx}). The single-particle $d$-dimensional gradient is,
\begin{equation}
    \nabla^i \widetilde{V} = \sum_{k\in[d]} \partial_k^i\widetilde{V}(\bs{\alpha}^i,\bs{\alpha}^j)\otimes \ket{k}.
\end{equation}
We show the quantum circuit implementing the block encoding of this matrix as Fig. \ref{fig:BE-gradV}.
The following lemma characterizes the cost to implement this block encoding. 

\begin{lem}[Cost to block encode $\nabla {V}$ for Lennard-Jones Potential]
    Let $V$ be the two-particle Lennard-Jones potential, let $d$ be the underlying spatial dimension of the system, and $p = 2a-1$ the order of the finite difference stencil we use to evaluate the gradient. Then, there exists a quantum circuit preparing an $(O(dN^{12}), \ceil{\log(n+1)}+2\ceil{\log(d)}+\ceil{\log(p)}+4)$-BE$(\nabla \widetilde{V})$ using $O(d^2p^2 N \log^2(N))$ Toffoli or simpler gates.
    \label{lem:gradV cost}
\end{lem}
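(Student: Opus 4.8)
The plan is to build the block encoding of $\nabla\widetilde{V}$ in three nested layers, each of which is already available from earlier results, and then to track the three block-encoding parameters through the composition. The layers are: (i) the block encoding of the (shifted) squared Euclidean distance operator from Corollary~\ref{cor:BE R^2}; (ii) a QSP step (Lemma~\ref{lem:qsp}) that turns this diagonal operator into a block encoding of a single shifted copy of the potential $\widetilde{V}$; and (iii) two linear combinations of block encodings, one over the $2a=O(p)$ first-derivative stencil coefficients $c_l$ of \eqref{eq:FD coeffs d/dx}, and one over the $d$ coordinate directions, which assemble the components $\partial_k^i\widetilde{V}$ of \eqref{eq:partial LCU for V} and stack them into $\nabla^i\widetilde{V}=\sum_{k}\partial_k^i\widetilde{V}\otimes\ket{k}$. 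The circuit produced is exactly the one in Fig.~\ref{fig:BE-gradV}. For the first layer, Corollary~\ref{cor:BE R^2} gives a $\left(4d(N-1)^2,\ \ceil{\log(n+1)}+\ceil{\log(d)}+3\right)$ block encoding of $\mathbf{R}^2_{i,j}$ using $O(dn)$ Toffoli or simpler gates; the shifted variant $(\mathbf{R}^{i,j}_{k,l})^2$ of \eqref{eq:shifted pos ops} is obtained by conjugating $X^i_k$ with the incrementer $S^{l}$, which costs only $O(n)$ extra gates per term and leaves the subnormalization unchanged since $\norm{(\mathbf{R}^{i,j}_{k,l})^2}\le d(N-1)^2$ still holds.

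For the second layer I would feed this into QSP with a polynomial approximation of the Lennard-Jones potential written as a function of the squared distance, i.e. of $x^{-6}-x^{-3}$ with $x=r^2$; since this is not of definite parity, approximate the $x^{-6}$ and $x^{-3}$ inverse-power pieces separately using Lemma~\ref{lem: convRate for invPow} and combine them with the linear-combination-of-block-encodings trick, at the cost of one extra ancilla qubit. With cutoff $\delta\sim 1/N$ and target error $\epsilon\sim N^{-p+1}$ chosen to match the stencil order used downstream, Lemma~\ref{lem: convRate for invPow} gives polynomial degree $O(N\log(1/\epsilon))=O(pN\log N)$ and rescales the potential by $\delta^{12}\sim N^{-12}$. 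Each QSP step is one query to the distance block encoding plus a single-qubit rotation, so one shifted copy $\widetilde{V}(S^l_k\bs{\alpha}^i S^{-l}_k,\bs{\alpha}^j)$ costs $O(pN\log N)\cdot O(d\log N)=O(dpN\log^2 N)$ Toffoli or simpler gates (consistent with \eqref{eq:BE V}), with the $\delta^{-12}\sim N^{12}$ rescaling absorbed into a subnormalization $O(N^{12})$, and adds one ancilla over the first layer.

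For the third layer and the final count, the LCU of \eqref{eq:partial LCU for V} over the $O(p)$ shifts $\{S^l_k\}_{1\le|l|\le a}$, with $\textsc{prep}_D$ preparing the normalized $c_l$'s (whose $1$-norm $\beta=\sum_l|c_l|=O(1)$ by \eqref{eq:FD coeffs d/dx}), yields $\partial_k^i\widetilde{V}$; a select over the $d$-dimensional register prepared by $D(d)$ stacks these into $\nabla^i\widetilde{V}$. These two layers multiply the gate count by $O(dp)$, giving $O(d^2p^2 N\log^2 N)$ Toffoli or simpler gates; multiply the subnormalization by $O(d)$ (the stencil LCU contributes only the $O(1)$ factor $\beta$), giving $O(dN^{12})$; and add the registers of sizes $\ceil{\log(d)}$ and $\ceil{\log(p)}$, for a total of $\ceil{\log(n+1)}+2\ceil{\log(d)}+\ceil{\log(p)}+4$ ancilla qubits. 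The bound $\norm{\nabla\widetilde{V}}_1=O(dN^{12})$ follows from the triangle inequality over the $d$ directions and $O(p)$ stencil points with the per-term bound $O(N^{12})$ and $\beta=O(1)$, as required by Definition~\ref{def: Block Encoding}.

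The main obstacle is the QSP layer: one must be careful that (a) the Lennard-Jones potential is handled through its squared-distance representation and its indefinite parity is removed by splitting into even/odd inverse-power pieces before invoking Lemma~\ref{lem: convRate for invPow}; (b) the cutoff $\delta\sim 1/N$ and the accuracy $\epsilon\sim N^{-p+1}$ combine so that $\log(1/\epsilon)=O(p\log N)$, which is precisely what supplies the second factor of $p$ in the final $p^2$; and (c) the $\delta^{-12}\sim N^{12}$ rescaling is propagated faithfully into the subnormalization rather than treated as a harmless global constant. A secondary check is that the select over the $d$ directions costs only a factor $O(d)$ in gates and $\ceil{\log(d)}$ in ancillas, and that the coordinate-shift incrementers reuse the global recyclable ancilla so they stay subdominant.
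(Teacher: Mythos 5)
Your proposal is correct and follows essentially the same route as the paper's proof: block encode the (shifted) squared-distance operator via Corollary~\ref{cor:BE R^2}, apply QSP with the degree bound from Lemma~\ref{lem: convRate for invPow} to obtain each shifted copy of $\widetilde{V}$, and then take LCUs over the $O(p)$ stencil shifts and $d$ directions, propagating the $\delta^{-12}\sim N^{12}$ rescaling into the subnormalization. Your accounting is in fact slightly more careful than the paper's (you make explicit that one factor of $p$ comes from $\log(1/\epsilon)=O(p\log N)$ in the polynomial degree and the other from the stencil LCU, and you flag the indefinite-parity splitting of the Lennard-Jones potential), but the decomposition, the lemmas invoked, and the final parameters are the same.
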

\begin{proof}

    We begin by first analyzing the gate complexity to construct Eq. \eqref{eq:partial LCU for V}. Constructing the shifted position operator requires two evaluations of the shift operators per term in the finite difference scheme, which requires $O(n)$ Toffoli or simpler gates each, and one evaluation of the position operator which also requires $O(n)$ Toffoli or simpler gates, so a total of $\sim 3n$ Toffoli gates are needed to block encode the one-dimensional shifted position operator. We then construct the squared Euclidean distance matrix with respect to the shifted coordinate, which will requires $\sim 3dn$ Toffoli or simpler gates. We then evaluate use QSP to approximate $V$, we would like our approximation $\widetilde{V}$ to be accurate to the same order as the finite difference scheme, which using estimate in Eq. \ref{eq:BE V} for the polynomial degree, will require $O(p N n)$ queries to the block encoding of the squared Euclidean distance function. Therefore, to block encode $V$ evaluated at a single shifted coordinate requires $\sim 3d N n^2$ Toffoli or simpler gates. Since there are $2(p-1)$ shifts that need to be evaluated to approximate Eq. \eqref{eq:partial LCU for V} for both particles, we will require an additional $O(6dp^2 N n^2)$ Toffoli or simpler gates. Since the gradient will require repeating this for $2d$ times for both pairs of particles, we obtain a gate cost that is $O(12d^2 p^2N n^2)$ Toffoli or simpler gates. Substituting $n = \log(N)$ and removing the constant factor we obtain reported Toffoli complexity.
    The number of ancilla qubits can be immediately read-out by the circuit in Fig. \ref{fig:BE-gradV}, and the subnormalization factor directly results from the $N^{12}$ factor needed to normalize the Lennard-Jones potential and the factor of $d$ from the LCU over the $d$ terms.
\end{proof}

\begin{figure}[h]
    \centering
    \scalebox{0.8}{
    \begin{quantikz}
        \ket{0}^{\otimes \ceil{\log(d)}}&\qwbundle{${dim}$} \gategroup[5,steps=10,style={dashed,rounded
        corners,fill=blue!20, inner
        xsep=2pt},background,label style={label
        position=below,anchor=north,yshift=-0.2cm}]{{$(\beta \sqrt{d}N^{12}, \ceil{\log(n+1)}+ 2\ceil{\log(d)}+\ceil{\log(p)}+4)$- BE$(\nabla^i V)$}}&\gate{D(d)}&{\oslash}\vqw{1}&&\oslash\vqw{1}&&\cdots&\oslash\vqw{1}&&&\\
        \ket{0}^{\otimes \ceil{\log(p)}}&\qwbundle{${coeff}$}&\gate{\textsc{prep}_D}&\oslash\vqw{2}&&\oslash\vqw{2}&&\cdots &\oslash\vqw{2}&\gate{\textsc{prep}_D^\dagger}&&\bra{0}\\
        \ket{1}&\qwbundle{${phase}$}&\gate{e^{i\phi_0 Z}}&&\gate{e^{i\phi_1 Z}}\vqw{1}&&\gate{e^{i\phi_2 Z}}\vqw{1}&\cdots&&\gate{e^{i\phi_{d} Z}}\vqw{1}&\gate{X}&\bra{0}\\
        \ket{0}&\qwbundle{${anc}$}&&\gate[2]{\left(\mathbf{R}^{i,j}_{k,l}\right)^2}&\octrl{0}&\gate[2]{\left(\mathbf{R}^{i,j}_{k,l}\right)^2}&\octrl{0}&\cdots&\gate[2]{\left(\mathbf{R}^{i,j}_{k,l}\right)^2}&\octrl{0}&&\bra{0}\\
        \ket{b}&\qwbundle{$sys$}&&&&&&\cdots&&&&
    \end{quantikz}
    }
    \caption{ Quantum circuit for block encoding $d$-dimensional gradient for a single term of the potential function evaluated at the shifted coordinates. The phase register is used with QSP to induce the appropriate phase factors $\phi_k$, corresponding to the approximating polynomial $\widetilde{V}$. Upon measuring the $\text{coeff}$, $\text{anc}$, and $\text{phase}$ ancilla registers in the all zero state, we will have prepared a quantum state of the form $\sum_{k\in[d]}\sum_{1\leq|l|\leq a} \frac{1}{N^{12}\beta\sqrt{d}}c_l \widetilde{V}\left(S_{k}^l \bs{\alpha}^iS_{k}^{-l}, \bs{\alpha}^j)\right)\ket{b}\ket{k} \approx \nabla^i \widetilde{V}$, which provides a finite difference approximation to the $d$-dimensional gradient of particle $i$ for the potential $V_{ij}$. Notice that at this stage we do not yet uncompute the $\text{dim}$ register. We verify that this circuit produces the desired output in theorem \ref{thm:gradV output} of appendix \ref{app:BE}.}
    \label{fig:BE-gradV}
\end{figure}

\subsection{Quantum circuit for convective operator}
\label{subsec:convec op BE}

Given the above constructions for block encoding the potential gradient in the two particle case, we now turn to the block encoding of the full convective operator $(\nabla {V})\cdot \nabla$,
\begin{equation}
\label{eq:eta particle convective}
(\nabla V)\cdot \nabla = \sum_{k \in [\eta]}\sum_{l\leq k}\sum_{i\in[d]}\partial_i^k{V}_{lk}\bs{D}_i^k,
\end{equation}
with $V_{lk} = \widetilde{V}(\mathbf{R}^2_{lk})$.
The partial derivatives $\bs{D}_i^k$ on the right are just finite difference approximations to the first derivative and can be formed through the block encodings we construct in Sec. \ref{sec:Laplacian}, but replacing the finite difference coefficients $r_j$ for the Laplacian with the stencil coefficients $c_j$ for the first derivative.
We can now combine the quantum circuit in Fig. \ref{fig:BE-gradV} and the block encoding of the $d$-dimensional gradient operator with the quantum circuit for a multiplication of block encodings as shown in \ref{fig:single convec term}. We first consider the block encoding of a term in Eq. \eqref{eq:eta particle convective}, of the form
\begin{equation}
    \nabla^i V \cdot \nabla^i = \sum_{k \in [d]}\partial_k^i V_{ij} \partial_k^i
\end{equation}

With central difference coefficients for the first derivative operator $c_j$, we define the LCU oracles,
\begin{equation}
\begin{aligned}
    \textsc{prep}_D\ket{0} &= \sum_{j=-a,j\neq 0}^{a}\sqrt{\frac{|c_j|}{\beta}}\ket{j},
\end{aligned}
\end{equation}
where $\beta$ is the subnormalization factor for the first derivative finite difference coefficients. 
We also define
\begin{equation}
\begin{aligned}
    \textsc{sel}^{i,j}_{{k}}(V) &= \text{sign}(c_j)\sum_{1\leq |j| \leq a}U_V(S_{i,k}^{-j}\bs{\alpha}^iS_{i,k}^{j},\bs{\alpha}^j)\otimes \ket{j}\bra{j},\\
    \textsc{sel}^{i}_k(D) &=\sum_{1\leq |j| \leq a} \text{sign}(c_j)S^j_{i,k}\otimes\ket{j}\bra{j}.
\end{aligned}
\end{equation}
\begin{figure}[ht]
    \centering
    \scalebox{0.8}{
    \begin{quantikz}
    \lstick{$\ket{0}^{\otimes2}$}&\gate{S^2}
    \gategroup[5,steps=8,style={dashed,rounded
    corners,fill=blue!20, inner
    xsep=2pt},background,label style={label
    position=below,anchor=north,yshift=-0.2cm}]{{$(\zeta, \ceil{\log(n+1)}+ 2\ceil{\log(d)}+\ceil{\log(p)}+6)$-BE$({\nabla^i V_{ij} \cdot \nabla^i})$}}
    &&&\gate{S^{-1}}\vqw{2}&&&&\gate{S^{-1}}\vqw{2}&\rstick{$\bra{0}^{\otimes 2}$}\\    \lstick{$\ket{0}^{\otimes\ceil{\log(d)}}$}&\gate{D(d)}&\oslash\vqw{1}&&&&\oslash\vqw{1}&\gate{D^\dagger(d)}&&\rstick{$\bra{0}^{\otimes \ceil{\log(d)}}$}\\
    \lstick{$\ket{0}^{\otimes \ceil{\log(p)}}$}&\gate[1]{\textsc{prep}_D}&\gate[2]{\textsc{sel}^{i}_k(D)}&\gate{\textsc{prep}_D^\dag}&\octrl{0}&\gate{\textsc{prep}_D}&\gate[3]{\textsc{sel}_{k}^{ij}(V)}&\gate[1]{\textsc{prep}_D^\dag}&\octrl{0}&\rstick{$\bra{0}^{\otimes \ceil{\log(p)}}$}\\
    \lstick{$\ket{b}$}& & & &\hphantomgate{0} & & &&&\\
    \lstick{$\ket{0}$}&\qwbundle{anc}&&&\hphantomgate{0}&&&&& \rstick{$\bra{0}$}
    \end{quantikz}
    }
    \caption{Quantum circuit for block encoding the finite difference approximation to $\partial_k^i V_{ij} \partial_k^i$ with subnormalization factor $\zeta = \beta^2 d N^{12}  \in O(\frac{1}{N}\norm{\nabla V})$. We note that we have swapped the locations of the system and ancilla registers, denoted by the bottom two wires, as compared to Fig. \ref{fig:BE-gradV}. For an input state $\ket{b}$ in the system register, this circuit prepares $\frac{1}{\zeta}\sum_{k\in [d]}\partial_k^i V_{ij}\mathcal{D}^i_k\ket{b}$ whenever all of the ancilla qubits are measured to be in the 0 state. We remind the reader that $\partial_k^i$ refers to the partial derivative operator for particle $i \in [\eta]$ in spatial direction $k \in [d]$, implemented as a linear combination of applications of block encodings of the potential function $V_{ij}$. This block encoding implements a single ``diagonal" term in the convective operator, summing up over all of the terms $C^{i}= \nabla^i\sum_{j<i}V_{ij}\cdot \nabla^i$ would give the convective operator for particle $i$. Doing this for all pairs, $i,j$ would give the full convective operator.}
    \label{fig:single convec term}
\end{figure}
The above {protocol} implements a quantum circuit which block encodes a single term of the convective operator.

\begin{lem}[Block encoding of a single convective term]
\label{lem:BE d-dim convec}
    The block encoding of a single term of the form with the approximate Lennard-Jones potential $\widetilde{V}$ 
    \begin{equation}
        {C^{ij}}:= \sum_{k\in[d]}\partial_k^i \widetilde{V}\left(\bold{R}^2_{ij}\right) \bs{D}_k^i
    \end{equation}
    using $p-1$ stencil points to approximate the derivative, requires $O(dp)$ queries to the block encoding of $\widetilde{V}$ and $O(dp)$ applications of the shift operators using $\ceil{\log(n+1)}+ 2\ceil{\log(d)}+\ceil{\log(p)}+6$ ancilla qubits and $\widetilde{O}(Nd^2p^2\log^2(N))$ Toffoli or simpler gates, with subnormalization factor $O\left(d N^{12}\right)$.
\end{lem}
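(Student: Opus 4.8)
The plan is to realize $C^{ij}$ exactly as the circuit of Fig.~\ref{fig:single convec term}: a linear combination over the $d$ spatial directions of products of two block encodings. The first factor is the discretized partial derivative $\bs{D}_k^i$, which is a linear combination of $p-1$ shift operators weighted by the first-derivative stencil coefficients $c_j$ of \eqref{eq:FD coeffs d/dx}; the second is the diagonal multiplication operator $\partial_k^i\widetilde V(\mathbf R^2_{ij})$ built as in \eqref{eq:partial LCU for V} and Fig.~\ref{fig:BE-gradV}, i.e.\ a further LCU over $p-1$ evaluations of $\widetilde V$ at coordinates shifted by the similarity transformation $S^l_k X S^{-l}_k$ of \eqref{eq:shifted pos ops}. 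In Fig.~\ref{fig:single convec term} the $\ceil{\log d}$-qubit register prepared by $D(d)$ selects the direction $k$, the $\textsc{prep}_D/\textsc{sel}_k^i(D)$ pair implements $\bs{D}_k^i$, the $\textsc{prep}_D/\textsc{sel}_k^{ij}(V)$ pair implements $\partial_k^i\widetilde V$, and the two-qubit ``$S^2,\,S^{-1}$'' gadget multiplies the two block encodings exactly as in Fig.~\ref{fig:r^2 op}. Correctness of the inner gradient block encoding is the content of Theorem~\ref{thm:gradV output}, which I would invoke rather than reprove; summing over $j<i$ and then over $i$ assembles \eqref{eq:eta particle convective}.

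The cost accounting is then pure composition. By Lemma~\ref{lem: convRate for invPow} and the discussion around \eqref{eq:BE V}, matching the polynomial-approximation error to the finite-difference order $\epsilon\sim N^{-p+1}$ forces a QSP polynomial of degree $O(N\log(1/\epsilon))=O(pN\log N)$ applied to the block encoding of $\mathbf R^2_{ij}$ from Corollary~\ref{cor:BE R^2}; since that block encoding uses $O(dn)$ Toffoli gates, a single query to $\widetilde V$ costs $\widetilde O(dpN\log^2 N)$ Toffoli gates and $O(d)$ shift/position applications. Then $\partial_k^i\widetilde V$ is an LCU of $p-1$ such queries with $\ell^1$-weight $\beta=\sum_{1\le|j|\le a}|c_j|=O(1)$ (the same Basel-type bound used for the second-derivative coefficients in \eqref{eq: alpha upper bound}), costing $O(dp^2N\log^2 N)$ Toffoli gates, while $\bs{D}_k^i$ is an LCU of $p-1$ shifts costing only $O(p\log N)$. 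Multiplying these two block encodings adds their ancilla counts and \emph{multiplies} their subnormalizations: $\widetilde V$ carries the $\delta^{12}/2$ rescaling with $\delta\sim 1/N$, i.e.\ subnormalization $O(N^{12})$, times $\beta^2=O(1)$. Finally the outer LCU over the $d$ directions multiplies the subnormalization by $d$, giving $O(dN^{12})$, and repeats the product circuit $d$ times in the select, so the total is $O(d^2p^2N\log^2 N)=\widetilde O(Nd^2p^2\log^2 N)$ Toffoli gates, with $O(dp)$ queries to $\widetilde V$ and $O(dp)$ shift applications. The ancilla count $\ceil{\log(n+1)}+2\ceil{\log d}+\ceil{\log p}+6$ is obtained by summing the registers in Fig.~\ref{fig:single convec term}: $\ceil{\log(n+1)}+\ceil{\log d}+3$ for the $\mathbf R^2$ block encoding (Corollary~\ref{cor:BE R^2}), one qubit for the QSP phase register, $\ceil{\log p}$ for $\textsc{prep}_D$, a further $\ceil{\log d}$ for the direction LCU, and two qubits for the product gadget, the position-operator ancilla being reused across the difference-of-positions terms.

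The main point I would be most careful about is not the gate counting but the justification that this composition block-encodes \eqref{eq:eta particle convective} rather than the naive matrix product $\bs{D}_k^i\widetilde V$: since $\widetilde V$ is diagonal, $\partial_k^i\widetilde V$ is a \emph{new} diagonal multiplication operator obtained by finite-differencing the \emph{entries} of $\widetilde V$ along direction $k$, which is exactly why we must pay $O(p)$ extra queries to $\widetilde V$ at shifted coordinates instead of reusing one block encoding; I would make this identification explicit via \eqref{eq:shifted pos ops}--\eqref{eq:partial LCU for V} before invoking the standard product- and sum-of-block-encoding rules. The only other bookkeeping hazard is confirming $\beta=O(1)$ uniformly in $a$, so that neither of the two nested LCUs inflates the subnormalization beyond the unavoidable $O(N^{12})$ coming from the inverse-power-law rescaling; everything else is routine composition of the lemmas already established above.
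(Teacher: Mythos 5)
Your proposal is correct and follows essentially the same route as the paper: the paper's proof likewise instantiates the circuit of Fig.~\ref{fig:single convec term} as a product of the block encodings of $\partial_k^i\widetilde V$ (costed via Lemma~\ref{lem:gradV cost}) and $\bs{D}_k^i$ (costed via Theorem~\ref{thm:PeriodicLaplacian}), combined by an LCU over the $d$ directions with two extra ancillas for the product and reused LCU registers, arriving at the same $\widetilde O(d^2p^2N\log^2 N)$ gate count, ancilla tally, and subnormalization $\beta^2 dN^{12}=O(dN^{12})$. The only difference is that you re-derive the gradient block-encoding cost from Corollary~\ref{cor:BE R^2} and the QSP degree bound rather than citing Lemma~\ref{lem:gradV cost} wholesale, and you make explicit the (correct and worthwhile) point that $\partial_k^i\widetilde V$ must be built from shifted evaluations of $\widetilde V$ rather than as a matrix product.
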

\begin{proof}
    The proof is an almost direct application of the quantum circuit in Fig.\ref{fig:single convec term} and lemma \ref{lem:gradV cost} and theorem \ref{thm:PeriodicLaplacian}.
    From lemma \ref{lem:gradV cost}, the gate cost to form the block encoding of the two particle gradient operator of the Lennard-Jones potential in $d$-dimensions was
    $O(d^2 p^2 N \log^2(N))$ Toffoli. The cost to block encode the first derivative operator with $p-1$ stencil points is essentially equivalent to that of the Laplacian matrix. By theorem \ref{thm:PeriodicLaplacian}, we need $O(d p \log(N))$ Toffoli or simpler gates to perform the block encoding. We can employ two additional ancilla qubits to perform their product of block encodings, and reusing the ancilla registers for the LCU over the dimension and finite difference coefficients. In all, this circuit therefore requires $\widetilde{O}(d^2 p^2 N \log^2(N) + dp \log(N))$ Toffoli or simpler gates and $\ceil{\log(n+1)} + 2\ceil{\log(d)} + \ceil{\log(p)} + 6$ ancilla qubits and implements a block encoding with subnormalization $\zeta = \beta^2 d N^{12} \in O(dN^{12})$ as $\beta^2 \in O(1)$.
\end{proof}

Now, if we consider the many particle case, with $V$ given as a radially symmetric function of its arguments, the full potential function is given by
\begin{equation}
    V_{tot} = \sum_{i\in [\eta]} \sum_{j<i}V(\bs{\alpha}^i,\bs{\alpha}^j),
    \label{eq:Vtot}
\end{equation}
so that the diagonal portion of the convective operator takes the form
\begin{equation}
    C_{tot} =\sum_{i\in [\eta]}\sum_{j<i}\sum_{k\in[d]}\partial_{k}^i V(\bs{\alpha}^i,\bs{\alpha}^j)\partial_{k}^i.
    \label{eq:discrete-convective}
\end{equation}
With access to the block encoding of $C^{i,j}$, given as Fig. \ref{fig:single convec term} we can obtain a block encoding of the many body convective operator by using controlled applications of the ${C}^{i,j}$ circuit over different registers to obtain a block encoding of $C_{tot}$.

We demonstrate how to construct this block encoding using an linear combination of block encodings, similar to above. Let 
\begin{equation}
    U^{ij}_k \in (\zeta/d, k, \epsilon)\text{- BE}(\partial_k^i V(\bs{x}^i,\bs{x}^j)\partial_k^i).
\end{equation}
We can implement the linear combination of block encodings by selecting over the block encodings of the individual terms in the summand using the select oracle 
\begin{equation}
    \textsc{sel}_C \equiv \sum_{i\in[\eta]}\sum_{j<i}\sum_{k\in [d]} U^{ij}_{k}\otimes \ket{ijk}\bra{ijk}.
\end{equation}
Then, to contract over the ancilla indices and block encode the corresponding scalar operator, we prepare the uniform superposition over {the ancilla} register with,
\begin{equation}
    \textsc{prep}_C \ket{0}\ket{0}\ket{0} = \sqrt{\frac{1}{d\binom{\eta}{2}}}\sum_{i\in[\eta]}\sum_{j<i}\sum_{k\in[d]}\ket{ijk}.
\end{equation}
This therefore gives
\begin{equation}
    U_C = [I_{N^{\eta d}}\otimes\textsc{prep}^\dg_C ][\textsc{sel}_C][ I_{N^{\eta d}}\otimes\textsc{prep}_C ]
\end{equation}
as a block encoding of the diagonal terms in the convective operator. We detail the costs to construct this block encoding as \ref{thm:BE full convective}.

\begin{thm}[Block encoding many-body convective term]
    \label{thm:BE full convective}
    There exists a quantum circuit which implements a $(O(\eta^2 d N^{12}), \ceil{\log\binom{\eta}{2}} + \ceil{\log(n+1)}+ 2\ceil{\log(d)} + \ceil{\log(p)} + 6)$-block encoding of the finite difference approximation to the convective operator of  Eq. \eqref{eq:eta particle convective} with Lennard-Jones potential $V$ using $\widetilde{O}\left((\eta d p)^2 N\log^2(N)\right)$ Toffoli or simpler gates.
\end{thm}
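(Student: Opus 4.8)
The plan is to assemble the block encoding of the full many-body convective operator by combining the per-pair block encoding of Lemma~\ref{lem:BE d-dim convec} with a linear combination of block encodings over the $\binom{\eta}{2}$ particle pairs, and then simply bookkeep the resulting three figures of merit (subnormalization, ancilla count, Toffoli count). First I would invoke Lemma~\ref{lem:BE d-dim convec}: for each ordered pair $i, j$ with $j < i$ there is a $(O(dN^{12}), \ceil{\log(n+1)}+2\ceil{\log(d)}+\ceil{\log(p)}+6)$-block encoding of the single convective term $C^{ij} = \sum_{k\in[d]}\partial_k^i\widetilde V(\mathbf{R}^2_{ij})\bs{D}_k^i$, implemented with $\widetilde O(Nd^2p^2\log^2(N))$ Toffoli or simpler gates. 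Rescaling each of these by $1/d$ as written in the excerpt, $U^{ij}_k \in (\zeta/d, \cdot, \epsilon)$-BE of $\partial_k^i V\partial_k^i$, lets me treat the whole object $C_{tot}$ of Eq.~\eqref{eq:discrete-convective} as a sum of $d\binom{\eta}{2}$ block-encodable terms.

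Next I would carry out the LCU over the pair-and-dimension index. Using the $\textsc{prep}_C$ oracle which prepares the uniform superposition over the $d\binom{\eta}{2}$ index states $\ket{ijk}$ on a register of $\ceil{\log\binom{\eta}{2}} + \ceil{\log d}$ qubits (the $\ceil{\log d}$ can in fact be absorbed/reused, but at worst it is subsumed), and the $\textsc{sel}_C = \sum_{i}\sum_{j<i}\sum_k U^{ij}_k\otimes\ket{ijk}\bra{ijk}$ oracle, the standard LCU identity $U_C = (\textsc{prep}_C^\dagger\otimes I)\textsc{sel}_C(\textsc{prep}_C\otimes I)$ produces a block encoding of $\frac{1}{d\binom{\eta}{2}}\sum C^{ij}$. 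Since the subnormalization of each summand after the $1/d$ rescaling is $O(N^{12})$, the LCU multiplies the subnormalization by the number of terms times the per-term factor, giving $O(d\binom{\eta}{2})\cdot O(N^{12}) = O(\eta^2 d N^{12})$; one should double-check here whether the $\|\nabla V\|_\infty$-type dependence advertised in Table~\ref{tab:circuit complexity} is being folded into the $N^{12}$ (it is, via the $\delta\sim 1/N$ cutoff normalization of Lemma~\ref{lem: convRate for invPow}). The ancilla count is the $\ceil{\log\binom{\eta}{2}}$ new selection qubits on top of the per-term register $\ceil{\log(n+1)}+2\ceil{\log(d)}+\ceil{\log(p)}+6$, matching the claimed count. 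For the gate count, the $\textsc{sel}_C$ implemented by controlling the $C^{ij}$ circuit on the index register costs essentially the per-term cost $\widetilde O(Nd^2p^2\log^2 N)$ times an $O(\poly\log)$ overhead for the multi-controls plus the $\widetilde O(\eta^2 d)$ cost of preparing/unpreparing $\textsc{prep}_C$; here I would note that one does \emph{not} pay a factor of $\binom{\eta}{2}$ in the gate count because the $C^{ij}$ subroutine is reused in a controlled fashion rather than laid out $\binom{\eta}{2}$ times, so the dominant cost is $\widetilde O((\eta d p)^2 N\log^2 N)$ once the $d^2p^2$ from the per-pair gradient and the $\eta^2$ from the controlled addressing are combined appropriately.

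The main obstacle I anticipate is not any single hard estimate but rather the careful accounting of how the various $d$, $p$, and $\eta$ factors enter — in particular justifying that the Toffoli count is $\widetilde O((\eta d p)^2 N\log^2 N)$ and not, say, $\widetilde O(\eta^2 d^2 p^2 N \log^2 N)$ with an extra $\binom{\eta}{2}$ from instantiating each pair term (this requires the observation that the controlled-$C^{ij}$ block can be addressed by the index register so that its gates are amortized, with only the addressing logic scaling with $\log\binom{\eta}{2}$), and likewise that the $2\ceil{\log d}$ in the ancilla register is not double-counted against the $\ceil{\log d}$ inside $\textsc{prep}_C$. I would also want to confirm that the error $\epsilon$ of the LCU is controlled by the per-term error, which is immediate since LCU with exact $\textsc{prep}/\textsc{sel}$ is exact and the only approximation is the QSP polynomial approximation of $\widetilde V$ already accounted for in Lemma~\ref{lem:BE d-dim convec}. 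Modulo these bookkeeping points, the theorem follows by direct composition of the earlier results.
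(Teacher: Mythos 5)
Your overall route is the same as the paper's: invoke Lemma \ref{lem:BE d-dim convec} for each pair, form the LCU over the $d\binom{\eta}{2}$ index states with $\textsc{prep}_C$ and $\textsc{sel}_C$, and read off the subnormalization $O(\eta^2 d N^{12})$ (from $\binom{\eta}{2}$ terms each carrying $\zeta = O(dN^{12})$) and the ancilla count $\ceil{\log\binom{\eta}{2}}$ on top of the per-term register. That part of your bookkeeping matches the paper's proof.

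The gate-count step, however, contains a genuine error. You assert that one does \emph{not} pay a multiplicative factor of $\binom{\eta}{2}$ because the $C^{ij}$ circuit is ``reused in a controlled fashion rather than laid out $\binom{\eta}{2}$ times,'' and you attribute the $\eta^2$ in the final bound to ``controlled addressing.'' This is backwards relative to the paper's proof, which obtains the total gate cost precisely as $O(\eta^2)$ times the cost of a single $C^{ij}$: the select oracle is implemented by applying the controlled circuit of Fig.~\ref{fig:single convec term} once per pair, and since each application costs $\widetilde{O}(d^2p^2N\log^2 N)$ with no $\eta$-dependence, the product is $\widetilde{O}(\eta^2 d^2p^2N\log^2N)$ --- which is literally the claimed bound, since $(\eta d p)^2 = \eta^2 d^2 p^2$; the dichotomy you set up between these two expressions is vacuous. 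Moreover, if your amortization claim were actually realized (e.g., by controlled-swapping registers $i$ and $j$ into a fixed workspace, applying a single pair-independent copy of the circuit, and swapping back), the dominant cost would be $\widetilde{O}(d^2p^2N\log^2 N + \eta d n)$, which is \emph{smaller} than the stated bound rather than equal to it, so the mechanism you describe cannot be the source of the $\eta^2$. The theorem follows from the straightforward accounting that pays the $\binom{\eta}{2}$ repetitions, which is what the paper does; your final numbers are right, but your justification of the Toffoli count is not.
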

\begin{proof}
    The proof can be obtained by direct application of \ref{lem:BE d-dim convec}. Simply apply the quantum circuit in Fig. \ref{fig:single convec term} for all $\binom{\eta}{2}$ pairs of particles. This can be accomplished by appending an additional register of $\ceil{\log\left(\binom{\eta}{2}\right)} = O(2\log(\eta))$ ancilla registers, and apply the $\textsc{sel}_k^i(D)$ and $\textsc{sel}_k^{ij}(V)$ controlled on the ancilla register marking the particles. Therefore, the total gate cost can be estimated as $O(\eta^2)$ times the cost to block encode $C^{ij}$, for a total of $\widetilde{O}(\eta^2 d^2 p^2 N\log^2(N))$ Toffoli or simpler gates. Employing a total of $\ceil{\log\binom{\eta}{2})} + \ceil{\log(n+1)}+ 2\ceil{\log(d)} + \ceil{\log(p)} + 6$ ancilla qubits and implementing the block encoding with a subnormalization factor $O(\eta^2 d N^{12})$. Now, if we express $N$ in terms of $\epsilon$ as $N \sim \epsilon^{-1/p}$, we obtain
    \begin{equation}
        \widetilde{O}\left(\eta^2 d^2 \epsilon^{-1/p}\log^2(1/\epsilon)\right)
    \end{equation}
    Toffoli or simpler gates needed to implement the desired block encoding.
\end{proof}

\section{\label{sec:Discussion} Discussion and Outlook}

The simulation of physical systems is 
{one of} the most promising use cases of future quantum computers. In this work, we have addressed the block encoding of a class of elliptic operators {that arise in such simulations, when} discretized on a Cartesian grid of $N$ grid points per dimension in $d$ spatial dimensions.
We provide explicit block encodings for the standard Laplacian operator in $d$ dimensions with Dirichlet, Neumann, and Robin boundary conditions {that require only} $O(d\log(N))$ Toffoli gates.
{However, the error convergence is restricted to low order}, i.e., $\epsilon \sim N^{-2}$(Dirichlet/Periodic) or $N^{-1}$(Neumann/Robin), using standard approaches. We provide these block encodings as linear combinations of unitaries, where the unitaries are simply the modular inc(dec)rement operators and reflections and their controlled versions, and can be implemented with $O(\log(N))$ Toffolis. {For the Dirichlet and Neumann boundary value problems,} we show how to obtain high-order convergence rates through the use of the periodic extension, so that for any $p>3$ we {can bound the global truncation error by} $\epsilon \sim O(N^{-p+1})$ for {these} boundary value problems. We also address the block encoding of operators with boundary conditions on an irregular geometry (e.g., not given as the simple Cartesian product of closed intervals), assuming {that} we can efficiently compute a membership oracle marking {to} which part of the computational domain {a specific} grid point belongs providing the same asymptotic complexity for generalized rectangular domains in $\mathbb{R}^d$.

We then showed how to block encode a many-body differential operator by first block encoding a multi-particle potential $V$ given as the sum of pair-wise interactions. We focus on the evaluation of the Lennard-Jones potential, which involves calculating terms of the form $V(r) = \frac{1}{r^{12}} - \frac{1}{r^6}$, where $r$ is the interparticle distance. This choice of potential is ubiquitous in applications in molecular dynamics, and more generally interacting kinetic theory, as described by the Fokker-Planck and Backwards Kolmogorov equations. We provide efficient quantum circuits to block encode a radially symmetric ``discrete position operator" in \ref{sub:BE pos ops}, which computes the computes the quadratic potential between two particles in $d$-dimensions. We address the singular behavior of the potential near {$r=0$} through the use of a cutoff parameter $\delta$ and polynomial approximation over the whole domain. We show an alternative technique using a piecewise polynomial approximation in conjunction with an inequality test, which may improve the convergence rate of polynomial approximations in practice. Finally, we show how to block encode the convective operator given as $\nabla V \cdot \nabla$, which describes particles interacting with a force field $\nabla V$ where $\nabla$ is the $\eta d$-dimensional gradient operator. Overall, we find that the number of Toffoli gates needed to implement these block encodings scale as $\widetilde{O}((\eta d)^2 \delta^{-1})$.  Compared to the $O(N^{\eta d})$ operations needed  to realize these matrices on a classical computer, this construction provides an exponential reduction in the number of quantum operations needed to realize the same operation on a quantum device.

With the inclusion of an interaction potential, the underlying PDE is no longer separable over the individual particles. As a result, it is unlikely that classical methods can generically and/or significantly reduce the dimensionality of the system. Thus, in the many particle setting, the potential for exponential quantum advantage is much more significant. For example, in the ubiquitous problem of solving the encoded linear system, the minimum number of classical operations generically scales with the dimensionality of linear system as $\Omega(N^{\eta d})$. On the other hand, using the constructions we provide in this work, the minimum number of non-clifford quantum operations needed to solve the same system must can be exponentially smaller, i.e. $\Omega((\eta d)^2 N\log(N))$. Of course, this discussion neglects considerations of the conditioning of the linear system, which is unlikely to scale exponentially with $\eta$. Nevertheless, this factor is present in both quantum and classical linear solvers. 

The block encodings we construct in this work are provided as linear combinations of unitaries (LCUs), additional work is {still} needed to compile these circuits into 1- and 2-qubit gates. In particular, we do not specify the construction of circuits that prepare the LCU coefficients, nor the construction of the controlled applications of the unitaries. More work is needed to address the constant factors in the complexity resulting from the controlled applications of the 1 sparse unitaries we use in the construction.  
{We note that generation of} software to automatically compile and optimize the quantum circuit implementations of these block encodings would 
be a useful development. 
{More generally, it} may also be worthwhile to explore Cartesian embedding methods \cite{cocoHighOrderFinitedifference2024,ruuthSimpleEmbeddingMethod2008a} in connection with the methods presented in Sec. \ref{subsec:irregular domains}, to construct efficient circuits for representing differential operators on curved domains. However, {we point out that} if one wishes to obtain generic high-order accuracy on irregular geometries, the $hp$-FEM approach is a much more robust framework than embedding methods. 
{Block} encoding the stiffness matrix from FEM is much more involved, and is the subject of future work.

This work represents one step toward the goal of developing efficient quantum subroutines for realistic simulation tasks {involving key differential operators} on future quantum computers. The {resulting} circuits are conceptually simple and rely on the use of well-known primitives such as shift and reflection gates to block encode the Laplacian matrix, as well as {on} more modern methods based on inequality testing and quantum signal processing to efficiently block encode a more general elliptic operator {such as} $\nabla V(\mathbf{x})\cdot \nabla$, {with} $V$ {a} scalar multi-particle potential. This work {thus} provides a blueprint for the block encoding of many-body differential equations, which is likely to be a core subroutine in future fault-tolerant applications involving the numerical solution of high-dimensional partial differential equations.  However, more work is needed to address the question of the efficiency and applicability quantum algorithms for solving PDEs beyond the Schr{\"o}dinger equation in an end-to-end fashion. {With this goal in mind, we expect these constructions will} be useful for analyzing the quantum resources needed to simulate the high-dimensional, {many-body}, systems {of interest that arise in the physical and applied sciences}.

\begin{acknowledgments}
{T.D.K., A.M.A., K.K.M., and K.B.W. were supported by the U.S. Department of Energy, Office of Science, Office of Advanced Scientific Computing Research under Award Numbers DE-SC0023273 and DE-SC0025526. T.D.K was also supported by a Siemens FutureMakers Fellowship. J.-P.L. was supported by the National Science Foundation (PHY-1818914, CCF-1729369), the NSF Quantum Leap Challenge Institute (QLCI) program (OMA-2016245, OMA-2120757), and a Simons Foundation award (No. 825053). T.D.K would also like to thank Ian Convy, Di Fang, Cory Hargus, and Torin Stetina for insightful discussions in the early portions of this work. We also thank Philipp Schleich and Torin Stetina for helpful comments on an early version of the manuscript.} 
\end{acknowledgments}

\bibliographystyle{unsrtnat}
\bibliography{Main.bib}

\begin{thebibliography}{64}
\providecommand{\natexlab}[1]{#1}
\providecommand{\url}[1]{\texttt{#1}}
\expandafter\ifx\csname urlstyle\endcsname\relax
  \providecommand{\doi}[1]{doi: #1}\else
  \providecommand{\doi}{doi: \begingroup \urlstyle{rm}\Url}\fi

\bibitem[Evans(2010)]{evans2010partial}
L.C. Evans.
\newblock \emph{Partial differential equations}.
\newblock Graduate studies in mathematics. American Mathematical Society, 2010.
\newblock ISBN 978-0-8218-4974-3.
\newblock URL \url{https://books.google.com/books?id=Xnu0o_EJrCQC}.

\bibitem[Teschl(2009)]{teschl2009mathematical}
G.~Teschl.
\newblock \emph{Mathematical methods in quantum mechanics: With applications to {S}chr{\"o}dinger operators}.
\newblock Graduate studies in mathematics. American Mathematical Society, 2009.
\newblock ISBN 978-0-8218-4660-5.
\newblock URL \url{https://books.google.com/books?id=bYqaAwAAQBAJ}.

\bibitem[Leyton and Osborne(2008)]{LO08}
Sarah~K. Leyton and Tobias~J. Osborne.
\newblock A quantum algorithm to solve nonlinear differential equations, 2008.
\newblock URL \url{https://arxiv.org/abs/0812.4423}.

\bibitem[Berry(2014)]{Ber14}
Dominic~W Berry.
\newblock High-order quantum algorithm for solving linear differential equations.
\newblock \emph{Journal of Physics A: Mathematical and Theoretical}, 47\penalty0 (10):\penalty0 105301, February 2014.
\newblock ISSN 1751-8121.
\newblock \doi{10.1088/1751-8113/47/10/105301}.
\newblock URL \url{http://dx.doi.org/10.1088/1751-8113/47/10/105301}.

\bibitem[Berry et~al.(2017)Berry, Childs, Ostrander, and Wang]{BCOW17}
Dominic~W. Berry, Andrew~M. Childs, Aaron Ostrander, and Guoming Wang.
\newblock Quantum algorithm for linear differential equations with exponentially improved dependence on precision.
\newblock \emph{Communications in Mathematical Physics}, 356\penalty0 (3):\penalty0 1057--1081, October 2017.
\newblock ISSN 1432-0916.
\newblock \doi{10.1007/s00220-017-3002-y}.
\newblock URL \url{http://dx.doi.org/10.1007/s00220-017-3002-y}.

\bibitem[Childs and Liu(2020)]{childsQuantumSpectralMethods2020}
Andrew~M. Childs and Jin-Peng Liu.
\newblock Quantum spectral methods for differential equations.
\newblock \emph{Communications in Mathematical Physics}, 375\penalty0 (2):\penalty0 1427--1457, February 2020.
\newblock ISSN 1432-0916.
\newblock \doi{10.1007/s00220-020-03699-z}.
\newblock URL \url{http://dx.doi.org/10.1007/S00220-020-03699-Z}.

\bibitem[Krovi(2023)]{Kro22}
Hari Krovi.
\newblock Improved quantum algorithms for linear and nonlinear differential equations.
\newblock \emph{Quantum}, 7:\penalty0 913, February 2023.
\newblock ISSN 2521-327X.
\newblock \doi{10.22331/q-2023-02-02-913}.
\newblock URL \url{http://dx.doi.org/10.22331/q-2023-02-02-913}.

\bibitem[An et~al.(2022{\natexlab{a}})An, Liu, Wang, and Zhao]{ALWZ22}
Dong An, Jin-Peng Liu, Daochen Wang, and Qi~Zhao.
\newblock A theory of quantum differential equation solvers: limitations and fast-forwarding, 2022{\natexlab{a}}.
\newblock URL \url{https://arxiv.org/abs/2211.05246}.

\bibitem[Costa et~al.(2023)Costa, Schleich, Morales, and Berry]{costaFurtherImprovingQuantum2023}
Pedro C.~S. Costa, Philipp Schleich, Mauro E.~S. Morales, and Dominic~W. Berry.
\newblock Further improving quantum algorithms for nonlinear differential equations via higher-order methods and rescaling, 2023.
\newblock URL \url{https://arxiv.org/abs/2312.09518}.

\bibitem[Cao et~al.(2013)Cao, Papageorgiou, Petras, Traub, and Kais]{CPP13}
Yudong Cao, Anargyros Papageorgiou, Iasonas Petras, Joseph Traub, and Sabre Kais.
\newblock Quantum algorithm and circuit design solving the poisson equation.
\newblock \emph{New Journal of Physics}, 15\penalty0 (1):\penalty0 013021, January 2013.
\newblock ISSN 1367-2630.
\newblock \doi{10.1088/1367-2630/15/1/013021}.
\newblock URL \url{http://dx.doi.org/10.1088/1367-2630/15/1/013021}.

\bibitem[Clader et~al.(2013)Clader, Jacobs, and Sprouse]{CJS13}
B.~D. Clader, B.~C. Jacobs, and C.~R. Sprouse.
\newblock Preconditioned quantum linear system algorithm.
\newblock \emph{Physical Review Letters}, 110\penalty0 (25), June 2013.
\newblock ISSN 1079-7114.
\newblock \doi{10.1103/physrevlett.110.250504}.
\newblock URL \url{http://dx.doi.org/10.1103/physrevlett.110.250504}.

\bibitem[Montanaro and Pallister(2016)]{MP16}
Ashley Montanaro and Sam Pallister.
\newblock Quantum algorithms and the finite element method.
\newblock \emph{Physical Review A}, 93\penalty0 (3), March 2016.
\newblock ISSN 2469-9934.
\newblock \doi{10.1103/physreva.93.032324}.
\newblock URL \url{http://dx.doi.org/10.1103/physreva.93.032324}.

\bibitem[Costa et~al.(2019{\natexlab{a}})Costa, Jordan, and Ostrander]{CJO19}
Pedro C.~S. Costa, Stephen Jordan, and Aaron Ostrander.
\newblock Quantum algorithm for simulating the wave equation.
\newblock \emph{Physical Review A}, 99\penalty0 (1), January 2019{\natexlab{a}}.
\newblock ISSN 2469-9934.
\newblock \doi{10.1103/physreva.99.012323}.
\newblock URL \url{http://dx.doi.org/10.1103/physreva.99.012323}.

\bibitem[Linden et~al.(2022)Linden, Montanaro, and Shao]{LMS20}
Noah Linden, Ashley Montanaro, and Changpeng Shao.
\newblock Quantum vs. classical algorithms for solving the heat equation.
\newblock \emph{Communications in Mathematical Physics}, 395\penalty0 (2):\penalty0 601--641, August 2022.
\newblock ISSN 1432-0916.
\newblock \doi{10.1007/s00220-022-04442-6}.
\newblock URL \url{http://dx.doi.org/10.1007/s00220-022-04442-6}.

\bibitem[Childs et~al.(2021)Childs, Liu, and Ostrander]{highPrecision}
Andrew~M. Childs, Jin-Peng Liu, and Aaron Ostrander.
\newblock High-precision quantum algorithms for partial differential equations.
\newblock \emph{Quantum}, 5:\penalty0 574, November 2021.
\newblock ISSN 2521-327X.
\newblock \doi{10.22331/q-2021-11-10-574}.
\newblock URL \url{http://dx.doi.org/10.22331/q-2021-11-10-574}.

\bibitem[Jin et~al.(2023{\natexlab{a}})Jin, Liu, and Yu]{jinQuantumSimulationPartial2023}
Shi Jin, Nana Liu, and Yue Yu.
\newblock Quantum simulation of partial differential equations: Applications and detailed analysis.
\newblock \emph{Physical Review A}, 108\penalty0 (3), September 2023{\natexlab{a}}.
\newblock ISSN 2469-9934.
\newblock \doi{10.1103/physreva.108.032603}.
\newblock URL \url{http://dx.doi.org/10.1103/physreva.108.032603}.

\bibitem[Jin et~al.(2024{\natexlab{a}})Jin, Liu, and Yu]{jinQuantumSimulationFokkerPlanck2024}
Shi Jin, Nana Liu, and Yue Yu.
\newblock Quantum simulation of the {F}okker-{P}lanck equation via {S}chrodingerization, 2024{\natexlab{a}}.
\newblock URL \url{https://arxiv.org/abs/2404.13585}.

\bibitem[Jin et~al.(2023{\natexlab{b}})Jin, Liu, and Ma]{jinQuantumSimulationMaxwell23}
Shi Jin, Nana Liu, and Chuwen Ma.
\newblock Quantum simulation of {M}axwell's equations via {S}chr{\"o}dingersation, 2023{\natexlab{b}}.
\newblock URL \url{https://arxiv.org/abs/2308.08408}.

\bibitem[Jin et~al.(2023{\natexlab{c}})Jin, Li, Liu, and Yu]{jinQuantumSimulationBoundary24}
Shi Jin, Xiantao Li, Nana Liu, and Yue Yu.
\newblock Quantum simulation for partial differential equations with physical boundary or interface conditions, 2023{\natexlab{c}}.
\newblock URL \url{https://arxiv.org/abs/2305.02710}.

\bibitem[Jin et~al.(2024{\natexlab{b}})Jin, Liu, and Ma]{jinSchrodingerizationIllPose24}
Shi Jin, Nana Liu, and Chuwen Ma.
\newblock {S}chr{\"o}dingerisation based computationally stable algorithms for ill-posed problems in partial differential equations, 2024{\natexlab{b}}.
\newblock URL \url{https://arxiv.org/abs/2403.19123}.

\bibitem[An et~al.(2023{\natexlab{a}})An, Liu, and Lin]{LCHS1}
Dong An, Jin-Peng Liu, and Lin Lin.
\newblock Linear combination of {H}amiltonian simulation for nonunitary dynamics with optimal state preparation cost.
\newblock \emph{Physical Review Letters}, 131\penalty0 (15), October 2023{\natexlab{a}}.
\newblock ISSN 1079-7114.
\newblock \doi{10.1103/physrevlett.131.150603}.
\newblock URL \url{http://dx.doi.org/10.1103/physrevlett.131.150603}.

\bibitem[An et~al.(2023{\natexlab{b}})An, Childs, and Lin]{LCHS2}
Dong An, Andrew~M. Childs, and Lin Lin.
\newblock Quantum algorithm for linear non-unitary dynamics with near-optimal dependence on all parameters, 2023{\natexlab{b}}.
\newblock URL \url{https://arxiv.org/abs/2312.03916}.

\bibitem[Childs and Wiebe(2012)]{childsHamiltonianSimulationUsing2012}
Andrew~M. Childs and Nathan Wiebe.
\newblock {H}amiltonian simulation using linear combinations of unitary operations.
\newblock \emph{Quantum Information and Computation}, 12\penalty0 (11 \& 12):\penalty0 901–924, November 2012.
\newblock ISSN 1533-7146.
\newblock \doi{10.26421/qic12.11-12-1}.
\newblock URL \url{http://dx.doi.org/10.26421/QIC12.11-12-1}.

\bibitem[Berry et~al.(2006)Berry, Ahokas, Cleve, and Sanders]{berryEfficientQuantumAlgorithms2007}
Dominic~W. Berry, Graeme Ahokas, Richard Cleve, and Barry~C. Sanders.
\newblock Efficient quantum algorithms for simulating sparse {H}amiltonians.
\newblock \emph{Communications in Mathematical Physics}, 270\penalty0 (2):\penalty0 359--371, December 2006.
\newblock ISSN 1432-0916.
\newblock \doi{10.1007/s00220-006-0150-x}.
\newblock URL \url{http://dx.doi.org/10.1007/s00220-006-0150-x}.

\bibitem[Lin(2022)]{Lin2022}
Lin Lin.
\newblock Lecture notes on quantum algorithms for scientific computation, 2022.
\newblock URL \url{https://arxiv.org/abs/2201.08309}.

\bibitem[Camps et~al.(2022)Camps, Lin, Van~Beeumen, and Yang]{campsExplicitQuantumCircuits2022}
Daan Camps, Lin Lin, Roel Van~Beeumen, and Chao Yang.
\newblock Explicit quantum circuits for block encodings of certain sparse matrices, 2022.
\newblock URL \url{https://arxiv.org/abs/2203.10236}.

\bibitem[Low and Chuang(2017)]{lowOptimalHamiltonianSimulation2017}
Guang~Hao Low and Isaac~L. Chuang.
\newblock Optimal {H}amiltonian simulation by quantum signal processing.
\newblock \emph{Physical Review Letters}, 118\penalty0 (1), January 2017.
\newblock ISSN 1079-7114.
\newblock \doi{10.1103/physrevlett.118.010501}.
\newblock URL \url{http://dx.doi.org/10.1103/physrevlett.118.010501}.

\bibitem[Low and Chuang(2019)]{lowHamiltonianSimulationQubitization2019}
Guang~Hao Low and Isaac~L. Chuang.
\newblock {H}amiltonian simulation by qubitization.
\newblock \emph{Quantum}, 3:\penalty0 163, July 2019.
\newblock ISSN 2521-327X.
\newblock \doi{10.22331/q-2019-07-12-163}.
\newblock URL \url{http://dx.doi.org/10.22331/q-2019-07-12-163}.

\bibitem[Gilyén et~al.(2019)Gilyén, Su, Low, and Wiebe]{gilyenQuantumSingularValue2019}
András Gilyén, Yuan Su, Guang~Hao Low, and Nathan Wiebe.
\newblock Quantum singular value transformation and beyond: exponential improvements for quantum matrix arithmetics.
\newblock In \emph{Proceedings of the 51st Annual ACM SIGACT Symposium on Theory of Computing}, STOC ’19, page 193–204. ACM, June 2019.
\newblock \doi{10.1145/3313276.3316366}.
\newblock URL \url{http://dx.doi.org/10.1145/3313276.3316366}.

\bibitem[Lin and Tong(2020)]{linOptimalPolynomialBased2020a}
Lin Lin and Yu~Tong.
\newblock Optimal polynomial based quantum eigenstate filtering with application to solving quantum linear systems.
\newblock \emph{Quantum}, 4:\penalty0 361, November 2020.
\newblock ISSN 2521-327X.
\newblock \doi{10.22331/q-2020-11-11-361}.
\newblock URL \url{http://dx.doi.org/10.22331/q-2020-11-11-361}.

\bibitem[Liu et~al.(2021)Liu, Kolden, Krovi, Loureiro, Trivisa, and Childs]{liu2021a}
Jin-Peng Liu, Herman~{\O}ie Kolden, Hari~K. Krovi, Nuno~F. Loureiro, Konstantina Trivisa, and Andrew~M. Childs.
\newblock Efficient quantum algorithm for dissipative nonlinear differential equations.
\newblock \emph{Proceedings of the National Academy of Sciences}, 118\penalty0 (35), August 2021.
\newblock ISSN 1091-6490.
\newblock \doi{10.1073/pnas.2026805118}.
\newblock URL \url{http://dx.doi.org/10.1073/pnas.2026805118}.

\bibitem[An et~al.(2022{\natexlab{b}})An, Fang, Jordan, Liu, Low, and Wang]{An2022}
Dong An, Di~Fang, Stephen Jordan, Jin-Peng Liu, Guang~Hao Low, and Jiasu Wang.
\newblock Efficient quantum algorithm for nonlinear reaction-diffusion equations and energy estimation, 2022{\natexlab{b}}.
\newblock URL \url{https://arxiv.org/abs/2205.01141}.

\bibitem[Liu et~al.(2024)Liu, Liu, Liu, Ye, Wang, Alexeev, Eisert, and Jiang]{liu2024}
Junyu Liu, Minzhao Liu, Jin-Peng Liu, Ziyu Ye, Yunfei Wang, Yuri Alexeev, Jens Eisert, and Liang Jiang.
\newblock Towards provably efficient quantum algorithms for large-scale machine-learning models.
\newblock \emph{Nature Communications}, 15\penalty0 (1), January 2024.
\newblock ISSN 2041-1723.
\newblock \doi{10.1038/s41467-023-43957-x}.
\newblock URL \url{http://dx.doi.org/10.1038/s41467-023-43957-x}.

\bibitem[Forets and Pouly(2017)]{foretsExplicitErrorBounds2018}
Marcelo Forets and Amaury Pouly.
\newblock Explicit error bounds for carleman linearization, 2017.
\newblock URL \url{https://arxiv.org/abs/1711.02552}.

\bibitem[Li et~al.(2025)Li, Yin, Wiebe, Chun, Schenter, Cheung, and Mülmenstädt]{liPotentialQuantumAdvantage2023}
Xiangyu Li, Xiaolong Yin, Nathan Wiebe, Jaehun Chun, Gregory~K. Schenter, Margaret~S. Cheung, and Johannes Mülmenstädt.
\newblock Potential quantum advantage for simulation of fluid dynamics.
\newblock \emph{Physical Review Research}, 7\penalty0 (1), January 2025.
\newblock ISSN 2643-1564.
\newblock \doi{10.1103/physrevresearch.7.013036}.
\newblock URL \url{http://dx.doi.org/10.1103/PhysRevResearch.7.013036}.

\bibitem[Fang et~al.(2023)Fang, Lin, and Tong]{fangTimemarchingBasedQuantuma}
Di~Fang, Lin Lin, and Yu~Tong.
\newblock Time-marching based quantum solvers for time-dependent linear differential equations.
\newblock \emph{Quantum}, 7:\penalty0 955, March 2023.
\newblock ISSN 2521-327X.
\newblock \doi{10.22331/q-2023-03-20-955}.
\newblock URL \url{http://dx.doi.org/10.22331/q-2023-03-20-955}.

\bibitem[Jin et~al.(2022)Jin, Liu, and Yu]{jinTimeComplexityAnalysis2022a}
Shi Jin, Nana Liu, and Yue Yu.
\newblock Time complexity analysis of quantum difference methods for linear high dimensional and multiscale partial differential equations.
\newblock \emph{Journal of Computational Physics}, 471:\penalty0 111641, December 2022.
\newblock ISSN 0021-9991.
\newblock \doi{10.1016/j.jcp.2022.111641}.
\newblock URL \url{http://dx.doi.org/10.1016/j.jcp.2022.111641}.

\bibitem[Costa et~al.(2019{\natexlab{b}})Costa, Jordan, and Ostrander]{costaQuantumAlgorithmSimulating2019}
Pedro C.~S. Costa, Stephen Jordan, and Aaron Ostrander.
\newblock Quantum algorithm for simulating the wave equation.
\newblock \emph{Physical Review A}, 99\penalty0 (1), January 2019{\natexlab{b}}.
\newblock ISSN 2469-9934.
\newblock \doi{10.1103/physreva.99.012323}.
\newblock URL \url{http://dx.doi.org/10.1103/PhysRevA.99.012323}.

\bibitem[Nguyen et~al.(2022)Nguyen, Kiani, and Lloyd]{nguyenBlockencodingDenseFullrank2022}
Quynh~T. Nguyen, Bobak~T. Kiani, and Seth Lloyd.
\newblock Block-encoding dense and full-rank kernels using hierarchical matrices: applications in quantum numerical linear algebra.
\newblock \emph{Quantum}, 6:\penalty0 876, December 2022.
\newblock ISSN 2521-327X.
\newblock \doi{10.22331/q-2022-12-13-876}.
\newblock URL \url{http://dx.doi.org/10.22331/q-2022-12-13-876}.

\bibitem[Kivlichan et~al.(2017)Kivlichan, Wiebe, Babbush, and Aspuru-Guzik]{kivlichanBoundingCostsQuantum2017}
Ian~D Kivlichan, Nathan Wiebe, Ryan Babbush, and Al{\'a}n Aspuru-Guzik.
\newblock Bounding the costs of quantum simulation of many-body physics in real space.
\newblock \emph{Journal of Physics A: Mathematical and Theoretical}, 50\penalty0 (30):\penalty0 305301, June 2017.
\newblock ISSN 1751-8121.
\newblock \doi{10.1088/1751-8121/aa77b8}.
\newblock URL \url{http://dx.doi.org/10.1088/1751-8121/aa77b8}.

\bibitem[Childs et~al.(2022)Childs, Leng, Li, Liu, and Zhang]{childsQuantumSimulationRealspace2022}
Andrew~M. Childs, Jiaqi Leng, Tongyang Li, Jin-Peng Liu, and Chenyi Zhang.
\newblock Quantum simulation of real-space dynamics.
\newblock \emph{Quantum}, 6:\penalty0 860, November 2022.
\newblock ISSN 2521-327X.
\newblock \doi{10.22331/q-2022-11-17-860}.
\newblock URL \url{http://dx.doi.org/10.22331/q-2022-11-17-860}.

\bibitem[Su et~al.(2021)Su, Berry, Wiebe, Rubin, and Babbush]{Su2021FirstQuant}
Yuan Su, Dominic~W. Berry, Nathan Wiebe, Nicholas Rubin, and Ryan Babbush.
\newblock Fault-tolerant quantum simulations of chemistry in first quantization.
\newblock \emph{PRX Quantum}, 2\penalty0 (4), November 2021.
\newblock ISSN 2691-3399.
\newblock \doi{10.1103/prxquantum.2.040332}.
\newblock URL \url{http://dx.doi.org/10.1103/PRXQuantum.2.040332}.

\bibitem[Li et~al.(2023)Li, Ni, and Ying]{liEfficientQuantumBlock2023a}
Haoya Li, Hongkang Ni, and Lexing Ying.
\newblock On efficient quantum block encoding of pseudo-differential operators.
\newblock \emph{Quantum}, 7:\penalty0 1031, June 2023.
\newblock ISSN 2521-327X.
\newblock \doi{10.22331/q-2023-06-02-1031}.
\newblock URL \url{http://dx.doi.org/10.22331/q-2023-06-02-1031}.

\bibitem[Boyd(2001)]{boydChebyshevFourierSpectral2001}
John~P Boyd.
\newblock \emph{Chebyshev and {{Fourier Spectral Methods}}}.
\newblock Dover Publications, 2001.

\bibitem[Guseynov et~al.(2024)Guseynov, Huang, and Liu]{guseynovExplicitGateConstruction2024}
Nikita Guseynov, Xiajie Huang, and Nana Liu.
\newblock Efficient explicit gate construction of block-encoding for {H}amiltonians needed for simulating partial differential equations, 2024.
\newblock URL \url{https://arxiv.org/abs/2405.12855}.

\bibitem[Bergholm et~al.(2018)Bergholm, Izaac, Schuld, Gogolin, Ahmed, Ajith, Alam, Alonso-Linaje, AkashNarayanan, Asadi, Arrazola, Azad, Banning, Blank, Bromley, Cordier, Ceroni, Delgado, Di~Matteo, Dusko, Garg, Guala, Hayes, Hill, Ijaz, Isacsson, Ittah, Jahangiri, Jain, Jiang, Khandelwal, Kottmann, Lang, Lee, Loke, Lowe, McKiernan, Meyer, Montañez-Barrera, Moyard, Niu, O'Riordan, Oud, Panigrahi, Park, Polatajko, Quesada, Roberts, Sá, Schoch, Shi, Shu, Sim, Singh, Strandberg, Soni, Száva, Thabet, Vargas-Hernández, Vincent, Vitucci, Weber, Wierichs, Wiersema, Willmann, Wong, Zhang, and Killoran]{bergholm2022pennylaneautomaticdifferentiationhybrid}
Ville Bergholm, Josh Izaac, Maria Schuld, Christian Gogolin, Shahnawaz Ahmed, Vishnu Ajith, M.~Sohaib Alam, Guillermo Alonso-Linaje, B.~AkashNarayanan, Ali Asadi, Juan~Miguel Arrazola, Utkarsh Azad, Sam Banning, Carsten Blank, Thomas~R Bromley, Benjamin~A. Cordier, Jack Ceroni, Alain Delgado, Olivia Di~Matteo, Amintor Dusko, Tanya Garg, Diego Guala, Anthony Hayes, Ryan Hill, Aroosa Ijaz, Theodor Isacsson, David Ittah, Soran Jahangiri, Prateek Jain, Edward Jiang, Ankit Khandelwal, Korbinian Kottmann, Robert~A. Lang, Christina Lee, Thomas Loke, Angus Lowe, Keri McKiernan, Johannes~Jakob Meyer, J.~A. Montañez-Barrera, Romain Moyard, Zeyue Niu, Lee~James O'Riordan, Steven Oud, Ashish Panigrahi, Chae-Yeun Park, Daniel Polatajko, Nicolás Quesada, Chase Roberts, Nahum Sá, Isidor Schoch, Borun Shi, Shuli Shu, Sukin Sim, Arshpreet Singh, Ingrid Strandberg, Jay Soni, Antal Száva, Slimane Thabet, Rodrigo~A. Vargas-Hernández, Trevor Vincent, Nicola Vitucci, Maurice Weber, David Wierichs, Roeland Wiersema, Moritz
  Willmann, Vincent Wong, Shaoming Zhang, and Nathan Killoran.
\newblock Pennylane: Automatic differentiation of hybrid quantum-classical computations, 2018.
\newblock URL \url{https://arxiv.org/abs/1811.04968}.

\bibitem[Kharazi(2025)]{KharaziCode}
Tyler Kharazi.
\newblock Block encoding demos.
\newblock \url{https://github.com/Kharazitd/BlockEncodingDemos}, 2025.

\bibitem[{Wikipedia contributors}(2023)]{FiniteDifferenceCoefficient2023}
{Wikipedia contributors}.
\newblock Finite difference coefficient, 2023.
\newblock URL \url{https://en.wikipedia.org/w/index.php?title=Finite_difference_coefficient&oldid=1171577519}.
\newblock Accessed: 2025-05-21.

\bibitem[Li(2005)]{liGeneralExplicitDifference2005}
Jianping Li.
\newblock General explicit difference formulas for numerical differentiation.
\newblock \emph{Journal of Computational and Applied Mathematics}, 183\penalty0 (1):\penalty0 29--52, November 2005.
\newblock ISSN 0377-0427.
\newblock \doi{10.1016/j.cam.2004.12.026}.
\newblock URL \url{http://dx.doi.org/10.1016/j.cam.2004.12.026}.

\bibitem[Rieffel and Polak(2014)]{rieffel2014quantum}
E.~Rieffel and W.~Polak.
\newblock \emph{Quantum Computing: A Gentle Introduction}.
\newblock Scientific and engineering computation. MIT Press, 2014.
\newblock URL \url{https://books.google.com/books?id=BLjDswEACAAJ}.

\bibitem[Barenco et~al.(1995)Barenco, Bennett, Cleve, DiVincenzo, Margolus, Shor, Sleator, Smolin, and Weinfurter]{barencoElementaryGatesQuantum1995}
Adriano Barenco, Charles~H. Bennett, Richard Cleve, David~P. DiVincenzo, Norman Margolus, Peter Shor, Tycho Sleator, John~A. Smolin, and Harald Weinfurter.
\newblock Elementary gates for quantum computation.
\newblock \emph{Physical Review A}, 52\penalty0 (5):\penalty0 3457--3467, November 1995.
\newblock ISSN 1094-1622.
\newblock \doi{10.1103/physreva.52.3457}.
\newblock URL \url{http://dx.doi.org/10.1103/PhysRevA.52.3457}.

\bibitem[Gidney(2015)]{ConstructingLargeIncrement}
Craig Gidney.
\newblock Constructing {{Large Increment Gates}}, 2015.
\newblock URL \url{https://algassert.com/circuits/2015/06/12/Constructing-Large-Increment-Gates.html}.
\newblock Accessed: 2023-10-03.

\bibitem[Gidney(2018)]{gidneyHalvingCostQuantum2018}
Craig Gidney.
\newblock Halving the cost of quantum addition.
\newblock \emph{Quantum}, 2:\penalty0 74, June 2018.
\newblock ISSN 2521-327X.
\newblock \doi{10.22331/q-2018-06-18-74}.
\newblock URL \url{http://dx.doi.org/10.22331/q-2018-06-18-74}.

\bibitem[Motlagh and Wiebe(2023)]{motlaghGeneralizedQuantumSignal2023}
Danial Motlagh and Nathan Wiebe.
\newblock Generalized quantum signal processing, 2023.
\newblock URL \url{https://arxiv.org/abs/2308.01501}.

\bibitem[Dong et~al.(2021)Dong, Meng, Whaley, and Lin]{dongEfficientPhasefactorEvaluation2021}
Yulong Dong, Xiang Meng, K.~Birgitta Whaley, and Lin Lin.
\newblock Efficient phase-factor evaluation in quantum signal processing.
\newblock \emph{Physical Review A}, 103\penalty0 (4), April 2021.
\newblock ISSN 2469-9934.
\newblock \doi{10.1103/physreva.103.042419}.
\newblock URL \url{http://dx.doi.org/10.1103/PhysRevA.103.042419}.

\bibitem[Rosales(2013)]{rosalesRobinBoundaryConditions}
R~R Rosales.
\newblock Robin boundary conditions and the method of images, 2013.
\newblock URL \url{https://math.mit.edu/classes/18.306/Notes/}.
\newblock MIT 18.306 Course Notes. [Online; accessed 2025-05-21].

\bibitem[Tong et~al.(2021)Tong, An, Wiebe, and Lin]{tongFastInversionPreconditioned2021a}
Yu~Tong, Dong An, Nathan Wiebe, and Lin Lin.
\newblock Fast inversion, preconditioned quantum linear system solvers, fast green's-function computation, and fast evaluation of matrix functions.
\newblock \emph{Physical Review A}, 104\penalty0 (3), September 2021.
\newblock ISSN 2469-9934.
\newblock \doi{10.1103/physreva.104.032422}.
\newblock URL \url{http://dx.doi.org/10.1103/PhysRevA.104.032422}.

\bibitem[Ruuth and Merriman(2008)]{ruuthSimpleEmbeddingMethod2008a}
Steven~J. Ruuth and Barry Merriman.
\newblock A simple embedding method for solving partial differential equations on surfaces.
\newblock \emph{Journal of Computational Physics}, 227\penalty0 (3):\penalty0 1943--1961, January 2008.
\newblock ISSN 0021-9991.
\newblock \doi{10.1016/j.jcp.2007.10.009}.
\newblock URL \url{http://dx.doi.org/10.1016/j.jcp.2007.10.009}.

\bibitem[Colella et~al.(2008)Colella, Graves, Ligocki, Trebotich, and Straalen]{colellaEmbeddedBoundaryAlgorithms2008}
P~Colella, D~Graves, T~Ligocki, D~Trebotich, and B~V Straalen.
\newblock Embedded boundary algorithms and software for partial differential equations.
\newblock \emph{Journal of Physics: Conference Series}, 125:\penalty0 012084, July 2008.
\newblock ISSN 1742-6596.
\newblock \doi{10.1088/1742-6596/125/1/012084}.
\newblock URL \url{http://dx.doi.org/10.1088/1742-6596/125/1/012084}.

\bibitem[Xu et~al.(2023)Xu, Liu, Wu, and Ji]{xuHighefficiencyDiscretizedImmersed2023}
Dong Xu, Jianing Liu, Yunfeng Wu, and Chunning Ji.
\newblock A high-efficiency discretized immersed boundary method for moving boundaries in incompressible flows.
\newblock \emph{Scientific Reports}, 13\penalty0 (1), January 2023.
\newblock ISSN 2045-2322.
\newblock \doi{10.1038/s41598-023-28878-5}.
\newblock URL \url{http://dx.doi.org/10.1038/s41598-023-28878-5}.

\bibitem[Coco and Russo(2024)]{cocoHighOrderFinitedifference2024}
Armando Coco and Giovanni Russo.
\newblock High order finite-difference ghost-point methods for elliptic problems in domains with curved boundaries, 2024.
\newblock URL \url{https://arxiv.org/abs/2405.13986}.

\bibitem[Sanders et~al.(2019)Sanders, Low, Scherer, and Berry]{sandersBlackBoxQuantumState2019}
Yuval~R. Sanders, Guang~Hao Low, Artur Scherer, and Dominic~W. Berry.
\newblock Black-box quantum state preparation without arithmetic.
\newblock \emph{Physical Review Letters}, 122\penalty0 (2), January 2019.
\newblock ISSN 1079-7114.
\newblock \doi{10.1103/physrevlett.122.020502}.
\newblock URL \url{http://dx.doi.org/10.1103/PhysRevLett.122.020502}.

\bibitem[Mukhopadhyay et~al.(2023)Mukhopadhyay, Stetina, and Wiebe]{mukhopadhyayQuantumSimulationFirstQuantized2023}
Priyanka Mukhopadhyay, Torin~F. Stetina, and Nathan Wiebe.
\newblock Quantum simulation of the first-quantized {P}auli-{F}ierz {H}amiltonian.
\newblock 2023.
\newblock \doi{10.48550/ARXIV.2306.11198}.
\newblock URL \url{https://arxiv.org/abs/2306.11198}.

\bibitem[Cuccaro et~al.(2004)Cuccaro, Draper, Kutin, and Moulton]{cuccaro2004new}
Steven~A. Cuccaro, Thomas~G. Draper, Samuel~A. Kutin, and David~Petrie Moulton.
\newblock A new quantum ripple-carry addition circuit, 2004.
\newblock URL \url{https://arxiv.org/abs/quant-ph/0410184}.

\end{thebibliography}

\appendix
\section{\label{app:circs} Circuit Implementations}
\subsection{Modular increment and decrement}
The modular left and right shift operators implement the following transformation on $N = 2^n$ states
\begin{equation}
    \begin{split}
        S^{-1}\ket{l} &= \ket{(l-1)\mod N}\\
        S^1\ket{l} &= \ket{(l+1)\mod N}.
    \end{split}
\end{equation}
In this section, we will construct the quantum circuit for the incrementer operation to provide some understanding of the underlying circuit structure.

Let's first consider the simplest example, which is just a single qubit. In this case the shift-by-one operator is just the bitflip operator, or the Pauli-$X$ gate
\begin{equation}
    \ket{b_0} \mapsto \ket{b_0\oplus1}.
\end{equation}
which is realized as the circuit

\begin{center}
    \begin{quantikz}
        \lstick{$\ket{b_0}$} & \gate{X} &\qw 
    \end{quantikz}
\end{center}

The two qubit case $\ket{b_1,b_0}\mapsto \ket{b_1\oplus b_0, b_0\oplus 1}$, which flips the state of qubit $b_1$ if the state of $b_0$ is 1 and flips $b_0$. Therefore $\ket{00} \mapsto \ket{01} \mapsto \ket{10} \mapsto \ket{11} \mapsto \ket{00}$. This circuit is realized via a controlled-X gate and an X gate:
\begin{center}
    \begin{quantikz}
        \lstick{$\ket{b_1}$} & \targ{0}\vqw{1}  &      \qw & \qw  \\
        \lstick{$\ket{b_0}$} & \ctrl{0} & \gate{X} & \qw
    \end{quantikz}
\end{center}

The three qubit case is similar. This is given by the mapping 
\begin{equation*}
\ket{b_2,b_1,b_0}\mapsto \ket{b_2\oplus b_1 b_0 , b_1\oplus b_0, b_0\oplus 1}
\end{equation*}
, which flips the 2nd qubit if the $b_0 = b_1 = 1$, flips the first qubit if $b_0 = 1$, and flips $b_0$. We can see that
$\ket{000}\mapsto \ket{001}\mapsto \ket{010} \mapsto \ket{011}\mapsto \ket{100} \mapsto \ket{101} \mapsto \ket{110} \mapsto\ket{111}$
which is realized via the circuit:
\begin{center}
\begin{quantikz}
\lstick{$\ket{b_2}$} & \targ{1} \vqw{1} &\qw            & \qw      &\qw \\  
\lstick{$\ket{b_1}$} & \ctrl{1}         & \targ{0}\vqw{1} &\qw       &\qw  \\
\lstick{$\ket{b_0}$} & \ctrl{0}         & \ctrl{0}       & \gate{X} &\qw
\end{quantikz}
\end{center}

The four qubit case is similar, the algebraic representation is 
\begin{equation*}
    \ket{b_3,b_2,b_1,b_0}\mapsto\ket{b_3\oplus b_2b_1b_0, b_2\oplus b_1b_0, b_1\oplus b_0, b_0\oplus 1}
\end{equation*}
 and the circuit is
\begin{center}
    \begin{quantikz}
        \lstick{$\ket{b_3}$} &\targ{1}\vqw{1}  &\qw &\qw            & \qw      &\qw \\  
        \lstick{$\ket{b_2}$} & \ctrl{2}& \targ{1} \vqw{1} &\qw            & \qw      &\qw \\  
        \lstick{$\ket{b_1}$} & \ctrl{1}& \ctrl{1}         & \targ{0}\vqw{1} &\qw       &\qw  \\
        \lstick{$\ket{b_0}$} & \ctrl{0}& \ctrl{0}         & \ctrl{0}       & \gate{X} &\qw
    \end{quantikz}  
\end{center}
Carrying on in a similar fashion, for larger numbers of qubits will prepare the $R$ operator for $n-$qubits. The implementation for the left-shift-by-one (i.e. decrement) operator is similar, and can be obtained by replacing the control on 1 with a control on 0 or equivalently conjugating the increment circuit by with Pauli $X$ gates on all the wires. The total circuit depth is $\mathcal{O}(n)$. However, the multi-control Toffoli gates used in this scheme need to be compiled down to Toffoli gates, which can then be further compiled to one and two qubit gates. In \cite{barencoElementaryGatesQuantum1995}, it is shown that the direct compilation of this circuit will require $O(n^2)$ Toffoli or simpler gates. However, we can use the scheme provided in \cite{ConstructingLargeIncrement} to implement this quantum circuit using $O(n)$ Toffoli gates, at the cost of an ancilla qubit.

Notice that when we desire to shift by any non-negative power of 2, say $0 \leq j \leq n$, the above circuit simplifies to only acting on the $n-j$ high bits of the register. Therefore, to implement the shift by constant circuit, it is sufficient to represent that constant as a signed combination (i.e. coefficients $\pm 1$) of positive powers of $2$ less than or equal to $n$. Therefore, combining the above increment-by-one circuit (which is just incrementing by $2^0$), with a series of increments or decrements by powers of $2$ to obtain the desired operation. Since in practice $j$ is typically a small constant that does not scale with $n$, this procedure is quite efficient. 

\subsection{Reflection Operators}
\begin{lem}[Cost to synthesize diagonal reflection unitary]
\label{lem:ref mats circuit}
    Let $A \subset [N]$ mark states in the standard computational basis. Let $R_A$ be reflection about the states in $A$, that is
    \begin{equation}
        R_A = -2\sum_{j\in A}\ket{j}\bra{j} + I
    \end{equation}
    which is just
    \begin{equation}
        R_A\ket{i} = \begin{cases}
            1 & i \notin A\\
            -1 & i \in A
        \end{cases}
    \end{equation}
    Then the number of elementary gates to implement $R_A$ is $O(|A| \mathcal{C}(C^n(Z)))$ where $\mathcal{C}(C^n(Z))$ is the cost to implement the $n$-qubit controlled $z$-gate.
\end{lem}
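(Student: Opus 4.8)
The plan is to decompose $R_A$ into a product of single-state reflections and then synthesize each such reflection from a multi-controlled $Z$-gate conjugated by Pauli-$X$ gates. First I would observe that reflections about orthogonal one-dimensional subspaces commute and compose multiplicatively in the sense that if $R_j := I - 2\ket{j}\bra{j}$, then $R_A = \prod_{j \in A} R_j$. This holds because the $R_j$ are simultaneously diagonal in the computational basis: each $R_j$ acts as $-1$ on $\ket{j}$ and as $+1$ on every other basis state, so their product acts as $(-1)$ on exactly those $\ket{i}$ with $i \in A$ and as $+1$ elsewhere, which is precisely $R_A$. This reduces the problem to synthesizing a single-state reflection $R_j$ at cost $\mathcal{C}(C^n(Z))$ and then multiplying the gate counts by $|A|$.

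Next I would give the circuit for one $R_j$. Write $j$ in binary as $j = j_{n-1} j_{n-2} \cdots j_0$. The $n$-qubit controlled-$Z$ gate $C^n(Z)$ applies a $-1$ phase exactly to the all-ones basis state $\ket{1\cdots 1}$ and acts trivially on all others. Conjugating by $X$ on precisely the qubits where $j$ has a $0$ bit maps $\ket{j}$ to $\ket{1\cdots1}$ and back, so the conjugated circuit
\begin{equation*}
    R_j = \left(\bigotimes_{k:\, j_k = 0} X_k\right) C^n(Z) \left(\bigotimes_{k:\, j_k = 0} X_k\right)
\end{equation*}
implements the desired single-state reflection, using one $C^n(Z)$ and at most $2n = O(1)\cdot\mathcal{C}(C^n(Z))$ Pauli-$X$ gates (the $X$ layer cost being subdominant to $\mathcal{C}(C^n(Z))$, which is at least linear in $n$). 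Composing over $j \in A$ then yields $R_A$ with total cost $O(|A|\,\mathcal{C}(C^n(Z)))$, the $X$-conjugation layers being absorbed into the constant.

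I do not expect a genuine obstacle here; the statement is essentially a packaging lemma. The only point requiring a little care is the claim that the single-state reflections commute and multiply to $R_A$ even when $A$ is large — but this is immediate from simultaneous diagonalizability in the computational basis, so no eigenvalue bookkeeping beyond the two-case check above is needed. A minor secondary remark, which I would include for completeness, is that $\mathcal{C}(C^n(Z))$ is itself $O(n)$ Toffoli or simpler gates (indeed an $n$-controlled $Z$ is an $n$-controlled $X$ conjugated by Hadamards on the target, and multi-controlled $X$ gates admit linear-size decompositions with one ancilla, as already invoked for the shift operators earlier in the paper), so that the final count can equivalently be written $O(|A|\,n)$; but since the lemma is phrased in terms of $\mathcal{C}(C^n(Z))$ I would keep the bound in that form and simply note the translation.
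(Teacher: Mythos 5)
Your proof is correct and follows essentially the same route as the paper: decompose $R_A$ into a product of $|A|$ single-state reflections, realize each as $C^n(Z)$ conjugated by $X$ gates on the unset bits of the marked bitstring, and multiply the counts. Your explicit justification that the commuting diagonal reflections compose to $R_A$ is slightly more careful than the paper's, but the construction and cost accounting are identical.
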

\begin{proof}
    Let $a \in A$ and $a\in\{0,1\}^n$ is an $n-$bit string so that we may write $a = \sum_{j=0}^{n-1} a_j 2^j \simeq \{a_0, a_1, \ldots, a_{n-1}\} \equiv b$. Let $S[a] = \{j; a_j = 0 \forall a_j \in a\}$ be the set of all unset bits of $a$. Then we let $X_{S[a]} = \bigotimes_{i\in[n]}X^{\neg a_i}$ be the operator that performs an $X-$gate on every qubit that corresponds to an unset bit in the bitstring $a$. Additionally, let $C^{n}(Z)$ be the n-fold controlled $Z$ operation,
    \begin{equation}
        C^{n}(Z) = \sum_{i \in [N-1]}\ket{i}\bra{i} - \ket{N-1}\bra{N-1}.
    \end{equation}
    Then $\bra{d_0}\bra{d_1}\cdots\bra{d_{n-1}}X_{S[a]}C^{n}(Z)X_{S[a]}\ket{d_0}\ket{d_1}\cdots\ket{d_{n-1}}$ satisfies
    \begin{align*}
        &\bra{d_0}\bra{d_1}\cdots\bra{d_{n-1}}X_{S[a]}C^{n}(Z)X_{S[a]}\ket{d_0}\ket{d_1}\cdots\ket{d_{n-1}}\\
        &=\bra{d_0\oplus \neg a_0}\bra{d_1\oplus \neg a_1}\cdots\bra{d_{n-1}\oplus \neg a_{n-1}}C^{n}(Z)\ket{d_0\oplus \neg a_0} \ket{d_1\oplus \neg a_1}\cdots \ket{d_{n-1}\oplus \neg a_{n-1}}\\
    \end{align*}
    But then, since $\bra{i}C^n(z)\ket{i} = -1 \iff i = N-1$, implies that in order for the input state $\ket{d} = \ket{d_0}\cdots\ket{d_{n-1}}$ to be in the marked subspace, it must satisfy:
    \begin{equation}
    \begin{aligned}
        d_0\oplus \neg a_0 &= 1\\
        d_1 \oplus \neg a_1 &= 1\\
        &\vdots\\
        d_{n-1}\oplus \neg a_{n-1} &= 1
    \end{aligned}
    \end{equation}
    which implies
        \begin{equation}
    \begin{aligned}
        d_0 &= a_0\\
        d_1 &= a_1\\
        &\vdots\\
        d_{n-1} &= a_{n-1}
    \end{aligned}
    \end{equation}
    therefore, the operations above prepared the reflection about the marked state $a$. 

    We can form a product of these reflections to get a reflection over all the elements in a subspace marked by elements of a set. Since the number of elements in the set is $|A|$, we expect this process needs to be performed for $O(|A|)$ times. Resultantly, the cost to implement this operation is $O(|A| \mathcal{C}(C^n(z)))$ where $\mathcal{C}(C^n(z)) = O(n)$ is the cost to implement the controlled-Z on n-qubits in terms of Toffoli and T gates.
\end{proof}

\section{\label{app:BE} Block Encoding Costs}
\begin{thm}[Block encoding $d-$dimensional Laplacian with Dirichlet Boundary]
\label{thm:BE LCU dir apdx}
    Let $L$ be the $d-$dimensional finite difference Laplacian with a $p = 2a+1$ point scheme on $N$ grid points. And let $A = A_0 \cup A_1 \cdots \cup \cdots A_{d-1}$ be the set of all boundary points over the $d$-dimensional domain with each $A_i \subset [N]$. We additionally assume that each $A_i$ is a connected set in the sense that $\forall x,y \in A_i$, $\exists i \leq |A_i|$ such that $x\pm i \equiv y \mod N $ and we also assume that each $A_i$ can be specified without dependence on the coordinates in $A_j$ for every $j\neq i$. 
    
    We define the shifted set
    \begin{equation}
        A_{i+c} = \{j + c \mod N: j\in A_i\}.
    \end{equation}
    Then the $1d$ Laplacian $L_d$ with boundary nodes $A_d$ can be specified via the LCU
    \begin{equation}
        \frac{1}{2}\sum_{j=-a}^a r_j\left(S^j + S^jR_{A_d-j}\right) + \frac{1}{2}(I + R_{A_d}).
    \label{eq:Laplacian LCU appdx}
    \end{equation}
    
    And the $d$-dimensional Laplacian can be encoded using the direct sum
    \begin{equation}
        L = \bigoplus_{i=0}^{d-1} L_i,
    \end{equation}
    where $L_i$ is the Laplacian formed by separable connected Dirichlet boundary conditions in each direction $i$.
\end{thm}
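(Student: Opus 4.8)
The plan is to prove the one-dimensional identity by direct matrix algebra and then promote it to $d$ dimensions using the one-local tensor-product structure of Section~\ref{subsec:dom disc}; all index arithmetic below is mod $N$, as in the definition of $A_{i+c}$.

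\textbf{Step 1 (projector rewriting).} Since $R_B = I - 2\sum_{j\in B}\ket{j}\bra{j}$, we have $\tfrac12(I+R_B) = \Pi_B^{\perp}$ and $\tfrac12(I-R_B) = \Pi_B$, where $\Pi_B = \sum_{j\in B}\ket{j}\bra{j}$. Each shift summand then collapses to $\tfrac12 r_j\bigl(S^j + S^j R_{A_d-j}\bigr) = r_j\, S^j\Pi_{A_d-j}^{\perp}$, and the trailing term is $\tfrac12(I+R_{A_d}) = \Pi_{A_d}^{\perp}$. The crucial ingredient is the commutation $S^j\Pi_{A_d-j}^{\perp} = \Pi_{A_d}^{\perp} S^j$: applying $S^j$ and then deleting every row indexed by $A_d$ equals first deleting every column whose $S^j$-image lies in $A_d$; I would check this on basis states. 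Summing over $j$ and collecting the $j=0$ contribution with the trailing term shows the whole LCU equals $\Pi_{A_d}^{\perp}$ applied to the periodic $p$-point Laplacian of Theorem~\ref{thm:PeriodicLaplacian}, with the rows indexed by $A_d$ restored to a multiple of the identity — i.e.\ the Dirichlet matrix $L_d$ with homogeneous condition on the node set $A_d$.

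\textbf{Step 2 (entrywise check and role of the hypotheses).} I would then read off the entries of $L_d$ to confirm it is the intended discretization: for $k\notin A_d$ at cyclic distance $\ge a$ from $A_d$ the central stencil is untouched, $(L_d)_{k,k-j}=r_j$; for $k\notin A_d$ adjacent to the hole the stencil is intact but references the pinned nodes of $A_d$, which is exactly the Dirichlet behaviour (those terms migrate to the right-hand side); and for $k\in A_d$ the row is proportional to $\ket{k}\bra{k}$, so placing the boundary value in component $k$ of the load vector enforces it. Here the hypotheses enter: connectedness of each $A_i$ makes the hole a contiguous cyclic arc $[\min A_i,\max A_i]$, so the entries killed by the reflections are precisely the ``wrap-around'' band and the matrix is the periodic Laplacian on $[0,\min A_i)\cup(\max A_i,L]$ and $\propto I$ on $[\min A_i,\max A_i]$; I would also flag the mild requirement, for $p>3$, that $|A_i|$ be wide enough that a stencil centred at an exterior node next to the hole does not overshoot to the far side.

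\textbf{Step 3 ($d$ dimensions).} The hypothesis that each $A_i$ is specified independently of the other coordinates is exactly what makes the discretized operator one-local, $\mathcal{L}=\sum_{i=0}^{d-1} I_N^{\otimes i}\otimes L_i\otimes I_N^{\otimes(d-1-i)}=\bigoplus_{i=0}^{d-1}L_i$ in the sense used throughout the paper, each $L_i$ being a copy of the Step-1 matrix with boundary set $A_i$. A block encoding then follows from the pattern of Figs.~\ref{fig:d dim periodic circuit} and~\ref{fig:dirichlet lcu circuit}: prepare the uniform superposition over the $d$ dimension labels, conditionally apply the $1d$ LCU on the selected $n$-qubit register, and uncompute, at the cost of $\lceil\log d\rceil$ extra ancillas and a factor $d$ in the subnormalization. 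The genuinely delicate part is the bookkeeping in Step~2 — pinning down exactly which entries of each band $r_j S^j$ are annihilated by $R_{A_d-j}$ and dispatching the edge cases (nodes bordering $A_i$, the cyclic corners, stencil width versus hole width) — rather than anything conceptually hard.
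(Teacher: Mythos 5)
Your overall route is the same as the paper's: verify the one‑dimensional identity by showing that each term $\tfrac12 r_j\bigl(S^j+S^jR_{A_d-j}\bigr)$ reproduces the band $r_jS^j$ with the rows indexed by $A_d$ deleted, fix the boundary diagonal with the trailing term, and then lift to $d$ dimensions via separability. Your packaging through $\tfrac12(I+R_B)=\Pi_B^{\perp}$ and the commutation $S^j\Pi_{A_d-j}^{\perp}=\Pi_{A_d}^{\perp}S^j$ is a cleaner but equivalent version of the paper's entrywise computation of $\bra{m}\cdot\ket{n}$, and your Steps 2--3 (connectedness making the hole a contiguous cyclic arc, separability giving the direct sum, the $d$-dimensional LCU circuit) track the paper's argument; your caveat about stencil width versus hole width for $p>3$ is a genuine point the paper does not address.

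However, Step 1 contains an internal inconsistency you must resolve. With the paper's convention $R_B=I-2\Pi_B$, your identification $\tfrac12(I+R_{A_d})=\Pi_{A_d}^{\perp}$ is correct --- but $\Pi_{A_d}^{\perp}$ annihilates the rows indexed by $A_d$, so it cannot ``restore'' those rows to a multiple of the identity. Collecting terms, the LCU exactly as displayed evaluates to $\Pi_{A_d}^{\perp}\bigl(\sum_j r_jS^j\bigr)+\Pi_{A_d}^{\perp}=\Pi_{A_d}^{\perp}(L_{\mathrm{per}}+I)$, where $L_{\mathrm{per}}=\sum_j r_jS^j$ is the periodic Laplacian: the boundary rows are identically zero and every interior diagonal entry is shifted from $r_0$ to $r_0+1$, which is not the claimed block matrix $\mathrm{diag}\bigl(L^{(l)},I_{|A_d|},L^{(r)}\bigr)$. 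The term that actually restores the boundary rows is the projector onto $A_d$, namely $\Pi_{A_d}=\tfrac12(I-R_{A_d})$. (The paper's own proof makes the same sign slip, asserting $\Pi_{A_d}=\tfrac12(I-R_{A_d^c})=\tfrac12(I+R_{A_d})$, both of which in fact equal $\Pi_{A_d}^{\perp}$.) Since you carried out the projector algebra correctly, your stated conclusion contradicts your own computation; a complete proof must either correct the sign of the final reflection term or explicitly flag the discrepancy in the statement and prove the corrected identity.
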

\begin{proof}
    Our goal is to show that Eq. \eqref{eq:Laplacian LCU appdx} results in a matrix acting as the identity on the boundary points and the Laplacian on the interior points. Then, we enforce the assumed condition that the boundary nodes can be specified in each dimensions independently to combine these one-dimensional operators into the appropriate $d-$dimensional operator.

    The finite-difference, periodic $L$ operator on $N$ nodes with is expressible as the sum
    \begin{equation}
        L = \sum_{i=-a}^{a} r_i S^i
    \end{equation}
    where the coefficients $r_i$ are given by Eq. \eqref{eq:FD coeffs lap} and $S^i$ are the shift operators. Now, consider $A_d$ a set of boundary points on the $d$th spatial coordinate. Then,
    \begin{align*}
        &\sum_{j=-a}^a(r_jS^j + r_jS^jR_{A_d-j})\\
        &=\sum_{j=-a}^a(r_jS^j + r_jS^j(-2\sum_{k\in A_d} \ket{k-j}\bra{k-j} -I))\\
        &=\sum_{j=-a}^ar_j \sum_{i\in[N]}(\ket{i+j}\bra{i} + \ket{i+j}\bra{i}(-2\sum_{k\in A_d} \ket{k-j}\bra{k-j} + I))\\
    \end{align*}
    \begin{align*}
        &\sum_{i\in[N]}(-2\sum_{k\in A_d} \ket{i+j}\bra{i}\ket{k-j}\bra{k-j} + \ket{i+j}\bra{i}) \\
        &=-2\sum_{i\in[N]}\sum_{k\in A_d} \ket{i+j}\delta_{i,k-j}\bra{k-j} + \sum_{i\in[N]}\ket{i+j}\bra{i} \\
        &=-2\sum_{k\in A_d} \ket{k}\bra{k-j} + \sum_{i\in[N]}\ket{i+j}\bra{i}\\
    \end{align*}
    \begin{align*}
        &\sum_{i\in[N]}\ket{i+j}\bra{i} +  \ket{i+j}\bra{i}(-2\sum_{k\in A_d} \ket{k}\bra{k-j} + I))\\
        &= \sum_{i\in[N]}\ket{i+j}\bra{i} - 2\sum_{k\in A_d} \ket{k}\bra{k-j} + \sum_{i\in[N]}\ket{i+j}\bra{i}\\ 
        &= 2\sum_{i\in[N]}\ket{i+j}\bra{i} - 2\sum_{k\in A_d} \ket{k}\bra{k-j}\\
        &\propto \sum_{i\in[N]}\ket{i+j}\bra{i} - \sum_{k\in A_d} \ket{k}\bra{k-j}\\
        &=
        \begin{cases}
          0 & \forall \text{ } k \in A_d :\hspace{.2cm} \ket{i+j}\bra{i} = \ket{k}\bra{k-j}\\
          1 & \text{else}
        \end{cases}
    \end{align*}
    Then,
    \begin{align*}
        &\bra{m}\sum_{i\in[N]}\ket{i+j}\bra{i} - \sum_{k\in A_d} \ket{k}\bra{k-j}\ket{n}\\
        &=\sum_{i\in[N]}\delta_{i+j,m}\delta_{i,n} - \sum_{k\in A_d} \delta_{m,k}\delta_{n,k-j}\\
        &=\delta_{n+j,m} - \delta_{m,A_d} \delta_{n,m-j}\\
        &=\delta_{n+j,m} - \delta_{m,A_d} \delta_{n+j,m}\\
        &=\begin{cases}
            0 & m \in A_d \text{ and } n+j = m\\
            1 & m \notin A_d \text{ and } n+j=m
        \end{cases}
    \end{align*}
    which is the desired behavior. 
    
    So, we may now write:
    \begin{align*}
        &\sum_{j=-a}^ar_j\left(\sum_{n,m\in[N]}\delta_{n+j,m} - \sum_{n\in[N]}\sum_{m\in A_d}\ \delta_{n+j,m}\right)\\
        &=\sum_{j=-a}^ar_j\left(S^j - \sum_{n\in[N]}\sum_{m\in A_d}\ \delta_{n+j,m}\right)\\
        &=\sum_{j=-a}^ar_j\left(S^j - \sum_{n\in[N]}\sum_{m\in A_d} \delta_{n+j,m}\right)
    \end{align*}
    which more plainly now, we can see that this serves to zero out the entries on the rows supported in $A_d$. Now we turn to the particular case of the diagonal:
    \begin{align*}
        r_0[I - \sum_{m\in A_d}\delta_{m,m}]_{jj} = \begin{cases}
            r_0 & m \notin A_d\\
            0 & m \in A_d.
        \end{cases}
    \end{align*}
    Now, if we add another projector onto $A_d$ we find,
    \begin{equation}
        [r_0(I - \sum_{m\in A_d}\delta_{m,m}) + \Pi_{A_d}]_{ii} = \begin{cases}
            r_0 & i \notin A_d\\
            1 & i \in A_d
        \end{cases}.
    \end{equation}

    Then, we can decompose the projector $\Pi_{A_d} = I - \Pi_{A_d^c}$ as 
    \begin{equation}
        \Pi_{A_d} = \frac{1}{2}\left(I - R_{A_d^c}\right) = \frac{1}{2}\left(I + R_{A_d}\right)
    \end{equation}
    to obtain the final LCU
    \begin{equation}
        L_d =\frac{1}{2}\sum_{j=-a}^a r_j\left(S^j + S^jR_{A_d-j}\right) + \frac{1}{2}(I + R_{A_d}).
        \label{eq:LCU for BCs}
    \end{equation}
    Resultantly, the final matrix has the form
    \begin{equation}
    L_d = 
        \begin{pmatrix}
            {L}^{(l)}&  & \\
             & I_{|A_d|} & \\
             & & {L}^{(r)}
        \end{pmatrix}
    \end{equation}
    as desired.
\end{proof}

\begin{thm}[Cost to block encode the LCU given in Eq. \eqref{eq:Lap LCU} ]
\label{thm:BE Diffusion Cost appdx}
    The number of elementary gates needed to implement the LCU given by Eq. \eqref{eq:Lap LCU} is $O(p|A|\mathcal{C}(C^n(Z)))$ Toffoli. Furthermore, the above implements an $(O(1), \log(2p),0)$ block encoding of the desired matrix.
\end{thm}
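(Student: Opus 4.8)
The plan is to apply the standard LCU template of Figure~\ref{fig:LCU} to the decomposition \eqref{eq:Laplacian LCU appdx} established in Theorem~\ref{thm:BE LCU dir apdx}, and then to read off the subnormalization, the ancilla count, and the error, and to tally the gate count, in that order. Throughout, I take the overall prefactor $1/h^2$ to have been absorbed into the state preparation of the right-hand side, as elsewhere in Section~\ref{sec:Laplacian}; this is what keeps the subnormalization bounded.

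First, count the summands. The LCU \eqref{eq:Laplacian LCU appdx} is a linear combination of the $2a+1$ shifts $\{S^j\}_{-a\le j\le a}$, the $2a+1$ shifted reflections $\{S^j R_{A_d-j}\}_{-a\le j\le a}$, the identity, and $R_{A_d}$, hence of at most $2(2a+1)+2 = 2p+2$ unitary terms. The LCU ancilla register therefore carries $q = \lceil\log(2p+2)\rceil = \log(2p)+O(1)$ qubits, which is the claimed count. For the subnormalization, the $1$-norm of the coefficients in \eqref{eq:Laplacian LCU appdx} is $\lambda = \tfrac12\sum_{j=-a}^a|r_j| + \tfrac12\sum_{j=-a}^a|r_j| + \tfrac12 + \tfrac12 = 1 + \sum_{j=-a}^a|r_j|$, and invoking the Basel-type bound \eqref{eq: alpha upper bound} of Ref.~\cite{kivlichanBoundingCostsQuantum2017} — valid precisely because $1/h^2$ sits in the state preparation — gives $\sum_{j=-a}^a|r_j|\le \tfrac{2\pi^2}{3}$, hence $\lambda \le 1+\tfrac{2\pi^2}{3} = O(1)$, independent of $p$ and $N$.

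Next, exactness. Under the convention fixed after Definition~\ref{def: Block Encoding}, the only possible error is in preparing the $2p+2$ LCU amplitudes; treating the $\textsc{prep}$ and $\textsc{prep}^\dagger$ unitaries as exact (their synthesis error is subdominant and is folded into the global truncation error elsewhere) and noting that $\textsc{sel}$ is a controlled direct sum of exactly implementable shift and reflection operators, the block-encoding error is $\epsilon = 0$. Finally, the gate count. Inside $\textsc{sel}$, controlled on a basis state of the $O(\log p)$-qubit ancilla register, we apply the $2a+1$ shifts $S^j$, each costing $O(n)$ Toffoli or simpler gates by the incrementer construction of Appendix~\ref{app:circs} (following~\cite{ConstructingLargeIncrement}), and the $2a+1$ shifted reflections $R_{A_d-j}$ together with $R_{A_d}$, each costing $O(|A_d|\,\mathcal{C}(C^n(Z)))$ by Lemma~\ref{lem:ref mats circuit}, since $|A_d-j| = |A_d|$ so that lemma applies verbatim to every shifted set. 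Adding the multi-qubit control on the $O(\log p)$ ancilla lines inflates each of these by at most an additive $O(\log p)$ and a constant multiplicative factor, while $\textsc{prep}$ on $O(\log p)$ qubits costs $O(p)$ gates. Summing, the reflections dominate and the total is $O(p\,|A_d|\,\mathcal{C}(C^n(Z))) + O(pn) + O(p) = O(p\,|A|\,\mathcal{C}(C^n(Z)))$, using $\mathcal{C}(C^n(Z)) = \Omega(n)$ and $|A|\ge 1$; combining with the three parameters above yields the asserted $(O(1),\log(2p),0)$ block encoding.

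I expect no genuinely hard step — the argument is bookkeeping — but two points deserve care. The first is the cost of the shifted reflections: one must note explicitly that $A_d-j$ has the same cardinality as $A_d$, so that Lemma~\ref{lem:ref mats circuit} yields a uniform $O(|A|\,\mathcal{C}(C^n(Z)))$ bound, and check that the controlled variants needed inside $\textsc{sel}$ remain in the same asymptotic class (a multi-qubit control only adds $O(\log p)$ overhead per term). The second is the subnormalization: $\lambda = O(1)$ holds only because the $1/h^2$ prefactor is attributed to the state preparation rather than to the LCU coefficients, and keeping that accounting straight is the one thing that could go wrong.
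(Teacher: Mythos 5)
Your proposal is correct and follows essentially the same route as the paper's proof: count the $O(p)$ shift and reflection terms in the LCU of Eq.~\eqref{eq:LCU for BCs}, cost the shifts at $O(n)$ Toffolis each via the incrementer construction of Ref.~\cite{ConstructingLargeIncrement}, cost the reflections via Lemma~\ref{lem:ref mats circuit}, and bound $\lambda = 1 + \sum_j |r_j| \le 1 + \tfrac{2\pi^2}{3}$ using the Basel-type bound from Ref.~\cite{kivlichanBoundingCostsQuantum2017}. Your added observations (that $|A_d - j| = |A_d|$ so the reflection lemma applies uniformly, and that the multi-qubit controls add only subdominant overhead) are details the paper leaves implicit but do not change the argument.
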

\begin{proof}
    The linear combination provided via Eq. \eqref{eq:LCU for BCs} requires $2p$ shift matrices and $p$ reflection matrices. Using the construction from \cite{ConstructingLargeIncrement} gates, the shift operators can be encoded using $O(n)$ Toffoli-or-smaller gates with an additional ancilla. We assume these to be the bottleneck for implementing these operations, so we will count the complexity in terms of the total number of Toffoli gates needed. Therefore, the cost to implement these shift gates is $O(np)$ Toffoli or simpler gates.

    The reflection matrices are diagonal matrices with $\pm 1$ on the diagonal, with $-1$'s on rows who's indices correspond to states of the subspace being reflected over,
    \begin{equation}
        R_{A}\ket{j} =\begin{cases}
            \ket{j} & j \notin A\\
            -\ket{j} & j \in A
        \end{cases}
    \end{equation}
    as given in Lemma \ref{lem:ref mats circuit}, the cost to implement the quantum circuit for the reflection is $O(|A| \mathcal{C}(C^n(Z))p)$. So that the total number of Toffoli or simpler operations needed is $O(p|A|\mathcal{C}(C^n(Z))) = O(np)$. 

    The subnormalization factor is the $1-$norm of the coefficients in the LCU.
    \begin{equation}
        \lambda = \sum_{j=-a}^a |r_j| + 1 \leq \frac{2\pi^2}{3} + 1 = O(1)
    \end{equation}
    where the first inequality can be found in Lemma 6 of Ref \cite{kivlichanBoundingCostsQuantum2017}. Then, in $d$-dimensions, using $O(\log(p) + \log(d))$ ancilla qubits to form the linear combination
    \begin{equation}
        \frac{1}{d}\sum_{i\in[d]}U_{L_j}
    \end{equation}
    where $U_{L_j}$ is the block encoding of the direction $j$ Laplacian. 

    So in total, this requires $O(dp+\log(d))$ ancilla, $\alpha = O(\frac{1}{d})$, a number of Toffoli or smaller gates that scales as $O(dnp\mathcal{C}(C^n(Z)) |A|)$ and in time $O(np\mathcal{C}(C^n(Z)))$. And similarly for $\eta$ particles another multiplicative factor of $\eta$; $\alpha = O(\frac{1}{\eta d}) < \frac{\pi^2}{\eta d}$, $T = O(\eta dnp\mathcal{C}(C^n(Z)) |A|) $, but add no cost to overall simulation time as they can be performed in parallel. 
\end{proof}

\begin{thm}[Proof of circuit in Fig. \ref{fig:BE-gradV}]
\label{thm:gradV output}
    There exists a set of phases $\phi_i \in \mathbb{R}$ so that the quantum circuit in Fig. \ref{fig:BE-gradV} encodes a superposition of evaluations of the approximate Lennard-Jones potential.
\end{thm}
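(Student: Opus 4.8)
The plan is to propagate a generic input state $\ket{b}$ through the circuit of Fig.~\ref{fig:BE-gradV} register by register and read off the induced action on the $\text{sys}$ wire, with quantum signal processing supplying the claimed phases. First I would apply $D(d)$ and $\textsc{prep}_D$ to the $\text{dim}$ and $\text{coeff}$ ancillae, giving
\begin{equation*}
    \frac{1}{\sqrt{d}}\sum_{k\in[d]}\sum_{1\le|l|\le a}\sqrt{\frac{|c_l|}{\beta}}\,\ket{k}_{\text{dim}}\ket{l}_{\text{coeff}}\otimes\ket{1}_{\text{phase}}\ket{0}_{\text{anc}}\ket{b}_{\text{sys}},
\end{equation*}
where the $c_l$ are the first-derivative stencil coefficients of \eqref{eq:FD coeffs d/dx} and $\beta=\sum_l|c_l|$; I would fold the signs $\text{sign}(c_l)$ into the select step so that $\textsc{prep}_D$ prepares only the moduli. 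In the branch labelled $(k,l)$, the controlled gate $(\mathbf{R}^{i,j}_{k,l})^2$ acts on $\text{anc}+\text{sys}$ as a block encoding, with subnormalization $4d(N-1)^2$, of the partially shifted squared-distance operator of \eqref{eq:shifted pos ops}, by Corollary~\ref{cor:BE R^2} applied to the shifted coordinate $S^l_k X^i_k S^{-l}_k$.

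The substantive step is quantum signal processing. The operator $(\mathbf{R}^{i,j}_{k,l})^2$ is Hermitian and diagonal, and with the cutoff $\delta\sim 1/N$ its rescaled spectrum lies in $[\delta,1]$; Lemma~\ref{lem: convRate for invPow} then supplies an even $P\in\mathbb{R}[x]$ with $\|P\|_{[-1,1]}\le 1$ and $\|P-\widetilde V\|_{[\delta,1]}\le\epsilon$, of degree $D=O\big(\tfrac1\delta\log(1/\epsilon)\big)=O(N\log(1/\epsilon))$, approximating the rescaled Lennard-Jones potential $\widetilde V$. Invoking Lemma~\ref{lem:qsp} — and Corollary~5 of \cite{gilyenQuantumSingularValue2019} to fix $P$ independently of $Q$, the parity requirement being met because $\widetilde V$ is even in the interparticle distance — there is a phase list $\boldsymbol\Phi=(\phi_0,\dots,\phi_D)\in\mathbb{R}^{D+1}$, realised by the interleaved rotations $e^{i\phi_m Z}$ on the $\text{phase}$ qubit together with the controls linking it to the $\text{anc}$ register (which implement the signal operator $O(x)$), such that in \emph{every} $(k,l)$ branch simultaneously the $\text{sys}$ register is mapped to $P\big((\mathbf{R}^{i,j}_{k,l})^2/(4d(N-1)^2)\big)\ket{b}$, conditioned on $\text{phase}$ and $\text{anc}$ being returned to $\ket{0}$ (the terminal $X$ correcting the $\ket{1}$ initialisation of $\text{phase}$). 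The main obstacle is checking that one phase list serves all branches: this holds because the QSP iteration acts only on the one- and two-dimensional invariant subspaces fixed by the eigenvalues of the block-encoded matrix, which are the sole data it sees; I would also verify that the open-controlled gadget on $\text{anc}$ really effects the reflection needed for $O(x)$ uniformly across branches.

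Finally I would close the linear-combination-of-unitaries bookkeeping. Applying $\textsc{prep}_D^\dagger$ to $\text{coeff}$ and post-selecting it on $\ket{0}$ contracts the sum over $l$ with weights $c_l$, and writing the shifted evaluations $\widetilde V(S^l_k\bs\alpha^i S^{-l}_k,\bs\alpha^j)$ as in \eqref{eq:partial LCU for V} leaves, on the all-zero flag of $\text{coeff},\text{anc},\text{phase}$,
\begin{equation*}
    \frac{1}{\beta\sqrt{d}}\sum_{k\in[d]}\ket{k}_{\text{dim}}\otimes\Big(\sum_{1\le|l|\le a}c_l\,\widetilde V\big(S^l_k\bs\alpha^i S^{-l}_k,\bs\alpha^j\big)\Big)\ket{b}
    =\frac{1}{\beta\sqrt{d}}\sum_{k\in[d]}\ket{k}\otimes\partial_k^i\widetilde V(\bs\alpha^i,\bs\alpha^j)\ket{b},
\end{equation*}
with the $\text{dim}$ register deliberately left uncomputed so the label $\ket{k}$ persists (as in the caption). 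Absorbing the subnormalization $\beta\sqrt d\,N^{12}$ — the $N^{12}$ from the rescaling of the Lennard-Jones potential, the $\sqrt d$ from $D(d)$, tracked multiplicatively through the product of block encodings for $(\mathbf{R}^{i,j}_{k,l})^2$, the norm-preserving QSP step, and the $l$-sum weight $\beta$ — identifies the output with the $d$-dimensional gradient $\nabla^i\widetilde V$ up to the uniform error $\epsilon$ inherited from $P$, which is exactly the superposition of evaluations of the approximate Lennard-Jones potential asserted, and simultaneously confirms the $(\beta\sqrt d\,N^{12},\ceil{\log(n+1)}+2\ceil{\log(d)}+\ceil{\log(p)}+4)$ figures used in Lemma~\ref{lem:gradV cost}.
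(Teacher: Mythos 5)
Your proposal is correct and takes essentially the same route as the paper's own proof: propagate the input state through the circuit layer by layer, invoke Lemma~\ref{lem:qsp} together with Lemma~\ref{lem: convRate for invPow} to supply the QSP phases for the approximating polynomial, and contract the stencil sum with $\textsc{prep}_D^\dagger$ to obtain $\frac{1}{\beta\sqrt{d}}\sum_{k}\ket{k}\,\partial_k^i\widetilde V\ket{b}$ on the all-zero ancilla flag. Your added care about folding $\mathrm{sign}(c_l)$ into the select step (the paper writes $\sqrt{c_j}$ directly, which is ill-defined for negative coefficients) and about a single phase list serving all $(k,l)$ branches are minor refinements of, not departures from, the paper's argument.
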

\begin{proof}
    By lemma \ref{lem:qsp}, for any polynomial $P$ and $Q$ satisfying the conditions of the lemma, we are promised the existence of phases $\phi_i$ that encode the polynomial the QSP ansatz. The existence and rate of convergence of the approximating polynomial satisfying these assumptions is given by lemma \ref{lem: convRate for invPow}. Now, we verify that the quantum circuit in Fig. \ref{fig:BE-gradV} produces the desired output. The proof is obtained from direct computation of the operations. After the first layer of gates, the input quantum state is transformed to,
    \begin{align*}
        \ket{0}_{dim}\ket{0}_{coeff}\ket{1}_{phase}\ket{0}_{anc}\ket{b}_{sys} &\overset{D(d)\otimes \textsc{prep}_D\otimes e^{i\phi_0 Z}}{\longrightarrow} \frac{e^{i\phi_{0}}}{\sqrt{\beta d}}\sum_{k\in[d]}\sum_{1\leq |j| \leq a} \sqrt{c_j} \ket{k}_{dim}\ket{j}_{coeff}\ket{1}_{phase}\ket{0}_{anc}\ket{b}_{sys}.
    \end{align*}
    Then, applying the controlled block-encoding $\left({R^{ij}_{kl}}\right)^2$, we obtain 
    \begin{align*}
        \frac{e^{i\phi_{0}}}{\sqrt{\beta d}}\sum_{k\in[d]}\sum_{1\leq |j| \leq a} \sqrt{c_j}\left(\ket{k}_{dim}\ket{j}_{coeff}\ket{1}_{phase}\ket{0}_{anc}\left({R^{ij}_{kl}}\right)^2\ket{b}_{sys} + \ket{\perp}\right).
    \end{align*}
    Let us now consider the action of the rest of the circuit on single basis state,
    \begin{align*}    e^{i\phi_{0}}\left(\ket{k}_{dim}\ket{j}_{coeff}\ket{1}_{phase}\ket{0}_{anc}\left({R^{ij}_{kl}}\right)^2\ket{b}_{sys} + \ket{\perp}\right).         
    \end{align*}
    Due to the controlled application of the block encoding, the $dim$ and $coeff$ registers are unchanged and can be ignored for this part. Therefore, this portion of the circuit corresponds to applying QSP to the block encoding of the shifted difference of positions operator  
    \begin{align*}
    e^{i\phi_{0}}\left(\ket{1}_{phase}\ket{0}_{anc}\left({R^{ij}_{kl}}\right)^2\ket{b}_{sys} + \ket{\perp}\right)\overset{qsp}{\longrightarrow} e^{i\phi_{0}}\left(\ket{1}_{phase}\ket{0}_{anc}\widetilde{V}\left(\left({R^{ij}_{kl}}\right)^2\right)\ket{b}_{sys} + \ket{\perp}\right).
    \end{align*}
    Now, with the $coeff$ and $dim$ registers, we have
    \begin{align*}
        e^{i\phi_{0}}\left(\frac{1}{\sqrt{\beta d}}\sum_{k\in[d]}\sum_{1\leq |j| \leq a} \ket{k}_{dim}\ket{j}_{coeff}\ket{1}_{phase}\ket{0}_{anc}\sqrt{c_j}\widetilde{V}\left(\left({R^{ij}_{kl}}\right)^2\right)\ket{b}_{sys} + \ket{\perp}\right),
    \end{align*}
    and performing $\textsc{prep}_D^\dagger$ on the $coeff$ register and $X$ on the $phase$ register, we have 
    \begin{align*}
        e^{i\phi_{0}}\left(\frac{1}{\beta\sqrt{d}}\sum_{k\in[d]} \ket{k}_{dim}\ket{0}_{coeff}\ket{0}_{phase}\ket{0}_{anc}\sum_{1\leq |j| \leq a}c_j\widetilde{V}\left(\left({R^{ij}_{kl}}\right)^2\right)\ket{b}_{sys} + \ket{\perp}\right),
    \end{align*}
    which we can simplify as
    \begin{align*}
        &e^{i\phi_{0}}\left(\frac{1}{\beta\sqrt{d}}\sum_{k\in[d]} \ket{k}_{dim}\ket{0}_{coeff}\ket{0}_{phase}\ket{0}_{anc}\partial_k^i\widetilde{V}\left(\left({R^{ij}_{kl}}\right)^2\right)\ket{b}_{sys} + \ket{\perp}\right),\\
    \end{align*},
    so that upon measuring the $dim$ $coeff$ $phase$ and $anc$ registers in the zero state we have
    \begin{equation}
     e^{i\phi_{0}}\left(\frac{1}{\beta\sqrt{d}}\sum_{k\in[d]} \ket{k}_{dim}\partial_k^i\widetilde{V}\left(\left({R^{ij}_{kl}}\right)^2\right)\ket{b}_{sys} + \ket{\perp}\right),
    \end{equation}
    which produces the desired output, a finite difference approximation to the gradient.
\end{proof}



\end{document}